\newtheorem{theorem}{Theorem}
\newtheorem{corollary}[theorem]{Corollary}
\newtheorem{lemma}[theorem]{Lemma}
\theoremstyle{definition}
\title{Massively Parallel Approximate Distance Sketches}
\author{Michael Dinitz \\
Johns Hopkins University\\
mdinitz@jhu.edu
\and
Yasamin Nazari\\
Johns Hopkins University\\
ynazari@jhu.edu
}
\date{}
\begin{document}
\maketitle
\begin{abstract}
Data structures that allow efficient distance estimation (distance oracles, distance sketches, etc.) have been extensively studied, and are particularly well studied in centralized models and classical distributed models such as CONGEST.  We initiate their study in newer (and arguably more realistic) models of distributed computation: the Congested Clique model and the Massively Parallel Computation (MPC) model.  We provide efficient constructions in both of these models, but our core results are for MPC.  In MPC we give two main results: an algorithm that constructs stretch/space optimal distance sketches but takes a (small) polynomial number of rounds, and an algorithm that constructs distance sketches with worse stretch but that only takes polylogarithmic rounds.  %  We provide algorithms for constructing the well-known Thorup-Zwick distance oracle (or distance sketches) in these different computational models and discuss the tradeoffs between various complexity measures such as space, time, stretch, and message complexity. 
%Achieving these results, particularly in the challenging low-memory variant of MPC, requires combining techniques from more classical models such as CONGEST and PRAMs with MPC-specific primitives and algorithms.  

Along the way, we show that other useful combinatorial structures can also be computed in MPC.  In particular, one key component we use to construct distance sketches are an MPC construction of the hopsets of \cite{elkin2016}.  This result has additional applications such as the first polylogarithmic time algorithm for constant approximate single-source shortest paths for weighted graphs in the low memory MPC setting. 
\end{abstract}

\section{Introduction}

A common task when performing graph analytics is to compute distances between vertices.  This has motivated the study of shortest path algorithms in essentially every interesting model of computation.  We focus on two models which correspond to modern big-data graph analytics: Congested Clique~\cite{lotker2005} and Massively Parallel Computation (MPC)~\cite{beame2013}. The MPC model in particular has recently received significant attention, as it captures many modern data analytics frameworks such as MapReduce, Hadoop, and Spark. So since these are important models of distributed storage and computation, and computing distances in graphs is an important primitive, we have an obvious question: in MPC or Congested Clique, can we compute distances between nodes sufficiently quickly to support important graph analytics?

While one side effect of our techniques is indeed a state of the art algorithm for shortest paths in MPC, the focus of this paper is on getting around the limitations of these models by allowing preprocessing of the (distributed) graph.  We will first spend some time building a data structure known as \emph{approximate distance sketches} (or an approximate distance oracle), which will then let us (approximately) answer any distance query using only $0$, $1$, or $2$ rounds of network communication (depending on the precise model).  Thus after this preprocessing, anyone who is interested in analyzing the massive graph has access to approximate distances essentially for free, making this a powerful tool for distributed graph analytics.  Moreover, rather than inventing a brand new structure, we show that we can repurpose centralized data structures (in particular the Thorup-Zwick oracle~\cite{thorup2005}) by computing them efficiently in these new distributed models.  And since our algorithms are derived from centralized data structures we even allow for extremely efficient computation in addition to efficient communication.  

So our focus is on how to compute these data structures efficiently, since once they are computed distance estimates become fast and easy.  We show that in both the Congested Clique and the MPC models, we can compute oracles/sketches which essentially match the best centralized bounds in time that is only a small polynomial.  In MPC, we can go even further and compute slightly suboptimal sketches in time that is only polylogarithmic.  So while computing the data structure is still somewhat expensive, it is far more efficient than trivial approaches, and once it is computed, the analyst can receive approximate distances extremely quickly, allowing for low amortized cost or just the ability to do exploratory analysis without constantly waiting for expensive distance queries to complete.

\subparagraph*{Distance Oracles and Sketches.}
Even in many centralized applications, the time it takes to compute exact distances in graphs is undesriable, and similarly the memory that it would take to store all $n \choose 2$ distances is also undesirable.  This motivated Thorup and Zwick~\cite{thorup2005} to define the notion of an \emph{approximate distance oracle}: a small data structure which can quickly report an approximation of the true distance for any pair of vertices.  In other words, by spending some time up front to compute this data structure (known as the \emph{preprocessing} step) and then storing it (which can be done since the structure is small), any algorithm used in the future can quickly obtain provably accurate distance estimates.  %Two obvious alternatives to distance oracles are spanners or running APSP, both of which have drawbacks that we discuss in Appendix \ref{app:alternatives}.

More formally, an approximate distance oracle is said to have \emph{stretch $t$} if, when queried on $u,v \in V$, it returns a value $d'(u,v)$ such that $d(u,v) \leq d'(u,v) \leq t \cdot d(u,v)$ for all $u,v \in V$, where $d(u,v)$ denotes the shortest-path distance between $u$ and $v$.   The important parameters of an approximate distance oracle are the size of the oracle, the stretch, the query time, and the preprocessing time.  For any constant $k$, Thorup and Zwick's construction (in the sequential setting) has expected size $O(k n^{1+1/k})$, stretch $(2k-1)$, query time $O(k)$, and preprocessing time $O(k m n^{1/k})$, where $n=|V|$ and $m=|E|$.  

Since~\cite{thorup2005}, there has been a large amount of followup work on improving the achievable tradeoffs, such as achieving query time of $O(1)$ with size $O(n^{1+1/k})$~\cite{wulff2013,chechik2014} or giving more refined bounds~\cite{patrascu2014,patrascu2012}.  However, with the notable exception of a very interesting construction due to Mendel and Naor~\cite{mendel2006}, the vast majority of followup work has essentially been refinements and improvements to the approach pioneered by Thorup and Zwick.  Thus understanding the Thorup-Zwick distance oracle is an important first step to understanding the limits and possibilities of distance oracles, and showing how to construct the Thorup-Zwick oracle in different computational models gives almost state-of-the-art bounds while also developing the basic tools and framework needed to design more sophisticated structures.

Importantly, the Thorup-Zwick distance oracle has the additional property that the data structure can be ``broken up" into $n$ pieces, each of size $O(k n^{1/k}\log n)$, so that the estimate $d'(u,v)$ can be computed just from the piece for $u$ and the piece for $v$ (the rest of the structure is unnecessary).  These are called \emph{distance sketches} or \emph{distance labelings}, and motivated Das Sarma et al.~\cite{sarma2015} to initiate the study of Thorup-Zwick distance sketches in distributed networks, and in particular in the CONGEST model of distributed computing~\cite{peleg2000}.
\paragraph{Models.} As mentioned, in modern graph analytics we usually abstract away the communication graph by assuming that the datacenter storing the graph is sufficiently well-provisioned.  This motivated two different but related models of distributed computation: Congested Clique \cite{peleg2000} and MPC \cite{beame2013}.  %Both of them allow any machine to directly communicate with any other machine, but there are important differences: in Congested Clique the number of servers is equal to the number of vertices in the graph, while MPC allows more flexibility; in Congested Clique every node can send an $O(\log n)$-bit message to every other node in every round, while in MPC every machine has a total I/O bound for each round; and (perhaps most importantly) the memory at each machine is unbounded in Congested Clique but bounded (and usually small) in MPC.
In the Congested Clique model an input graph of $G = (V, E)$ is given, and initially each node $v \in V$ only knows its incident edges. However, the underlying communication graph is an undirected clique, and in each round every node can send a message of $O(\log n)$ bits to any other node.  This model was introduced by \cite{peleg2000}, and has been studied extensively in recent years. 
%Here we assume that one specific node is \textit{the coordinator}, at which the distance oracle will be stored. 
%Note that if the communication graph were $G$ rather than the clique, then this would be exactly the CONGEST model~\cite{peleg2000}, and thus every algorithm in the CONGEST model can be run without change in the Congested Clique. 
The second model that we consider is the \textit{Massively Parallel Computation}, or MPC model. This model was introduced by \cite{beame2013} to model MapReduce and other realistic distributed settings, and is more general than earlier abstractions of MapReduce proposed by \cite{karloff2010} and \cite{goodrich2011}. In this model there is an input of size $N$ which is arbitrarily distributed over $N/S$ machines, each of which has $S = N^{\epsilon}$ memory for some $0 < \epsilon < 1$.  In the standard MPC model, every machine can communicate with every other machine in the network, but each machine in each round can have total I/O of at most $S$. Specifically, for graph problems the total memory $N$ is $O(|E|)$ words. The low memory setting is the more challenging (but arguably more realistic) setting in which each machine has has $O(n^\gamma), \gamma <1$ memory, where $n= |V|$, which we denote by MPC($n^\gamma$). We also make the common assumption (e.g. \cite{roughgarden2018, beame2013}) that machines have unique IDs that other machines can use for direct communication.

\subsection{Our Results} 
In this paper we initiate the study of distance oracles and sketches in two popular computational models for ``big data'': Congested Clique and MPC.  In addition, we show that our techniques can be used to give the first sublinear  algorithm (and in fact polylogarithmic) for approximate single-source shortest paths for weighted graphs in (low memory) MPC, and moreover can be applied in straightforward ways to non-distributed models such as the streaming setting.  %We give tradeoffs between different performance measures such as space, time, stretch, and communication, but we show that in all of them we can build the Thorup-Zwick distance oracle (or a close modification) quite efficiently.  In particular, unlike the CONGEST model we will not need any dependence on the diameter in the preprocessing time, and unlike traditional centralized algorithms we will take sublinear of in some cases polylogarithmic time. 
We discuss our results for each model in turn.  At a high level, Congested Clique turns out to be relatively easy: we can essentially just combine the known CONGEST algorithm~\cite{sarma2015} with a slightly modified hopset construction.  For MPC, the natural approach is to simulate the Congested Clique algorithm, since it is known~\cite{behnezhad2018} that under certain density and memory conditions, Congested Clique algorithms can be simulated in MPC.  However, this simulation requires at least $\Omega(n)$ memory per machine.  Our task becomes much more challenging if we allow $o(n)$ memory per machine, which we refer to as the \emph{low memory} setting.  Designing algorithms for this setting forms the bulk of this paper. \paragraph{Congested Clique.} Since there is no memory restriction for Congested Clique, we assume that some node in the network is the \emph{coordinator} at which the entire distance oracle will be stored (i.e., the machine with which users will interact with the distributed system).  So at query time, the user can just query the coordinator locally (avoiding all network delay) rather than initiating an expensive distributed computation. The precise statements of our results are given in Appendix~\ref{app:congest_clique} and are somewhat technical, so for simplicity we state one particularly interesting corollary obtained by some specific parameter settings:  
\begin{theorem}
Given a weighted graph $G = (V, E, w)$, for all $k \geq 2$ and constant $\epsilon > 0$, we can construct a distance oracle with stretch $(1+\epsilon)(2k-1)$, (local) query time $O(k)$, and space $O(kn^{1+1/k} \log n)$ w.h.p.~in the Congested Clique model.  If $k = O(1)$, then the number of rounds for preprocessing is\footnote{The notation $\tilde{O}(f(n))$ stands for $O(f(n) \cdot \text{polylog}(f(n))$, e.g. it is suppressing polyloglog($n$) terms in $2^{\tilde{O}(\log n)}$.} $\tilde O(n^{1/k})$, and if $k = \Omega(\log n)$ then the number of rounds is $\tilde{O}(\log(n))$.      
\end{theorem}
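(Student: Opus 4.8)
The plan is to build the Thorup--Zwick distance oracle~\cite{thorup2005} directly in the Congested Clique, using a hopset of~\cite{elkin2016} so that the shortest-path computations it requires become cheap on weighted graphs, and then to ship the resulting sketches to the coordinator. Recall the Thorup--Zwick construction: sample a nested hierarchy $V = A_0 \supseteq A_1 \supseteq \cdots \supseteq A_{k-1} \supseteq A_k = \emptyset$ by placing each vertex of $A_{i-1}$ into $A_i$ independently with probability $n^{-1/k}$; for every $v \in V$ and level $i$, let $p_i(v)$ be a closest vertex of $A_i$ to $v$, let $\delta_i(v) = d(v, p_i(v))$, and let the bunch be $B(v) = \bigcup_{i=0}^{k-1}\{\, w \in A_i \setminus A_{i+1} : d(v,w) < \delta_{i+1}(v)\,\}$. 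The sketch of $v$ stores the pivots $p_i(v)$ together with $d(v,w)$ for every $w \in B(v)$; w.h.p.\ this has $\tilde O(k n^{1/k})$ entries, the query rule of~\cite{thorup2005} takes $O(k)$ time on two such sketches, and the concatenation of all sketches --- the oracle --- has size $O(k n^{1+1/k}\log n)$. So once every vertex knows its pivots and bunch, the space and query-time claims are immediate, and it only remains to compute this information in few rounds and gather it at the coordinator.

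The computational core is the following primitive: for each level $i$, have every vertex $v$ learn the $\tilde O(n^{1/k})$ vertices of $A_i$ relevant to its pivot and bunch together with their distances to $v$. This is a bounded-multiplicity source detection from the sources $A_i$, which~\cite{sarma2015} solves in CONGEST by a pipelined Bellman--Ford-type search in which each vertex retains only its currently-nearest sources; the same search works, and is faster, over the Congested Clique using the standard routing primitive (one node absorbing $O(n)$ messages per round). The obstacle on weighted graphs is that an exact search needs as many rounds as the hop-diameter. To remove this, I would first construct an Elkin--Neiman $(\beta,\epsilon)$-hopset $H$ on $G$: a set of $O(n^{1+1/\kappa}\log n)$ weighted edges with
\[ d_G(u,v) \;\le\; d^{(\beta)}_{G\cup H}(u,v) \;\le\; (1+\epsilon)\, d_G(u,v) \qquad \text{for all } u,v\in V, \]
where $d^{(\beta)}$ is the least weight of a path using at most $\beta$ edges; this is the ``slightly modified'' hopset construction, the only changes from~\cite{elkin2016} being to realize it in the Congested Clique and to pick $\kappa$ appropriately for each regime. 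Replacing $G$ by $G \cup H$ and capping every search at $\beta$ hops (and handling level $i$ via a virtual zero-weight source adjacent to all of $A_i$), the source detection returns, for every $v$, distances within a $(1+\epsilon)$ factor of the true distances to its candidate pivots and bunch members. Since the Thorup--Zwick analysis is insensitive to a uniform multiplicative error --- and, after an inconsequential adjustment of the sampling probability, the bunches defined by the perturbed distances still have size $\tilde O(n^{1/k})$ w.h.p.\ --- feeding these estimates into the construction yields a valid sketch of stretch $(1+\epsilon)(2k-1)$.

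For the round count, preprocessing consists of (i) building $H$; (ii) running $O(\beta)$ hops of pipelined bounded-multiplicity source detection over $G\cup H$, which by the pipelining argument of~\cite{sarma2015} together with the ampler bandwidth of the clique costs $\tilde O(\beta + n^{1/k})$ rounds in total; and (iii) gathering the $O(k n^{1+1/k}\log n)$-word oracle at the coordinator in $\tilde O(k n^{1/k})$ rounds, since one node absorbs $O(n)$ words per round. For $k = O(1)$ one can instantiate~\cite{elkin2016} with constant hopbound $\beta = O(1)$, so every term is $\tilde O(n^{1/k})$; for $k = \Omega(\log n)$ we have $n^{1/k} = O(1)$ and can choose $\kappa$ so that $\beta = \tilde O(\log n)$ and $H$ itself is built in $\tilde O(\log n)$ rounds, so every term is $\tilde O(\log n)$. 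The steps I expect to be delicate are (a) realizing the Elkin--Neiman construction in the Congested Clique within these round budgets, and (b) the quantitative claim that the perturbed bunches retain size $\tilde O(n^{1/k})$ and that the perturbed pivots still satisfy the Thorup--Zwick query guarantee with only a $(1+\epsilon)$ loss; the congestion bookkeeping in (ii)--(iii) is then routine.
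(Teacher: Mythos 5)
Your overall architecture matches the paper's: build an Elkin--Neiman $(\beta,\epsilon)$-hopset in the Congested Clique, run the Das Sarma et al.\ pipelined source-detection on $G\cup H$ truncated at $\beta$ hops to compute pivots and bunches, and gather the sketches at the coordinator. The $k=O(1)$ regime goes through essentially as you describe (though note that the \emph{distributed} Congested Clique version of the Elkin--Neiman hopset has $\beta = O\bigl(\tfrac{\log n(\log\kappa+1/\rho)}{\epsilon\rho}\bigr)^{\log\kappa+1/\rho}$, so $\beta$ is polylogarithmic rather than $O(1)$; this is harmless since the polylog is absorbed into $\tilde O(n^{1/k})$).

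The genuine gap is in the $k=\Omega(\log n)$ regime. You claim one can ``choose $\kappa$ so that $\beta=\tilde O(\log n)$ and $H$ itself is built in $\tilde O(\log n)$ rounds,'' but no such parameter setting exists for the Elkin--Neiman construction: its Congested Clique round complexity is $O(\tfrac{n^\rho}{\rho}\log^3 n\cdot\beta)$ with $\beta\ge(\log n/\epsilon)^{1/\rho}$, so making $n^\rho$ polylogarithmic forces $\rho=O(\log\log n/\log n)$ and hence $\beta=(\log n)^{\Omega(\log n/\log\log n)}=n^{\Omega(1)}$; the best simultaneous tradeoff is $\rho=\Theta(\sqrt{\log\log n/\log n})$, giving $2^{\tilde O(\sqrt{\log n})}$ rounds, which is superpolylogarithmic. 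The paper closes exactly this gap by switching, for large $k$, to the hopset of Censor-Hillel et al.\ \cite{censor2019} (size $\tilde O(n^{3/2})$, hopbound $O(\log^2 n/\epsilon)$, built in $O(\log^2 n/\epsilon)$ Congested Clique rounds), which your proposal never invokes. A secondary, non-fatal issue: your claimed additive cost $\tilde O(\beta+n^{1/k})$ for the pipelined source detection is stronger than what \cite{sarma2015} gives; their scheduling argument yields the multiplicative bound $\tilde O(\beta\cdot kn^{1/k})$, and the paper explicitly does \emph{not} exploit the clique's extra bandwidth in this phase. The multiplicative bound still suffices for the stated round counts, but the additive claim would need its own proof.
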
 

Note that after a limited amount of preprocessing, distance queries can be computed without any network access whatsoever.  Moreover, the computational query time is also extremely small, so these queries are extraordinarily efficient in the context of distributed algorithms. As an interesting extension, we show that the message complexity of computing this distance oracle can be reduced by adding an additional preprocessing step of computing a graph spanner.
\paragraph{MPC.} In Section \ref{sec:MPC} we discuss the MPC model, which is the heart of this paper. Since in the MPC model servers have small memory, it is impossible to fit an entire distance oracle at a single server as we did in the Congested Clique.  So we instead focus on distance sketches. After the preprocessing algorithm, for each node $v \in V$, a distance sketch of size $O(kn^{1/k}\log n)$ will be stored and mapped to a machine with key $v$ (this assumes that the memory at each server is at least $\Omega(k n^{1/k} \log n)$, which is reasonable in most settings). This means that after the preprocessing to construct these sketches, only two rounds of communication are needed for for approximating distance queries between a pair of nodes $u$ and $v$: one for sending requests for the sketches of $u$ and $v$ and one for receiving them. We give the following result:

\begin{theorem} \label{intro_thm:MPC_sketches}
Given a weighted graph $G=(V, E, w)$ with polynomial weights\footnote{This assumption can be relaxed using reduction techniques (e.g.~from \cite{elkin2016}) in exchange for extra polylogarithmic factors in the hopbound and construction time.} and parameters $\rho \leq \gamma \leq 1, 1/k \leq \rho, 0 < \epsilon<1$, we can construct Thorup-Zwick distance sketches with stretch ${(2k-1)(1+\epsilon)}$ and size $O(kn^{1/k}\log n)$ w.h.p.~in $\tilde{O}( \frac{1}{\gamma} \cdot n^{1/k} \cdot \beta)$ rounds of MPC$(n^\gamma)$, where ${\beta= \min(O(\frac{\log n}{\epsilon})^{\log(k)+k}, 2^{\tilde{O}(\sqrt{\log n})}})$. In particular, if $k=O(1)$ and $\epsilon$ is a constant, then w.h.p.~we require $\tilde{O}(n^{1/k})$ rounds, and if $k=\Theta(\log n)$ then w.h.p.~we require $2^{\tilde{O}(\sqrt{\log n})}$ rounds.
\end{theorem}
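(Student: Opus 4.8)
The plan is to compute the Thorup--Zwick distance oracle \cite{thorup2005} in the low-memory MPC model, following the template of the CONGEST algorithm of Das Sarma et al.~\cite{sarma2015} but replacing its exact bounded-hop explorations by hopset-accelerated $(1+\epsilon)$-approximate ones. Recall that the Thorup--Zwick structure is determined by a random hierarchy $V = A_0 \supseteq A_1 \supseteq \cdots \supseteq A_k = \emptyset$ (each vertex of $A_i$ kept in $A_{i+1}$ independently with probability $n^{-1/k}$), by the pivots $p_i(v) \in A_i$ realizing $\delta_i(v) := d(v, A_i)$, and by the bunches $B(v) = \bigcup_i \{\, w \in A_i : d(v,w) < \delta_{i+1}(v)\,\}$, which have size $O(kn^{1/k}\log n)$ w.h.p.; the sketch of $v$ is the list of pairs $(w, d(v,w))$ over $w \in B(v)$ together with the $(p_i(v), \delta_i(v))$. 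Sampling the $A_i$'s needs no communication. We then invoke the MPC construction of an Elkin--Neiman $(\beta, \epsilon)$-hopset $H$ of $G$ \cite{elkin2016} (the component developed earlier in the paper), choosing the hopset's density exponent $\rho$ so that it is compatible both with the stretch-$(2k-1)$ oracle and with $n^\gamma$ memory (this is where $1/k \le \rho \le \gamma$ is used); its construction cost is dominated by the steps below. From now on every exploration is run on $G \cup H$, in which each true distance is $(1+\epsilon)$-approximated by some path of at most $\beta$ hops.

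To obtain the pivots and (approximate) $\delta_i$'s, for each level $i$ attach a virtual super-source by weight-$0$ edges to all of $A_i$ and run $\beta$ rounds of Bellman--Ford on $G \cup H$; all $k$ levels run simultaneously (one distance label per level per vertex). A single Bellman--Ford round is the aggregation ``each vertex replaces its label by the minimum over incident edges of (neighbor's label $+$ weight)''; in MPC$(n^\gamma)$ this is done by emitting one record per edge of $G \cup H$, sorting, and segment-aggregating, which costs $O(1/\gamma)$ rounds (and transparently handles vertices whose incident edges do not fit on one machine). So this step takes $\tilde O(\beta/\gamma)$ rounds and produces $(1+\epsilon)$-approximations $\tilde\delta_i(v)$ and corresponding pivots $\tilde p_i(v)$.

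The remaining and most expensive task is to compute the bunches: for each level $i$, every vertex $v$ must learn (approximately) all $w \in A_i$ with $d(v,w) < \delta_{i+1}(v)$, of which there are only $O(n^{1/k}\log n)$ w.h.p. We do this by a hop-bounded ``nearest sources'' detection: iteratively, for $j = 1, \dots, O(n^{1/k}\log n)$, run a fresh $\beta$-round Bellman--Ford--Moore exploration from $A_i$ on $G \cup H$ in which each vertex freezes and records its $j$-th approximately-nearest $A_i$-source and treats the $j-1$ already-recorded ones as seen, stopping the enumeration at $v$ once the current estimate exceeds $\tilde\delta_{i+1}(v)$. Each iteration is $\tilde O(\beta/\gamma)$ MPC rounds, the iterations for the $k$ levels run in parallel, and the per-vertex state stays $\tilde O(kn^{1/k})$, so the whole step costs $\tilde O(\frac{1}{\gamma} n^{1/k} \beta)$ rounds, matching the claimed bound. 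Finally, for each $v$ we gather the pairs produced above and route them, by sorting on the key $v$, to the machine owning $v$ ($O(1/\gamma)$ rounds); this is the sketch, of size $O(kn^{1/k}\log n)$ w.h.p. by the Thorup--Zwick bunch-size estimate, and by the standard Thorup--Zwick query analysis carried out with $(1+\epsilon)$-approximate distances (as in \cite{sarma2015}) it answers queries with stretch $(2k-1)(1+\epsilon)$.

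The heart of the argument is the bunch step. A $\beta$-hop exploration only certifies short hop-paths with respect to $G$-distances, whereas a bunch is an additively-pruned ball, and one must check that the hopset's \emph{multiplicative} $(1+\epsilon)$ slack does not mesh cleanly with this pruning: along a hopset path witnessing $d(v,w) < \delta_{i+1}(v)$ an intermediate vertex may carry an inflated estimate and prune $w$, so a single pruned hop-bounded exploration need not reach all of $B(v)$. This is exactly why the relevant $O(n^{1/k}\log n)$ sources are recovered one per iteration rather than all at once, and the technical work is to prove that this enumeration (i) still terminates after $O(n^{1/k}\log n)$ iterations w.h.p.\ at every vertex, (ii) yields a sketch whose size is still $O(kn^{1/k}\log n)$, and (iii) supports the stretch bound $(2k-1)(1+\epsilon)$ when queries use the approximate distances and approximate bunches --- all while keeping each Bellman--Ford/aggregation round at cost $O(1/\gamma)$ within the $n^\gamma$-memory budget. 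The other steps are either communication-free (sampling) or off-the-shelf MPC sorting/aggregation layered on the cited hopset construction.
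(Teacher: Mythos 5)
Your overall architecture matches the paper's: build the Elkin--Neiman hopset in MPC$(n^\gamma)$ with $\kappa=k$ and $\rho$ between $1/k$ and $\gamma$, then run the Das Sarma et al.\ sketch construction on $G\cup H$ with every Bellman--Ford truncated at $\beta$ hops and each round simulated in $O(1/\gamma)$ MPC rounds via sort/aggregate primitives. The pivot computation and the final round count are also as in the paper. The deviation, and the gap, is in how you schedule the bunch computation.

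The paper does not enumerate ``$j$-th nearest sources'' one per phase. It runs a separate (conceptual) $\beta$-restricted Bellman--Ford from \emph{each} source $w\in A_i\setminus A_{i+1}$, pipelined round-robin, where a vertex $u$ stores and relays messages for source $w$ if and only if $w$ lands in $u$'s own bunch (i.e., the received estimate beats $u$'s distance to $A_{i+1}$). Correctness of this pruning rests on the Thorup--Zwick fact that bunches are closed along shortest paths: if $w\in B_i(v)$ and $u$ lies on a shortest $v$--$w$ path, then $d(u,w)<\delta_{i+1}(v)-d(u,v)\le\delta_{i+1}(u)$, so $u$ also keeps $w$ and the relay never breaks; the per-vertex participation is then $|B(u)|=O(kn^{1/k}\log n)$, which gives both the memory bound and the round bound. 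Your scheme instead has each vertex freeze one source per phase and ``treat the $j-1$ already-recorded ones as seen.'' This breaks relaying: if $w$ is $v$'s $j$-th nearest source but only $u$'s $j'$-th nearest for some $j'<j$, then $u$ stops propagating $w$ after phase $j'$, while in phase $j'$ vertex $v$ is busy recording a different (closer) source and discards the message about $w$; so $v$ never learns $d(v,w)$ even though $w\in B(v)$. The only repair is to have every vertex cache and keep relaying all sources in its own bunch regardless of phase --- at which point you have reconstructed the paper's scheme and the ``one new source per iteration'' framing (and your items (i)--(iii), which you leave unproven) is doing no work. I would replace the enumeration by the bunch-membership pruning rule and cite the path-closedness lemma; your observation that the hopset's multiplicative error must be checked against this additive pruning is a fair point that applies equally to the paper's version (the bunches there are defined with respect to $\beta$-hop distances in $G\cup H$, which changes the sketch by at most the $(1+\epsilon)$ factor absorbed into the stretch).
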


In the above theorem the distance sketches have the same guarantees as the centralized Thorup-Zwick distance oracles. However, in MPC a polynomial round complexity, while possibly of theoretical interest, is generally considered not practical.  So we give a different (but related) algorithm which achieves polylogarithmic round complexity, at the price of larger stretch.

\begin{theorem}\label{thm:polylog_distancesketches}
  Consider a graph $G=(V,E)$ where $m=\Omega(kn^{1+1/k} \log n)$, for any $k \geq 2$. Then there is an algorithm in MPC($n^{\gamma}$) (with $0 < \gamma < 1$) that constructs Thorup-Zwick distance sketches with stretch $O(k^2)$ and size $O(kn^{1/k}\log n)$ and with high probability completes in ${O(\frac{k}{\gamma} \cdot (\frac{ \log n \cdot \log k}{\epsilon} )^{\log k +k-1})} $ rounds.
\end{theorem}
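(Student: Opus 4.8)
The plan is to run the Thorup-Zwick construction~\cite{thorup2005} on a sparse $(2k-1)$-spanner of $G$, using our MPC construction of the $(1+\epsilon,\beta)$-hopsets of~\cite{elkin2016} to carry out every bounded-hop distance computation the construction needs, and reorganizing the bunch computation so that it uses only $O(k)$ such computations. Running the oracle on a spanner $H$ with $d_G(u,v)\le d_H(u,v)\le (2k-1)\,d_G(u,v)$ gives a Thorup-Zwick estimate $\hat d$ with $d_G(u,v)\le d_H(u,v)\le \hat d(u,v)\le (2k-1)\,d_H(u,v)\le (2k-1)^2\,d_G(u,v)$; replacing exact $H$-distances by the hopset's $(1+\epsilon)$-approximate bounded-hop distances only multiplies this by $(1+\epsilon)$, for total stretch $(2k-1)^2(1+\epsilon)=O(k^2)$. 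The bunches are still (approximate) Thorup-Zwick bunches for the $k$-level hierarchy, so each sketch has the usual $O(kn^{1/k}\log n)$ size, and the hypothesis $m=\Omega(kn^{1+1/k}\log n)$ ensures the spanner, the hopset, and all output sketches fit within the $\Theta(m)$ words of MPC memory.

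Concretely: \textbf{(1)} compute a $(2k-1)$-spanner $H$ with $O(kn^{1+1/k}\log n)$ edges w.h.p.\ by simulating a standard construction (e.g.\ Baswana-Sen) in MPC, in $\tilde O(k/\gamma)$ rounds; \textbf{(2)} run our MPC hopset algorithm on $H$ to get a $(1+\epsilon,\beta)$-hopset $H'$, with $\beta=(\tfrac{\log n\log k}{\epsilon})^{\log k+k-1}$ the hopbound for the relevant parameter, in $\tilde O(\beta/\gamma)$ rounds; note $H\cup H'$ has $\tilde O(kn^{1+1/k})$ edges and for every pair the lightest path using at most $\beta$ hops in $H\cup H'$ has weight at most $(1+\epsilon)\,d_H$; \textbf{(3)} sample the hierarchy $V=A_0\supseteq\cdots\supseteq A_k=\emptyset$; \textbf{(4)} for each level $i$, a single $\beta$-round Bellman-Ford on $H\cup H'$ from a virtual source attached to $A_i$ yields an approximate pivot $\hat p_i(v)$ and pivot distance $\hat\delta_i(v)\in[d_H(v,A_i),(1+\epsilon)\,d_H(v,A_i)]$ for every $v$; since only one value per vertex is kept, this has $O(1)$ congestion per edge and costs $\tilde O(\beta/\gamma)$ rounds; \textbf{(5)} compute the approximate bunches $\hat B_i(v)$ (see below); \textbf{(6)} assemble each sketch by sorting and route it to the machine keyed by $v$ in $O(1/\gamma)$ rounds.

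Step \textbf{(5)} is the crux, and the reason the stretch is $O(k^2)$ rather than the $(2k-1)(1+\epsilon)$ of Theorem~\ref{intro_thm:MPC_sketches}: the direct approach of growing, simultaneously over all $i$ and all centers $c\in A_i$, the truncated explorations $\{v:d_H(v,c)<\hat\delta_{i+1}(v)\}$ has per-edge congestion proportional to the bunch size $\tilde O(n^{1/k})$ and hence costs $\tilde\Omega(n^{1/k}\cdot\beta)$ rounds --- exactly the bottleneck avoided here. Instead I would process the $k$ levels one at a time and propagate bunch membership outward along the \emph{vertex-disjoint} clustering defined by the pivots $\hat p_i(\cdot)$ --- the centers that can lie in $v$'s bunch at level $i$ are exactly those of the clusters reachable within radius $\hat\delta_{i+1}(v)$ of $v$'s cluster --- so that each level is one $\beta$-round bounded-hop computation and the total is $O(k)$ of them. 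One must then check that even though far bunch members can be missed (the exploration is hop-limited and uses approximate $\hat\delta$ values), enough of each bunch survives that the Thorup-Zwick query still returns a finite, $O(k^2)$-stretch estimate. The stretch bound itself re-runs the Thorup-Zwick query analysis~\cite{thorup2005,sarma2015} on $H\cup H'$, carrying the $(1+\epsilon)$ hopset factor through each iteration and multiplying by the spanner factor; the only new (but routine) points are that approximate pivots still enforce the ``progress'' property that bounds the number of query iterations, and that inflating the bunch-membership test by $(1+\epsilon)$ leaves $|\hat B_i(v)|=O(n^{1/k}\log n)$ w.h.p., which follows from the same Chernoff bound over the hierarchy sampling as in~\cite{thorup2005}.

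Summing the costs, the dominant term is the $O(k)$ bounded-hop computations of steps (4)--(5), each $\tilde O(\beta/\gamma)$ rounds, giving $O\!\big(\tfrac{k}{\gamma}\,(\tfrac{\log n\log k}{\epsilon})^{\log k+k-1}\big)$ rounds with the spanner, hopset, sampling and routing all absorbed. The main obstacle, as indicated, is step (5): cutting the bunch computation from $\Omega(n^{1/k})$ down to $O(k)$ bounded-hop phases while (a) never exceeding the $O(n^\gamma)$ per-machine I/O budget and (b) not thinning the bunches so much that the query guarantee fails; once this is in place the rest is a black-box use of the hopset/spanner primitives together with a bookkeeping-level adaptation of the centralized Thorup-Zwick analysis.
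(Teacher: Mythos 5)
Your high-level pipeline (spanner, then hopset, then bounded-hop Thorup--Zwick, with stretch $(2k-1)\cdot(2k-1)\cdot(1+\epsilon)=O(k^2)$) matches the paper's, and your stretch and size accounting is fine. But the two places where the polylogarithmic round count actually has to be earned are exactly the two places your argument is missing or unjustified, and the paper closes both with a single mechanism that your parameter choice rules out. The mechanism is to build a spanner that is \emph{sparser than the memory budget}: the paper uses a $(4k-1)$-spanner of size $O(kn^{1+1/(2k)})$, so that the hypothesis $m=\Omega(kn^{1+1/k}\log n)$ leaves a polynomial factor of machines and total memory unused relative to the new input. This surplus (formalized as the MPC$(n^\gamma,S')$ model) is then spent twice. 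First, it is used to simulate the processor-heavy PRAM hopset algorithm of \cite{elkin2016}, which needs $\tilde O(mn^{\rho})$ processors (Theorem~\ref{thm:hopset-PRAM}, Corollary~\ref{cor:hopsets_extra}); without it, the plain MPC hopset construction (Theorem~\ref{thm:hopsets_MPC}) costs $\tilde O(n^{\rho}\beta/\gamma)$ rounds because $\tilde O(n^{\rho})$ interconnection explorations overlap at each vertex, so the $\tilde O(\beta/\gamma)$ bound you assert in step~(2) is unsupported and is in fact polynomial. Second, it is used to run all $\tilde O(n^{1/k})$ source explorations of the sketch construction simultaneously, with each vertex holding a length-$O(n^{1/k}\log n)$ vector of estimates and the aggregation primitives generalized to vectors (Lemma~\ref{lem:mod_broadcast}, Theorem~\ref{thm:extra_distance_sketches}). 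Your $(2k-1)$-spanner with $O(kn^{1+1/k}\log n)$ edges is as large as $m$ itself, so it buys no surplus and neither fix is available.

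Your proposed substitute for the second fix in step~(5) --- propagating bunch membership ``along the vertex-disjoint clustering defined by the pivots'' in $O(k)$ bounded-hop phases --- does not work as stated, and you flag it yourself as the main open obstacle. The relevant objects are not vertex-disjoint: the Thorup--Zwick clusters $C(w)=\{v: w\in B(v)\}$ overlap so that each vertex lies in $\tilde\Theta(n^{1/k})$ of them, which is precisely why \cite{sarma2015} need $\tilde O(n^{1/k})$ round-robin phases. A vertex must ultimately receive $\tilde\Theta(n^{1/k})$ distinct distance values, so any scheme that moves $O(1)$ words across each edge per Bellman--Ford step cannot finish in $O(k\beta)$ rounds; the paper's resolution is to widen the messages to vectors, which is exactly what the memory surplus pays for. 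So the gap is not bookkeeping: the central idea of the theorem (sparsify in order to buy bandwidth, then spend that bandwidth on parallel explorations and on the PRAM hopset simulation) is absent from the proposal.
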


As a side effect of our techniques (which we discuss more in Section~\ref{sec:techniques}), we immediately get an algorithm for computing approximate single-source shortest paths (SSSP) in the MPC model, which is the problem of finding the (approximate) distances from a source node to all other nodes.  Unlike Congested Clique, there do not seem to be any known nontrivial results for this problem in MPC. We first give an algorithm which computes a $(1+\epsilon)$-approximation in $n^{o(1)}$ time. Then we show that  we can compute an $O(1)$-approximation in only polylogarithmic time, if we make an additional assumption about the density of the input graph. We will prove the following theorem in Section \ref{sec:SSSP}:

\begin{theorem} \label{intro_thm:MPC_SSSP}
Given a weighted undirected graph $G=(V, E, w)$ with polynomial weights, a source node $s \in V$, and $0 < \gamma \leq 1, 0 < \epsilon<1$ we can compute $(1+\epsilon)$-approximate SSSP w.h.p.~in ${O(\frac{1}{\gamma})\cdot 2^{\tilde{O}(\sqrt{\log n})}}$ rounds of MPC with $\Theta(n^\gamma)$ memory per machine.
 Moreover, if $|E| \geq \Omega(n^{1+1/k}\log(n))$, we can compute $4k(1+\epsilon)$-approximate SSSP~in ${O(\frac{1}{\gamma} \cdot (\frac{ \log n \cdot \log k}{\epsilon} )^{\log k+k-1})} $ rounds of MPC($n^\gamma$), where $1/k < \gamma \leq 1, k \geq 2$. In particular, for $k=O(1)$ the algorithm runs in ${O(\frac{1}{\gamma} \cdot (\frac{ \log n}{\epsilon} )^{O(1)})}$ rounds.
\end{theorem}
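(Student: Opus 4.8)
The plan is to reduce approximate single-source shortest paths to two ingredients that are available in MPC: a hopset and a hop-bounded Bellman--Ford exploration. Recall that a $(\beta,\epsilon)$-hopset $H$ for $G$ is a weighted edge set such that for every pair $u,v$ the minimum weight of a $u$--$v$ walk in $G\cup H$ using at most $\beta$ edges lies in $[\,d_G(u,v),\,(1+\epsilon)d_G(u,v)\,]$; we also allow ``approximate hopsets'' for which this $\beta$-hop distance is only within a multiplicative $\sigma$ of $d_G$. Given such an $H$, performing $\beta$ relaxation sweeps of Bellman--Ford from $s$ over $G\cup H$ produces, for every $v$, the exact shortest $\beta$-hop distance from $s$ to $v$ in $G\cup H$, which by the hopset guarantee is a multiplicative approximation of $d_G(s,v)$ with the stretch of $H$. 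So the two things to pin down are (i) which hopset to plug in, together with its MPC construction cost, and (ii) that a single Bellman--Ford relaxation sweep runs in $O(1/\gamma)$ rounds of MPC$(n^\gamma)$.

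For the first statement I would use the near-linear-size Elkin--Neiman hopset with hopbound $\beta=2^{\tilde O(\sqrt{\log n})}$ and stretch $(1+\epsilon)$, whose MPC construction --- in $\tilde O(\beta/\gamma)$ rounds, for any $0<\gamma\le 1$ --- is the hopset result developed in the body of the paper and used for the distance sketches. After building $H$, run $\beta$ Bellman--Ford sweeps from $s$ on $G\cup H$; rescaling $\epsilon$ by a constant absorbs both the hopset stretch and any weight-rounding needed to keep distance values in $O(\log n)$ bits, so the estimates are a $(1+\epsilon)$-approximation. The round count is $O(\beta/\gamma)=O(1/\gamma)\cdot 2^{\tilde O(\sqrt{\log n})}$ for the sweeps, which matches and subsumes the hopset construction time.

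For the dense case, the role of the hypothesis $|E|=\Omega(n^{1+1/k}\log n)$ is precisely that the total MPC memory, which is $\Theta(m)$, is now large enough to store a hopset of size $\tilde O(n^{1+1/k})$: in the sparse regime one is forced onto the near-linear hopset above with its quasi-polylogarithmic hopbound, whereas here we can afford the hopset variant with roughly $k$ clustering levels. That variant has both hopbound and MPC construction cost bounded by $O\big(\tfrac1\gamma\,(\tfrac{\log n\,\log k}{\epsilon})^{\log k+k-1}\big)$ rounds --- polylogarithmic once $k=O(1)$ --- but in exchange it is only an approximate hopset, whose $\beta$-hop distances lie within $4k(1+\epsilon)$ of $d_G$. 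Running the same $\beta$-sweep Bellman--Ford on $G\cup H$ therefore yields a $4k(1+\epsilon)$-approximation of SSSP in $O\big(\tfrac1\gamma\,(\tfrac{\log n\,\log k}{\epsilon})^{\log k+k-1}\big)$ rounds total; the side condition $\gamma>1/k$ is what guarantees that the per-center data used during the construction (bunches of size $\tilde O(n^{1/k})$) fits on a single machine.

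The technically delicate step, and the main obstacle, is implementing a single Bellman--Ford relaxation sweep in \emph{low-memory} MPC, since a vertex $v$ may have degree far larger than $n^\gamma$, so its incident edges of $G\cup H$ are scattered across many machines. This is handled with standard MPC primitives. Distribute the directed edges of $G\cup H$ over the machines; sort and group them by tail so that, for each $u$, copies of the current estimate $d(u)$ can be broadcast to every machine holding an out-edge of $u$ along an aggregation tree of depth $O(1/\gamma)$; on each machine form the candidates $d(u)+w(u,v)$; then group by head $v$ and take the minimum (again $O(1/\gamma)$ rounds via sorting or segmented aggregation) to obtain the updated $d(v)$. Each sweep thus costs $O(1/\gamma)=O(\log_{n^\gamma}m)$ rounds, so $\beta$ sweeps cost $O(\beta/\gamma)$, and correctness of the final estimates follows from the hopset guarantee exactly as in the sequential setting. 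What remains is bookkeeping: maintaining $O(\log n)$-bit distance values under the polynomial-weight assumption (with the optional $(1+\epsilon)$-rounding reduction), and checking that the two hopset constructions invoked above are precisely those established earlier with the claimed parameters.
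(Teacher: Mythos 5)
Your argument for the first statement matches the paper: build the Elkin--Neiman $(\beta,\epsilon)$-hopset with $\kappa=\Theta(\log n)$ and $\rho$ chosen so that $\beta=2^{\tilde O(\sqrt{\log n})}$ balances construction time against the $\beta$ Bellman--Ford sweeps, each sweep costing $O(1/\gamma)$ rounds via the sort/aggregate/broadcast primitives. That part is fine.

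The second statement has a genuine gap. You attribute the $4k$ stretch to an ``approximate hopset whose $\beta$-hop distances lie within $4k(1+\epsilon)$ of $d_G$,'' and you claim the role of the density hypothesis is merely that the total memory can store a hopset of size $\tilde O(n^{1+1/k})$. Neither is how the bound is actually obtained, and the construction you describe does not exist in the toolkit you are drawing from: the Elkin--Neiman hopsets always have stretch $(1+\epsilon)$, and the direct low-memory MPC construction (Theorem~\ref{thm:hopsets_MPC}) costs $O(\frac{n^\rho}{\rho}\log^2 n\cdot\beta)$ rounds --- polynomial in $n$ for any admissible $\rho\ge 1/\kappa$, never polylogarithmic --- because the $\tilde O(n^\rho)$ interconnection explorations must be pipelined through the fixed communication bandwidth. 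To reach polylogarithmic rounds one must first build a $(4k-1)$-spanner $G'$ of size $O(kn^{1+1/(2k)})$; since the number of machines stays $\Theta(m/n^\gamma)$ while the input shrinks to $m'=O(n^{1+1/(2k)})$, the density hypothesis $m=\Omega(kn^{1+1/k}\log n)$ puts you exactly in the extra-space model MPC$(n^\gamma, n^{1/(2k)})$ relative to $G'$. That extra space is what allows the extended Goodrich et al.\ simulation (Theorem~\ref{thm:PRAM_extra}) of the PRAM hopset algorithm, which uses $\Omega(m n^\rho)$ processors to run all interconnection Bellman--Fords in parallel, yielding a $(\beta,\epsilon)$-hopset of $G'$ in polylogarithmic rounds (Corollary~\ref{cor:hopsets_extra}). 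The final stretch $4k(1+\epsilon)$ is then the product of the spanner stretch $(4k-1)$ and the hopset/Bellman--Ford stretch $(1+\epsilon)$, not a property of the hopset itself. Without the spanner-as-sparsifier step your proposed algorithm would run in $n^{\Omega(1/k)}$ rounds, not the claimed polylogarithmic bound.
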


\iffalse
\begin{theorem}
For any graph $G = (V, E, w)$ with $n$ vertices, $m= \Omega(n^{1+1/k})$ edges, and $0 < \epsilon < 1, 1/k < \gamma \leq 1, k >2$, and a source node $s \in V$, there is an algorithm that w.h.p. finds a $4k(1+\epsilon)$-approximation from $s$ shortest path distance to all nodes ~in ${O(\frac{1}{\gamma} \cdot (\frac{ \log^3 n \cdot \log^3 k}{\epsilon} )^{\log k})} $ rounds of MPC($n^\gamma$). In particular, for $k=O(1)$ the algorithm runs in ${O(\frac{1}{\gamma} \cdot (\frac{ \log n}{\epsilon} )^{O(1)})}$ rounds.
\end{theorem}
\fi
Note that while the round complexity is polylogarithmic, it may still be somewhat slow for certain applications: an analyst who has to wait polylogarithmic rounds for every distance query would essentially be unable to perform any analysis which depended on large numbers of distance queries.  On the other hand, our main results on distance sketches allows us to pay this round complexity only once, for constructing the sketch.
\paragraph{Streaming.}Finally, we provide an algorithm for constructing distance oracles in the multi-pass streaming model. This is essentially a side-effect of our main results for Congested Clique and MPC, but we include it for completeness.  Our general results can be found in Appendix \ref{app:streaming}. For the specific settings of constant or logarithmic stretch, we have:
\begin{corollary}
Given a graph $G=(V, E,w)$, there exists a streaming algorithm that constructs a Thorup-Zwick distance oracle of stretch $(2k-1)(1+ \epsilon)$ of size $O(kn^{1+1/k} \log n)$  w.h.p. and expected space $O(n^{1+1/k} \cdot \log^2 n)$, such that if $k=O(1)$, w.h.p.~we require $O(\log^k n)$ passes , and if $k=\Omega(\log n)$, w.h.p.~we require $2^{\tilde{O}(\sqrt{\log n})}$ passes.
\end{corollary}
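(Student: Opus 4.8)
The plan is to obtain this corollary by specializing the general multi-pass streaming construction of Appendix~\ref{app:streaming} (itself a re-implementation of our Congested Clique / MPC pipeline in the streaming model) to the two regimes $k=O(1)$ and $k=\Omega(\log n)$. The underlying observation is that every ingredient of the Thorup--Zwick construction we use — sampling the hierarchy $A_0 \supseteq \dots \supseteq A_k$, computing the pivots $p_i(v)$, and growing the bunches $B(v)$ — reduces to running a collection of source-bounded Bellman--Ford/BFS-type explorations, and one round of such relaxations is exactly what a single pass over the edge stream simulates. Hence the pass complexity is governed by the \emph{hop-depth} of these explorations, which after augmenting $G$ with a $(1+\epsilon,\beta)$-hopset is $\tilde O(\beta)$, and the space is governed by how much per-vertex state (distance estimates to nearby pivots, partial bunches, plus the stored hopset edges) must be kept resident. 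The key point distinguishing the streaming bound from the MPC bound is that the $\tilde O(n^{1+1/k})$ working space lets us run \emph{all} of these explorations simultaneously within each pass, so there is no $n^{1/k}$ batching factor as in low-memory MPC: the pass count is $\tilde O(\beta)$ rather than $\tilde O(n^{1/k}\beta)$.

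First I would recall the hopset step: using the streaming implementation of the Elkin--Neiman hopset of \cite{elkin2016} (the same construction we build in MPC earlier in the paper), we compute a $(1+\epsilon,\beta)$-hopset with $\tilde O(n^{1+1/k})$ edges in $\tilde O(\beta)$ passes and $O(n^{1+1/k}\log^2 n)$ expected space, where $\beta = \min\!\big(O(\tfrac{\log n}{\epsilon})^{\log k + k},\, 2^{\tilde O(\sqrt{\log n})}\big)$; the extra $\log n$ factor over the output size comes from maintaining $O(\log n)$ independent distance estimates per vertex for the high-probability guarantee. Then I would run the bounded explorations for the oracle on $G$ together with the stored hopset edges: each relaxation round touches each stream edge (original or hopset) once and needs only $O(n^{1/k}\log n)$ words of state per vertex, so one pass suffices per round, and $\tilde O(\beta)$ rounds produce all bunches and pivots. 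The resulting oracle has stretch $(2k-1)(1+\epsilon)$ and size $O(kn^{1+1/k}\log n)$ exactly as in Thorup--Zwick, the only change being that distances are now measured in the hopset-augmented graph, which costs the $(1+\epsilon)$ factor.

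Finally I would plug in the two regimes. For $k=O(1)$ and constant $\epsilon$, $\beta = O(\tfrac{\log n}{\epsilon})^{O(1)}$, so the number of passes is $\tilde O(\beta) = O(\log^k n)$ after absorbing the polyloglog and bookkeeping factors into the exponent, and the expected space is $O(n^{1+1/k}\log^2 n)$. For $k=\Omega(\log n)$ I would instead use the $\beta = 2^{\tilde O(\sqrt{\log n})}$ branch of the hopbound; here the sampling probability $n^{-1/k}$ is $\Theta(1)$, so the bunch sizes and oracle size behave as in the $k=\Theta(\log n)$ case already handled in Section~\ref{sec:MPC}, and we obtain $2^{\tilde O(\sqrt{\log n})}$ passes.

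I expect the main obstacle to be the space accounting for running many bounded explorations at once in a single pass: naively each of the $\tilde O(n^{1+1/k})$ (source, frontier-vertex) pairs carries distance state, and one must argue — as in the sequential Thorup--Zwick analysis — that at any time only $O(n^{1/k}\log n)$ such estimates per vertex are ``alive'' (namely those that can still end up in some bunch), and that these, together with the stored hopset edges, fit within the claimed $O(n^{1+1/k}\log^2 n)$ expected space. A secondary subtlety is interleaving the recursive hopset construction (which itself has $\log k$ levels of recursion) with the oracle-building passes without inflating the pass count beyond what the $\tilde O(\cdot)$ hides.
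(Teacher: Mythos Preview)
Your approach is essentially the paper's: instantiate the general streaming pipeline (hopset of \cite{elkin2016} in streaming, then the Thorup--Zwick explorations on top) and specialize the parameters in the two regimes. The observation that in streaming all the bunch-growing Bellman--Fords run simultaneously in $O(\Lambda)$ passes with $O(kn^{1+1/k}\log n)$ state, so no $n^{1/k}$ batching factor appears, is exactly Lemma~\ref{lem:streaming_space}.

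One point to tighten: you assert that the streaming hopset of \cite{elkin2016} can be built in $\tilde O(\beta)$ passes \emph{and} $O(n^{1+1/k}\log^2 n)$ expected space uniformly in $k$. The cited theorem (Theorem~\ref{thm:hopsets_streaming}) actually offers a tradeoff: either $O(\beta\log n)$ passes with space $O(n^{1+\rho}/\rho + n^{1+1/k}\log^2 n)$, or $O(n^{\rho}\beta\log^2 n)$ passes with space $O(n^{1+1/k}\log^2 n)$. For $k=O(1)$ the paper takes the first branch with $\rho=1/k$, so $n^{1+\rho}=n^{1+1/k}$ and your claim is fine. For $k=\Omega(\log n)$, however, the paper must take the \emph{second} branch with $\rho=\sqrt{\log\log n/\log n}$ to keep space at $O(n\log^2 n)$; this costs an extra $n^{\rho}=2^{\tilde O(\sqrt{\log n})}$ factor in passes, which is then absorbed into the final $2^{\tilde O(\sqrt{\log n})}$ bound. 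Your conclusion is correct, but the intermediate claim that the hopset alone costs only $\tilde O(\beta)$ passes at the stated space is not what the underlying theorem gives in that regime.
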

Note that in case of $k=\Omega(\log n)$ we are in the so-called \textit{semi-streaming} setting in which the total memory used is $O(n \cdot \text{polylog } n)$. 
\subsection{Our Techniques} \label{sec:techniques}
Our main approach is to combine constructions of \emph{hopsets} with efficient distributed constructions of Thorup-Zwick distance oracles/sketches.  In particular, Das Sarma et al.~\cite{sarma2015} showed that Thorup-Zwick sketches could be computed in the CONGEST model, but the time depended on the graph diameter.  So all that we really need to do is to reduce the diameter of the graph, since any CONGEST algorithm also works in the Congested Clique.  This is what hopsets do: we discuss them in more detail in Section~\ref{sec:algs}, but informally they allow us to reduce the diameter of the graph while preserving distances by adding in a carefully chosen set of weighted ``shortcut" edges. Hopset constructions for the Congested Clique were given by Elkin and Neiman~\cite{elkin2016} (and more recently by\cite{censor2019}) so for Congested Clique we can essentially just combine result of \cite{elkin2016} (or \cite{censor2019}) with \cite{sarma2015} to get our result (modulo a small number of technicalities).  

Moving to MPC introduces some significant technical difficulties, particularly when the space per machine is $o(n)$. Neither \cite{sarma2015} nor \cite{elkin2016} are written with MPC in mind, so we cannot simply ``black-box" them as we could (mostly) in the Congested Clique. However, not surprisingly, both \cite{sarma2015} and \cite{elkin2016} use as a fundamental primitive a ``restricted" version of the classical Bellman-Ford shortest-path algorithm that ends early, and it turns out that implementing this restricted Bellman-Ford is the main (although not the only) technical hurdle in adapting both of them to the MPC model. 

When implementing restricted Bellman-Ford in low-memory MPC, the main difficulty is that since the memory at each server is $o(n)$, a single server cannot ``simulate" a node in Bellman-Ford. It takes many machines to store the edges incident on any particular node, so we need to show that it is possible for many machines to simulate a single node in MPC without too much overhead.   We show that this is indeed possible: Bellman-Ford and related algorithms can be implemented in low-memory MPC with very little additional overhead.  Once we develop this tool, we argue that the hopsets of \cite{elkin2016} can be constructed in low-memory MPC with essentially the same complexity as in the Congested Clique. Our implementation of Bellman-Ford and this hopset construction, as well as a few other primitives we develop for low-memory MPC (e.g., finding minimum or broadcasting on a range of machines), may be of independent interest.

Even after using hopsets, we would still need polynomial time for constructing constant stretch distance sketches. We overcome this issue and improve the running time using two ideas. First, we show that by relaxing the model to allow small additional total memory (either through extra space per machine or additional machines), we can run our algorithms in polylogarithmic number of rounds. So we just need to argue that there is a way of obtaining extra memory without actually changing the model assumptions.  This is our second idea: by constructing a spanner we can sparsify the graph while keeping the memory per machine and number of machines the same. Thus from the perspective of the spanner, it will appear that we do indeed have ``extra" memory.  The idea of sparsifying the input to obtain extra resources has already proved to be powerful in related contexts (for example, \cite{friedrichs2018} recently used spanners to give a work-efficient PRAM metric embedding algorithm).  To the best of our knowledge, though, this idea has not yet appeared in the MPC graph algorithms literature.  

\subsection{Related Work} \label{sec:related}
Distributed constructions of distance oracles and sketches have been studied extensively in the CONGEST model~\cite{sarma2015,lenzen2013,elkin2016podc}.  All of these algorithms have running times dependent on the graph diameter, while our algorithms run in time independent of the graph diameter.  To the best of our knowledge, constructing distance oracles/sketches has not previously been studied for the Congested Clique or the MPC model. Similarly, hopsets have been used extensively in various models of computation for solving approximate SSSP (\cite{henzinger2016, elkin2016}). Our result on hopset construction in low memory MPC also gives the first (approximate) SSSP algorithm in this model for weighted graphs (in Congested Clique there are more results known~\cite{elkin2016, henzinger2016, becker2017, censor2019}, but these do not translate obviously to MPC when there is sublinear memory per machine).
In a recent result, \cite{censor2019} gave an efficient Congested Clique algorithm that constructs hopsets of size $\tilde{O}(n^{3/2})$ with hopbound $O(\log^2(n)/\epsilon)$. Their hopsets are a special case of hopsets of \cite{elkin2016}. In Appendix \ref{app:congest_clique} we explain how their algorithm applies to our Congested Clique result.

In the PRAM model, shortest path computation is well studied (e.g. \cite{cohen2000, elkin2016}), and it is known that many PRAM algorithms can be simulated in the MPC model (\cite{karloff2010, goodrich2011}). However, most of these algorithms use $\omega(|E|)$ number of processors, in which case the simulations of \cite{karloff2010} and \cite{goodrich2011} do not directly apply as they assume that the number of processors is at most the input size. As we argue in Section \ref{sec:extra_memory} we will still utilize an extension of this simulation. Another recent result for APSP in MapReduce by \cite{hajiaghayi2019} also has the same drawback of using $\omega(n^2)$ processors. Result of \cite{hajiaghayi2019} is based on matrix multiplication techniques, which are also well-studied in the PRAM model for computing APSP.

Finally, we note that distance problems have also been studied in related models such as the $k$-machine model (\cite{klauck2015}). In this model \cite{klauck2015} shows a low bound of $\Omega(n/k)$ for computing shortest paths, where $k$ is the number of machines. To the best of our knowledge, the exact connection between this model and the MPC model has not yet been studied\footnote{In the $k$-machine model, generally the number of machines considered is small. The computational power of this model therefore seems very different from the low-memory MPC setting, where there are many machines (more than $n$), but each one has small memory. Moreover, the $k$-machine model does not bound the space on each machine and the IO bound is slightly different from MPC.}.

\section{Preliminaries and Notation}

\subsection{Notation}
In a given weighted graph $G=(V,E)$, we denote the (weighted) distance between a pair of nodes $u,v \in V$ by $d_G(u,v)$. We may drop the subscript $G$ when there is no ambiguity. We define the \textit{$h$ hop-restricted} distance between $u$ and $v$ to be the weight of the shortest path between $u$ and $v$ that uses at most $h$ hops and denote this by $d^h(u,v)$. 

We will denote the set of neighbors of a node $v \in V$ by $N(v)$. In a weighted graph $G$, we define the \emph{shortest-path diameter} of $G$, denoted by $\Lambda$, to be the maximum over all $u,v \in V$ of the number of edges in the shortest $u-v$ path (so if the graph is unweighted this is the same as the diameter, but in weighted settings it can be larger than the unweighted diameter).
 Finally, a $t$-spanner of $G$ is simply a subgraph which preserves distances up to a multiplicative $t$ factor.

\subsection{Algorithmic Building Blocks} \label{sec:algs}
In this section we describe the algorithms of \cite{thorup2005}, \cite{sarma2015} and \cite{elkin2016}, that we will use in next section.
\paragraph{Thorup-Zwick Distance Oracle.}\label{sec:TZ}
In this section, we briefly describe the centralized construction of the well-known Thorup-Zwick distance oracle~\cite{thorup2005}. Given an undirected weighted graph $G=(V,E,w)$ and $k > 1$, in the preprocessing phase of their algorithm they first create a hierarchy of subsets $A_0,A_1,...,A_{k_1}$ by sampling from nodes of $V$ in the following manner: set $A_0=V$, and for $1 \leq i \leq k-1$, add every node $v \in A_{i-1}$ to the set $A_i$ independently with probability $n^{-1/k}$. Set $A_k = \emptyset$ and for all $u \in V$ define $d(u,A_k)=\infty$.
Let $B_i(u) =\{ w \in A_i: d(u,w) < d(u, A_{i+1}) \}$ for all $u \in V$ and $0 \leq i \leq k-1$, where $d(u,A_i)$ is the minimum distance between $u$ and a node in the set $A_i$, and set $B(u)= \cup_{i=0}^{k-1} B_i(u)$.  We also denote the node that has the minimum distance to $u$ among all nodes in $A_i$ by $p_i(u)$ and call this the $i$-center of $u$, and so $d(u,A_i)=d(u,p_i(u))$.  The distance sketch for $u$ consists of $\{p_i(u)\}_{i=0}^k$, the set $B(u)$, and the corresponding distances between these nodes and $u$.  The distance oracle is just the union of the sketches for all $u \in V$.  Thorup and Zwick showed that this data structure has size $O(kn^{1+1/k} \log n)$ w.h.p., and access to these sketches is enough for approximating distances between every pair of vertices in $O(k)$ time with stretch $2k-1$. In all the settings we consider, after preprocessing the distance oracle/sketches, we can \textit{locally} perform the query algorithm of \cite{thorup2005} in $O(k)$ time. For completeness, we briefly review the query algorithm in Appendix \ref{app:query}. 

Next, we explain a distributed construction of Thorup-Zwick distance \textit{sketches} as described by Das Sarma et al.~\cite{sarma2015} for the CONGEST model. The sampling phase can easily be done in distributed settings. Then for finding $p_i(v), 1 \leq i \leq k$ for all nodes $v \in V$, we will do the following: in iteration $i$, define a virtual source node $s_i$, and for all nodes in $u \in A_i$ add an edge between $u$ and $s_i$ where $w(u,s_i)=0$. Then we will only need to run the Bellman-Ford algorithm from $s_i$, and after $O(k \Lambda)$ time every node $u \in V$ knows $p_i(u)$ and $d(u,A_i)$. Finally, for all $1 \leq i \leq k$ we need to compute the distance from $w \in A_i \setminus A_{i+1}$ to all the nodes $v$ for which $w \in B(v)$. Simply running a distributed Bellman-Ford independently from all the sources $w \in A_i \setminus A_{i+1}$ would be slow since due to congestion limit on each edge we cannot run all these in parallel at the same time. However, \cite{sarma2015} argue that this can be done in $O(\Lambda \cdot kn^{1/k} \log n)$ rounds in total (w.h.p), since each node $v$ needs to forward messages in the runs of Bellman-Ford algorithm for a source $w$ only if $w \in B(v)$. This means that, roughly speaking, each node $v$ participates in $|B(v)|= O(kn^{1/k} \log n)$ runs of Bellman-Ford. Then by a simple round-robin scheduling scheme they show that running these Bellman-Fords for all sources in ${A_i\setminus A_{i+1}}$ can be done in $O(\Lambda \cdot kn^{1/k} \log n)$ without violating the congestion bound on each edge. For completeness we include a more detailed version of this algorithm in Algorithm \ref{alg:distanceoracle} in Appendix \ref{app:congest_clique}.
%We will then use distributed construction of these distance sketches due to Das Sarma et al.~\cite{sarma2015}. Details of their algorithm can be found in Appendix \ref{app:congest_clique}.
%The high level idea is that given the distance sketches of a pair $u$ and $v$, we find the smallest $i$ for which either $p_i(u) \in B_i(v)$ or $p_i(v) \in B_i(u)$, and estimate $d(u,v)$ based on the the distances from $u$ and $v$ to this $i$-center.
\paragraph{Hopsets.} For parameter $\epsilon , \beta >0$, a graph $G_H = (V, H, w_H)$ is called a $(\beta,\epsilon)$-hopset for the graph $G$, if in graph $G'=(V, E\cup H,w')$ obtained by adding edges of $G_H$, we have $d_G(u, v) \leq d^{\beta}_{G'} (u, v) \leq (1+\epsilon) d_G (u, v)$  for every pair $u, v \in V$ of vertices. The parameter $\beta$ is called the \textit{hopbound} of the hopset.

%Hopsets were originally proposed by Cohen \cite{cohen2000}, and were used for computing shortest paths in the PRAM model. Throughout this paper we will use the hopset construction algorithm proposed in \cite{elkin2016}. For our Congested Clique and streaming algorithms we directly use the results in \cite{elkin2016}, while for making the construction work in the MPC model, we need to use a subroutine that will let us construct the hopsets in low memory settings (see Section \ref{sec:MPC}). 

We first give a high level overview of the (sequential) hopset construction of \cite{elkin2016} here. In their algorithm, they consider each distance scale $(2^k, 2^{k+1}], k=0,1,2,...$ separately. For a fixed distance scale $(2^k,2^{k+1}]$ the algorithm consists of a set of \textit{superclustering}, and \textit{interconnection} phases. Initially, the set of clusters is $\mathcal{P}=\{ \{v\}_{v \in V}\}$. Each cluster in $C \in \mathcal{P}$ has a cluster center which we denote by $r_C$. The algorithm uses a sequence $\delta_1,\delta_2,...$ of distance thresholds and a sequence $\deg_1,\deg_2,...$ of degree thresholds that determines the sampling probability of clusters. At the $i$-th iteration, every cluster $C \in \mathcal{P}$ is sampled with probability $1/\deg_i$. Let $S_i$ denote the set of sampled clusters.
 Now a single shortest-path exploration of depth $\delta_i$ (weighted) from the set of centers of sampled clusters $R=\{r_C \mid C \in S_i\}$ is performed. Let $C' \in \mathcal{P}\setminus S_i$ be a cluster whose center $r_{C'}$ was reached by the exploration and let $r_C$ be the center in $R$ closest to $r_C'$. An edge $(r_C, r_{C'})$ with weight $d_G(r_C,r_{C'})$ is then added to the hopset. A supercluster $\hat{C}$ with center $r_{\hat{C}}=r_C$ is now created that contains all the vertices of $C$ and the clusters $C'$ for which a hopset edge was added.
  In the next stage of iteration $i$, all clusters within distance $\delta_i/2$ of each other that have not been superclustered at iteration $i$ will be interconnected. In other words, a \textit{separate} exploration of depth $\frac{\delta_i}{2}$ is performed from each such cluster center $r_C$ and if center of cluster $C'$ is reached, an edge $(r_C,r_C')$ with weight $d_G(r_C,r_{C'})$ will be also added to the hopset. The final phase of their algorithm only consists of the interconnection phase. We denote the hopset edges added for distance scale $(2^k,2^{k+1}]$ by $H_k$. For completeness, we review this algorithm in more detail and explain the exact parameters in Appendix \ref{app:hopset}. 

One important property of this hopset construction (proved in Lemma 3.3 of \cite{elkin2016}) that we will need for our analysis in Section \ref{sec:MPC}) is the following:
\begin{lemma}[\cite{elkin2016}] \label{lem:overlaps}
In the $i$-th iteration of a given distance scale $(2^k,2^{k+1}]$, for each node $v \in V$, w.h.p. the number of explorations of interconnection phase that visit $v$ is at most $O(deg_i \cdot\log n)$, where $deg_i$ is the sampling probability of the superclustering phase.
\end{lemma}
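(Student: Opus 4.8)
The plan is to reduce the statement to a ``ball-packing'' argument: for a fixed node $v$, I will show that only $O(\deg_i\log n)$ cluster centers can be simultaneously (i) within the interconnection radius $\delta_i/2$ of $v$ and (ii) not superclustered in iteration $i$, and that this follows from the sampling probability $1/\deg_i$ together with the fact that the superclustering radius $\delta_i$ is exactly twice the interconnection radius. First I would fix the distance scale $(2^k,2^{k+1}]$ and the iteration $i$, and condition on the collection $\mathcal{P}$ of clusters present at the start of iteration $i$. Since $\mathcal{P}$ is determined entirely by the random choices made in iterations $1,\dots,i-1$, it is independent of the cluster sampling performed in iteration $i$, so after conditioning $\mathcal{P}$ may be treated as fixed. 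For a fixed $v\in V$, let $Y_v=\{C\in\mathcal{P}:d_G(r_C,v)\le\delta_i/2\}$. An interconnection exploration is launched in iteration $i$ only from the center of a cluster that was \emph{not} superclustered, and such an exploration, being a shortest-path exploration of depth $\delta_i/2$, reaches $v$ exactly when its source lies within distance $\delta_i/2$ of $v$. Hence the number of interconnection explorations that visit $v$ equals the number of clusters of $Y_v$ that were not superclustered in iteration $i$; call this random quantity $Z_v$, so it suffices to show $Z_v=O(\deg_i\log n)$ w.h.p.

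The key observation is that, by the triangle inequality through $v$, any two cluster centers in $Y_v$ are within distance $\delta_i$ of each other. Consequently, if even a single cluster $C^{\star}\in Y_v$ is sampled (so $r_{C^{\star}}$ is one of the sources of the depth-$\delta_i$ superclustering exploration), then every $C\in Y_v$ has $d_G(r_C,r_{C^{\star}})\le\delta_i$, so $r_C$ is reached by the superclustering exploration and $C$ is superclustered in iteration $i$; a sampled cluster itself becomes a supercluster center and likewise performs no interconnection exploration. In this event $Z_v=0$. Since each cluster is sampled independently with probability $1/\deg_i$, for any constant $c$ we have $\Pr[\text{no cluster of } Y_v \text{ is sampled}]\le(1-1/\deg_i)^{|Y_v|}\le e^{-|Y_v|/\deg_i}$. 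If $|Y_v|\le c\,\deg_i\ln n$ then trivially $Z_v\le|Y_v|=O(\deg_i\log n)$; otherwise this probability is at most $n^{-c}$, so with probability at least $1-n^{-c}$ some cluster of $Y_v$ is sampled and $Z_v=0$. Either way $Z_v=O(\deg_i\log n)$ with probability at least $1-n^{-c}$. Taking a union bound over the at most $n$ choices of $v$ (and, if one wants the bound simultaneously for all iterations and all distance scales, over the $\mathrm{poly}(n)$ many such pairs) and choosing $c$ large enough completes the proof.

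I expect that there is no genuine obstacle here, only two points that must be handled with care. The first is the independence step: one must observe that the iteration-$i$ sampling is independent of the previously built cluster hierarchy $\mathcal{P}$, which is exactly what licenses fixing the (otherwise random-looking) set $Y_v$ before exposing the sampling. The second is the geometry: the argument crucially uses that the superclustering exploration radius $\delta_i$ is twice the interconnection exploration radius $\delta_i/2$, since this is precisely what makes the implication ``two centers close to $v$, one of them sampled $\Rightarrow$ the other is superclustered'' go through via the triangle inequality. Everything else is an elementary union-bound computation.
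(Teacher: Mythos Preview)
The paper does not give its own proof of this lemma; it is quoted as Lemma~3.3 of \cite{elkin2016} and used as a black box. Your argument is correct and is exactly the argument in \cite{elkin2016}: the factor-two relationship between the superclustering radius $\delta_i$ and the interconnection radius $\delta_i/2$, together with the triangle inequality through $v$, ensures that a single sampled cluster in $Y_v$ superclusters all of $Y_v$; the $O(\deg_i\log n)$ bound then follows from the independent sampling at rate $1/\deg_i$ and a union bound.
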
 
  
Now we turn our attention to efficient construction of hopsets in distributed settings (such as CONGEST and Congested Clique) also proposed by \cite{elkin2016}. Note that each superclustering phase can be performed by a distributed Bellman-Ford exploration of depth $\delta_i$. For an interconnection phase, a separate distributed Bellman-Ford explorations of depth $\delta_i/2$ from cluster centers is performed. These Bellman-Ford algorithms can easily be implemented sequentially, however, in distributed settings, $O(n)$ rounds may be needed for each of the explorations of the larger scales. To overcome this issue, \cite{elkin2016} propose to use the hopsets $\cup_{\log \beta-1<j\leq k-1} H_j$, for constructing hopset edges $H_k$. More precisely, they observe that for any pair of nodes with distance less than $2^{k+1}$, hopsets $\cup_{\log \beta-1<j\leq k-1} H_j$ provide a $(1+\epsilon)$-stretch approximate shortest path with $2\beta+1$ hops between these pair of nodes. In other words, it is enough to run each Bellman-Ford exploration only for $O(\beta)$ rounds. 

\section{Distance Sketches in Massively Parallel Computation Model} \label{sec:MPC}
In this section we will focus on the MPC model. First we provide MPC algorithms for constructing distance sketches that have the same guarantees (with respect to the stretch/size tradeoff) as the centralized construction of Thorup-Zwick that run in polynomial (or slightly subpolynomial) time. Then in Section \ref{sec:extra_memory} we show how we can bring down the running time to polylogarithmic in exchange for a loss in accuracy.

First, we note that it is known from \cite{behnezhad2018} that for \textit{dense graphs} with $O(n^2)$ edges every Congested Clique algorithm (in which nodes use local memory of $O(n)$) can be implemented in the MPC$(n)$ model. Therefore, when memory per machine is $\Omega(n)$ and the graph is dense all the Congested Clique results discussed in Appendix \ref{app:congest_clique} also hold, except that we store the distance sketches rather than a central distance oracle. The more interesting case is when memory per machine is strictly sublinear in $n$. For the rest of this section we will turn our attention to the case where the memory is $n^\gamma$, where $0<\gamma \geq 1$ (i.e., strictly sublinear). For simplicity we assume that we can store the sketches in a single machine. Namely, we require $\tilde{O}(n^{1/k})$ memory per machine for stretch $O(k)$ distance sketches. This assumption can be relaxed (and in exchange the query algorithm will take $O(k)$ rounds instead of 2 rounds). 

 One main subroutine that we need is the \textit{restricted Bellman-Ford} algorithm. We then need to run many instances of this algorithm in parallel and handle other technicalities both for constructing hopsets, and then the distance sketches. First, we require following subroutines that will allow us to simulate one round of Bellman-Ford in MPC$(n^\gamma)$: 

\textbf{Sorting~\cite{goodrich2011}}. Given a set of $N$ comparable items, the goal is to have the items sorted on the output machines, i.e. the output machine with smaller ID holds smaller items.

\textbf{Indexing~\cite{andoni2018}}. Suppose we have sets $S_1, S_2 ,..., S_k$ of $N$ items stored in the system. The goal is to compute a mapping $f$ such that $\forall i \in [k], x \in S_i$, $x$ is the $f(S_i , x)$-th element of $S_i$. After running this algorithm the tuple $(x,f(S_i , x))$ is stored in the machine that stores $x$.

\textbf{Find Minimum ($x,y$).} Finds the minimum of $N$ values stored over a contiguous set of machines given ID $x$ of the first machine and ID $y$ of the last machine.  

\textbf{Broadcast ($b, x, y$).} Broadcasts a message $b$ to a contiguous group of machines given ID $x$ of the first machine and ID $y$ of the last machine.

The sorting and indexing subroutines can be performed in $O(1/\gamma)$ rounds of MPC$(n^\gamma)$ (\cite{andoni2018, goodrich2011}). We argue that we can solve the Find Minimum and Broadcast problems also in $O(1/\gamma)$ rounds of MPC$(N^\gamma)$ in the following theorem. At a high-level we use an \textit{implicit} aggregation tree of depth $O(\log_{N^\gamma}N)= \frac{1}{\gamma}$. % A complete proof can be found in Appendix \ref{app:MPC}.
\begin{theorem} \label{thm:find_min}
Given $N$ items over a contiguous range of machines $x$ to $y$, subroutines Find Minimum$(x,y)$ can be implemented in $O(1/\gamma)$ rounds of MPC$(N^\gamma)$. Moreover, the subroutine Broadcast$(x,y)$ can also be implemented in $O(1/\gamma)$ rounds of MPC$(N^\gamma)$.
\end{theorem}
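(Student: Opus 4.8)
The plan is to implement both subroutines on an \emph{implicit} balanced aggregation tree built on top of the contiguous block of machines $x,x{+}1,\dots,y$, with branching factor $b=\Theta(N^\gamma)$. Since these machines collectively hold $N$ items with at most $N^\gamma$ per machine, their number $M$ satisfies $M=O(N)$ (and $M=O(N^{1-\gamma})$ when machines are close to full), so a tree of branching $b$ has depth $O(\log_b N)=O(1/\gamma)$. The tree is never materialized: identify a machine with its relative index $i=\mathrm{ID}-x\in\{0,\dots,y-x\}$, declare $i$ to be at level $\mathrm{lev}(i)=\max\{\ell\ge 0 : b^\ell \mid i\}$ (with machine $x$, i.e. $i=0$, playing the root at level $=$ tree depth), and define the parent of a level-$\ell$ machine $i$ to be $\mathrm{par}(i)=\lfloor i/b^{\ell+1}\rfloor\cdot b^{\ell+1}$, which is a strictly higher-level machine lying in the same block of $b^{\ell+1}$ consecutive machines. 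Every machine can compute its own level, its parent, and the list of its children locally from $x,y,N,\gamma$ and its own ID, so no communication is needed to set up the tree.

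For Find Minimum: first, with no communication, each machine replaces its (at most $N^\gamma$) items by their minimum. Then run $O(1/\gamma)$ rounds; in round $\ell=0,1,2,\dots$ every machine $i$ with $\mathrm{lev}(i)=\ell$ sends its current accumulated value (one word) to $\mathrm{par}(i)$, and every machine at level $>\ell$ that is a parent of such machines takes the minimum of its stored value and the values it just received. A parent receives one word from each of its at most $b-1$ level-$\ell$ children plus keeps its own value, which fits in $N^\gamma$ memory and I/O when $b\le N^\gamma$; choosing $b=N^\gamma$ (or $b=\lceil N^\gamma/2\rceil$ if one wants strict slack, changing the depth only by a constant) keeps everything within budget while preserving depth $O(1/\gamma)$. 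After $O(1/\gamma)$ rounds machine $x$ holds $\min$ of all $N$ items; if the value is needed everywhere, append one call to Broadcast for another $O(1/\gamma)$ rounds.

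For Broadcast$(b,x,y)$: run the same tree in reverse (a down-sweep). Machine $x$ holds the message, and in successive rounds every machine currently holding the message forwards it to each of its (at most $b\le N^\gamma$) children; each such machine sends at most $N^\gamma$ words and each machine receives a single word, so the I/O bound holds, and after $O(1/\gamma)$ rounds every machine in $[x,y]$ holds the message. The only place any care is needed in this whole argument is the per-machine I/O accounting at the internal (aggregating) machines and the corresponding choice of branching factor $b=\Theta(N^\gamma)$ — one must check that shaving $b$ to respect the $S=N^\gamma$ bound still leaves $\log_b N=\Theta(1/\gamma)$, which is immediate — and pinning down the implicit level/parent/child addressing; correctness of the up-sweep and down-sweep and the local pre-minimization step are then routine.
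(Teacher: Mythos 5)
Your proof is correct and follows essentially the same approach as the paper: an implicit aggregation tree of branching factor $\Theta(N^\gamma)$ over the contiguous block of machines, with an up-sweep for Find Minimum and a down-sweep for Broadcast, each of depth $O(\log_{N^\gamma} N)=O(1/\gamma)$. Your parent/level addressing scheme differs in minor details from the paper's, and you are somewhat more explicit about the per-machine I/O accounting, but the argument is the same.
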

\begin{proof}
We will first define a rooted \textit{aggregation tree} $\mathcal{T}$ with branching factor $N^\gamma$ where the machines $M_x,...,M_y$ are placed at the leaves (here $M_x$ denotes the machine with ID $x$). W.l.o.g assume that the machines in this range have increasing and sequential IDs. Note that we don't need to store this tree explicitly, and we only need each node to know its parent. Consider level $\ell$ of the tree (leaves have $\ell=0$). Each node in this level is a machine associated with the label $\ell$. For each node in level $\ell-1$ that has the $i$-th machine in its subtree, we set as its parent $M_{p(i, \ell)}$ where $p(i, \ell)= {x+\lfloor \frac{i}{N^{\ell\gamma}} \rfloor}$. Thus each machine can compute its parent given the label $\ell$. Similarly, each machine can compute the indices of its children (as a range). In other words, at each level $\ell$, we assign each group of $N^\gamma$ nodes of this tree to a parent node at level $\ell+1$.

The algorithm Find Minimum proceeds as follows: at each round $\ell$, each machine first computes minimum over its the values it knows, and then sends the outcome to the parent machine. Finally, the minimum will be computed and stored at the root machine, which may forward the value to another destination.
 The algorithm Broadcast will similarly use an aggregation tree, but this time it routes the message top-down. First message $b$ is sent to the first machine $M_x$, and then starting from $M_x$ in each round any machine that receives message $b$ sends this value to all of its children, which can be determined from the machine's ID and $y$. Eventually all the machines at the leaves will receive $b$. The number of rounds each of these subroutines take are the height of the aggregation tree which is $O(\log_{N^\gamma}N)= \frac{1}{\gamma}$.
\end{proof}

Running the (restricted) Bellman-Ford algorithm in MPC is not as straightforward as it is in the Congested Clique. One challenge is that for high-degree nodes, the edges corresponding to a single node are distributed over a set of machines. Therefore, for each round of Bellman-Ford these machines must communicate for computing and updating the distance estimates. Another hurdle is the fact that since nodes have different degrees, we do not have the range in which edges corresponding to a given node are stored a priori. To overcome these challenges we need to use the described subroutines, and for that we need to perform some preprocessing to append each edge with a tuple that we will describe shortly.

We will show how we can create and maintain the following setting: Given a graph $G=(V,E)$, the goal is to store all the edges incident to each node $v$ in a contiguous group of machines, which we denote by $M(v)$. More precisely, let $M_1,...,M_P$, where $P{=O(\frac{m}{n^\gamma})}$, be the list of machines ordered by their ID, and let $v_1,...,v_n$ be the list of vertices sorted by their ID. $M(v_i)$ consists of the $i$-th smallest contiguous group of machines, such that $|M(v_i)| = \lceil \frac{\deg(v_i)}{n^\gamma} \rceil$.  

Throughout the algorithm, let $M_{(u,v)}$ denote the machine that stores the edge $(u,v)$. Also, for all $u \in V$, let $r_u$ be the first machine in $M(u)$, and for any edge $(u,v) \in E$ let $i_u(v)$ be the index of $(u,v)$ (based on the lexicographic order) among all the edges incident to $v$. We need to compute and store the following information at $M_{(u,v)}$: $\deg(u), \deg(v)$, $r_u, r_v, i_u, i_v$ (here by storing $r_u$ we mean ID of $r_u$, and for simplicity we refer to $i_u(v)$ as $i_u$). We first explain how these labels can be computed for all edges in $O(\frac{1}{\gamma})$ rounds in the following lemma.
\begin{lemma}\label{lem:tuples}
 Let $M_{(u,v)}$ be the machine that stores a given edge $(u,v)$. We can create tuples of the form $((u,v), \deg(u), \deg(v)$, $r_u, r_v, i_u, i_v)$, stored at $M_{(u,v)}$ for all edges in $O(\frac{1}{\gamma})$ rounds in MPC$(n^\gamma)$, where $\gamma <1$.
\end{lemma}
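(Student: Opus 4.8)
The idea is to reduce the entire computation to a constant number of invocations of primitives that are already known (or shown above) to run in $O(1/\gamma)$ rounds: Sorting, Indexing, and the aggregation‑tree construction underlying Theorem~\ref{thm:find_min} (used both for ``propagate a value from a designated record to a range'' and, with sum in place of min plus a top‑down pass, for prefix sums). First, before doing anything, each machine stamps every edge it currently holds with a \emph{return address} (the ID of that machine and the local slot), so that at the very end we can ship the computed tuples back to the original $M_{(u,v)}$ in $O(1)$ rounds; this delivery respects the I/O bound because each original machine gets back only $O(1)$ result records per edge it stored. For each edge $(u,v)$ we emit two endpoint‑tagged copies $\langle u;(u,v)\rangle$ and $\langle v;(u,v)\rangle$ (carrying the return address), giving $2m=O(n^{1+1/k})$ records over $O(m/n^\gamma)$ machines.

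\textbf{The computation.} (i) Sort the $2m$ records by their endpoint tag, breaking ties by the edge identity in lexicographic order. Now the records with tag $u$ occupy a contiguous \emph{block} of machines, internally in lexicographic order, so the Indexing primitive with $S_u$ equal to the block of $u$ delivers to each record its within‑block rank; this is exactly the index of that edge among the edges incident to that endpoint, yielding $i_u$ for one copy and $i_v$ for the other. (ii) The size of block $u$ is $\deg(u)$, and after Indexing the last record of the block already holds this count; one pass of the aggregation tree of Theorem~\ref{thm:find_min} over the whole sorted range with the associative operator ``nearest non‑null value to the right'' propagates $\deg(u)$ to every record of block $u$ (within a block only the designated record is non‑null, so each record sees precisely its own block's count). (iii) From the first record of each block extract a compact list of the $n$ pairs $(v_i,\deg(v_i))$ over $O(n/n^\gamma)$ machines; a prefix‑sum pass (the same aggregation tree, sum instead of min, plus a top‑down pass forwarding partial prefixes) computes $r_{v_i}=M_1+\sum_{j<i}\lceil \deg(v_j)/n^\gamma\rceil$, and one more ``fill'' pass distributes $r_u$ to all records of block $u$. (iv) Re‑sort the $2m$ records by the edge $(u,v)$, so the two copies of each edge become adjacent; each copy now reads the other copy's $(\deg,r)$ values locally (a single local exchange with boundary handling as in Theorem~\ref{thm:find_min}), after which every record carries the full tuple $((u,v),\deg(u),\deg(v),r_u,r_v,i_u,i_v)$. (v) Ship each record back to $M_{(u,v)}$ using its return address. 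Each of (i)--(v) is a constant number of $O(1/\gamma)$‑round phases, so the total is $O(1/\gamma)$ rounds.

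\textbf{Main obstacle.} The one point that needs care is that we are effectively running block‑wise operations over a partition of the machine range into blocks of wildly varying sizes, and a naive per‑block Broadcast/Find‑Minimum could blow up the per‑round I/O. This is avoided by the reduction above: every block‑wise step is realized as a \emph{single} monolithic aggregation‑tree pass over the entire sorted sequence with an associative operator (``leftmost/rightmost non‑null'', or sum), which is exactly the object built in the proof of Theorem~\ref{thm:find_min} and therefore has depth $O(\log_{n^\gamma} n)=O(1/\gamma)$ and $O(n^\gamma)$ I/O per machine per round. The only things left to verify are routine: that the non‑null‑propagation operator correctly hands each block its own scalar given the sorted layout (immediate, since only one record per block is non‑null), and that the final routing back to the return addresses stays within the memory bound (it does, by the $O(1)$‑records‑per‑stored‑edge count). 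Everything else is a direct call to an established $O(1/\gamma)$‑round primitive.
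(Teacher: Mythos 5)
Your proposal is correct and follows essentially the same route as the paper: duplicate each edge into two endpoint-tagged copies, sort so each node's edges are contiguous, apply the Indexing primitive to get $i_u,i_v$, derive $\deg(\cdot)$ and $r_{(\cdot)}$ by block-wise aggregation, and re-sort by edge identity to merge the two copies into the full tuple. The only notable difference is a detail: you compute $r_u$ by a global prefix sum over the degree list, whereas the paper has the last machine of each block deduce $r_u$ locally from its own ID and $\deg(u)$; both work, and your monolithic aggregation-tree formulation is a slightly more careful way to realize the paper's per-block broadcasts within the I/O bound.
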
 
\begin{proof}
  Let $N(v)$ be the set of edges incident on node $v$. Without loss of generality, let us assume that both tuples of form $(u,v)$ and $(v,u)$ are present in the system for each edge and we assume $(u,v) \in N(u)$ and $(v,u) \in N(v)$ (note that the graph is still undirected). First, we use the indexing subroutine of \cite{andoni2018} on the sets $\{N(v)\}_{v \in V}$ to store index $i_u$ at $M_{(u,v)}$ and index $i_v$ at $M_{(v,u)}$. After this step tuples of form $((u,v), w(u,v), i_u)$ are stored at $M_{(u,v)}$.

 Then we sort the tuples based on edge IDs lexicographically, using sorting algorithm proposed in \cite{goodrich2011}. This will result in the setting described above in which edges incident to each node $u$ are stored in a contiguous group of machines $M(u)$. Now in order to compute $\deg(u)$, machines will check whether they are the last machine in $M(u)$ either by scanning their local memory or communicating with the next machine. Then the last machine in $M(u)$ sets $\deg(u)$ to the maximum index $i_u$ it holds. This machine can also compute $r_u$, ID of the first machine in $M(u)$ (using $\deg(u)$), and then broadcasts $\deg(u)$ and $r_u$ to all machines in $M(u)$. At the end of these computations, each tuple $((u,v), w(u,v), i_u)$ will be replaced by the tuple $((u,v), w(u,v), r_u, i_u, \deg(u))$. 
Next, we sort these tuples again but this time based on the ID of the smallest endpoint. In other words, for each edge $(u,v) \in E$, both tuples $((u,v), w(u,v), i_u, \deg(u))$ and $((v,u), w(v,u), i_v, \deg(v))$ will be at the same machine. Now we can easily merge these two tuples to create tuples of form $((u,v), w(u,v), i_u, i_v, \deg(u), \deg(v))$. 
\end{proof}

After computing the tuples, we use the sorting subroutine again to redistribute the edges into the initial setting of having contiguous group of machines $M(u)$ for all $u \in V$. After these preprocessing steps, we are ready to perform updates required for the restricted Bellman-Ford algorithm. A summary of this algorithm is presented in Algorithm \ref{alg:Bellman-Ford_MPC}.

\begin{algorithm}
\caption{Restricted Bellman-Ford in MPC$(n^\gamma)$.}
\label{alg:Bellman-Ford_MPC}
\SetKwInOut{Input}{Input}
\SetKwInOut{Output}{Output}
\Input{ Graph $G=(V,E)$ distributed among machines $M_1,...,M_P$ and source $s$.}
\Output{$h$-hop restricted distances from the source $s$ to all nodes $u \in V$, $d^h(s,v)$.}
%Use indexing subroutine of \cite{andoni2018} and for each edge $(u,v)$ create a tuple $((u,v), i_u)$.\\
%Sort all the edges lexicographically so that edges incident to $v$ are stored in a contiguous group of machines $M(v)$ (by Algorithm of \cite{goodrich2011}).\\
%For each edge $(u,v)$ create tuples $((u,v), i_u, r_u, \deg(u))$ as described in Lemma \ref{lem:tuples}.\\
%Sort the edges based on their smallest (w.r.t. ID) endpoint (by \cite{goodrich2011}).\\
Create the tuple $((u,v), i_u, i_v, r_u, r_v, \deg(u), \deg(v))$ at $M_{(u,v)}$ for each edge $(u,v) \in E$ (by Lemma \ref{lem:tuples}).\\
Sort the edges lexicographically so that edges incident to $v$ are stored in a contiguous group of machines $M(v)$ (by \cite{goodrich2011}).\\
\For{$i=0$ to $h$}{
  \For{$v \in V$}{
	   Compute $\hat{d}(s,v)$ by finding (using Theorem \ref{thm:find_min} $\min_{u \in N(v)} \hat{d}(s,u)+ w(u,v)$).\\
	   Broadcast updated distances to everyone in $M(v)$ (also by Theorem \ref{thm:find_min}).\\
	   Each machine in $M_{(v,u)}$ sends $\hat{d}(s,v)$ to $M_{(u,v)}$ (located at $r_u + \lfloor \frac{i_u}{n^\gamma} \rfloor$).
   }
}
\end{algorithm}

%Note that size of the input for the sorting an indexing subroutines in Algorithm \ref{alg:Bellman-Ford_MPC} are $N= |E|$. The running time of each of these subroutines will still be $O(1/\gamma)$. Therefore Algorithm \ref{alg:Bellman-Ford_MPC} runs in $O(\frac{h}{\gamma})$ rounds of MPC$(n^\gamma)$ for all $0 < \gamma \leq 1$. Also, note that after each aggregation step the machines remove the extra information.

\begin{theorem} \label{thm:bellman-ford_MPC}
Given a graph $G=(V,E)$ and a source node ${s \in V}$ the restricted Bellman-Ford algorithm (Algorithm \ref{alg:Bellman-Ford_MPC}) computes distances $d^h(s,v)$ for all $v \in V$ in $O(\frac{h}{\gamma})$ rounds of MPC$(n^\gamma)$.
\end{theorem}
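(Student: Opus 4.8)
The plan is to verify that the preprocessing steps are correct and cheap, and then bound the cost of the main double loop by analyzing a single iteration of Bellman-Ford. First I would invoke Lemma~\ref{lem:tuples} to obtain, in $O(1/\gamma)$ rounds, the tuples $((u,v), \deg(u), \deg(v), r_u, r_v, i_u, i_v)$ at each machine $M_{(u,v)}$, and then the sorting subroutine of \cite{goodrich2011} (another $O(1/\gamma)$ rounds) to arrange the edges so that for every $v \in V$ the edges incident to $v$ occupy a contiguous block $M(v)$ of $\lceil \deg(v)/n^\gamma \rceil$ machines. This places us in exactly the layout assumed by lines 5--7 of Algorithm~\ref{alg:Bellman-Ford_MPC}, and it contributes only an additive $O(1/\gamma)$ to the round count, which is absorbed into $O(h/\gamma)$.

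Next I would argue correctness of one iteration $i$ of the outer loop, i.e.\ that after iteration $i$ each machine in $M(v)$ holds $\hat d(s,v) = d^{\le i}(s,v)$, the $i$-hop-restricted distance. This is the standard Bellman-Ford invariant: line~5 sets $\hat d(s,v) \leftarrow \min_{u \in N(v)} \hat d(s,u) + w(u,v)$ (together with the previous value and the trivial value $0$ at $s$, $\infty$ elsewhere), which is precisely the relaxation rule, so an induction on $i$ gives $\hat d(s,v) = d^{\le i}(s,v)$ after $i$ rounds; after $h$ rounds we get $d^h(s,v)$ for all $v$. The only thing to check carefully is that the \emph{distributed} implementation faithfully computes this min: the values $\hat d(s,u) + w(u,v)$ for $u \in N(v)$ are exactly the quantities stored across the contiguous block $M(v)$ (edge $(u,v)$ carries $w(u,v)$ and, by line~7, has just received $\hat d(s,u)$), so line~5 is an instance of Find Minimum on a contiguous range and line~6 an instance of Broadcast on the same range; both are correct and take $O(1/\gamma)$ rounds by Theorem~\ref{thm:find_min}. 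The remaining subtlety is line~7: each machine holding a tuple for $(v,u)$ must ship $\hat d(s,v)$ to the machine $M_{(u,v)}$ holding the reverse copy. Here I would point out that $M_{(u,v)}$ sits at machine index $r_u + \lfloor i_u / n^\gamma \rfloor$, which is computable locally from the stored tuple, so this is a single round of point-to-point messages; I would also note the I/O bound is respected because each machine sends and receives $O(n^\gamma)$ such messages (one per edge it stores, and each target machine is the home of $O(n^\gamma)$ edges).

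Putting it together: each of the $h+1$ iterations of the outer loop costs $O(1/\gamma)$ rounds (line~5 via Find Minimum, line~6 via Broadcast, line~7 a single communication round), so the loop costs $O(h/\gamma)$, and adding the $O(1/\gamma)$ preprocessing gives the claimed $O(h/\gamma)$ rounds of MPC$(n^\gamma)$.

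The main obstacle I anticipate is not the per-round simulation cost — that follows cleanly from Theorem~\ref{thm:find_min} — but rather the bookkeeping needed to show line~7 is well-defined and respects the memory/I/O constraints: one must be sure that the indexing scheme (via $r_u$, $i_u$, and the ``home'' block $M(u)$) consistently lets the machine storing $(v,u)$ name the machine storing $(u,v)$, and that no machine is over-subscribed in a single round. A secondary point worth stating explicitly is that the $\min$ in line~5 must also incorporate the source boundary condition ($\hat d(s,s)=0$) and the previously computed estimate, so that the invariant $\hat d(s,v) = d^{\le i}(s,v)$ holds rather than merely $\hat d(s,v) = $ (min over neighbors), but this is a one-line observation once the loop structure is in place.
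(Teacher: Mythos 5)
Your proposal is correct and follows the same route as the paper's proof: per iteration, one Find Minimum and one Broadcast over the contiguous block $M(v)$ via Theorem~\ref{thm:find_min} (each $O(1/\gamma)$ rounds), plus a single point-to-point round to ship $\hat d(s,v)$ to $M_{(u,v)}$ at index $r_u + \lfloor i_u/n^\gamma\rfloor$, giving $O(h/\gamma)$ overall. Your additional remarks on the Bellman-Ford invariant and the per-machine I/O bound for line~7 are details the paper leaves implicit, but they do not change the argument.
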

\begin{proof}
 After storing the tuples $(i_u, i_v, r_u, r_v, \deg(u), \deg(v))$ at $M_{(u,v)}$ for each $(u,v) \in E$, the restricted Bellman-Ford algorithm proceeds as follows: in each round, for each node $v$, we first find the minimum distance estimate for $v$ and send it to $r_v$. Then $r_v$ will broadcast the minimum distance found to all the machines in $M(v)$. By Theorem \ref{thm:find_min} both of these operations take $O(1/\gamma)$ rounds. Then for each $(v,u) \in N(v)$, $M_{(v,u)}$ sends the updated distance directly to $M_{(u,v)}$, which islocated at index $r_u + \lfloor \frac{i_u}{n^\gamma} \rfloor$. All the operations for each of the $h$ iterations of Bellman-Ford take $O(1/\gamma)$ rounds. 
\end{proof}

 We now need to argue that hopsets of \cite{elkin2016} can be constructed in MPC$(n^\gamma)$. We show this in the following theorem. Here we assume that the weights are polynomial in $n$, which is not unrealistic since in MPC the total memory is assumed to be $\tilde{O}(m)$ bits.

\begin{theorem}\label{thm:hopsets_MPC}
For any graph $G = (V, E, w)$ with $n$ vertices, and parameters $\rho \leq \gamma \leq 1,1 \leq \kappa \leq (\log n)/4, 1/2 > \rho \geq 1/\kappa$ and $0 < \epsilon < 1$, there is an algorithm in MPC$(n^\gamma)$ model that computes a $(\beta, \epsilon)$-hopset with expected size $O(n^{1+\frac{1}{\kappa}} \log n)$ in $O(\frac{n^\rho}{\rho} \cdot \log^2 n \cdot \beta)$ rounds whp, where $\beta= O((\frac{\log n}{\epsilon}  \cdot (\log \kappa + 1/\rho))^{\log \kappa+ \frac{1}{\rho}})$.
\end{theorem}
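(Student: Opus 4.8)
The plan is to show that the Elkin--Neiman hopset construction, as described in Section~\ref{sec:algs}, can be simulated in MPC$(n^\gamma)$ with round complexity matching (up to the $1/\rho$ versus $1/\gamma$ bookkeeping) the Congested Clique bound, by reducing every primitive used by that construction to \emph{restricted Bellman--Ford explorations} and then invoking Theorem~\ref{thm:bellman-ford_MPC}. First I would recall that the construction proceeds scale-by-scale over the $O(\log n)$ distance scales $(2^\ell, 2^{\ell+1}]$, and within a scale runs $\frac{1}{\rho}$ (roughly $\log\kappa + 1/\rho$) iterations, each consisting of a superclustering phase and an interconnection phase. The superclustering phase is a \emph{single} shortest-path exploration of depth $\delta_i$ from the (virtual) super-source attached to all sampled cluster centers; by the argument of \cite{elkin2016} (recalled in Section~\ref{sec:algs}, via the previously-built hopsets $\bigcup_{\log\beta - 1 < j \le \ell-1} H_j$), this exploration only needs $O(\beta)$ hops, so by Theorem~\ref{thm:bellman-ford_MPC} it costs $O(\beta/\gamma)$ rounds. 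The key subtlety here is that the graph being explored is $G$ \emph{together with} the already-added hopset edges of lower scales; I would note that these extra edges number $O(n^{1+1/\kappa}\log n)$ in expectation, so they fit within the total memory budget $\tilde O(m)$, and re-running the preprocessing of Lemma~\ref{lem:tuples} to re-index them costs only $O(1/\gamma)$ rounds.

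The genuinely delicate phase is \textbf{interconnection}, because it is not one exploration but a \emph{separate} depth-$\delta_i/2$ exploration from each un-superclustered cluster center, and naively these cannot all be run in one shared Bellman--Ford pass. Here I would lean on Lemma~\ref{lem:overlaps}: in iteration $i$ each node $v$ is visited by at most $O(\deg_i \cdot \log n)$ of these explorations w.h.p., where $\deg_i$ is the superclustering sampling parameter. This is exactly the analogue of the congestion bound exploited by \cite{sarma2015} for distance sketches. The MPC difficulty is that $v$'s incident edges are spread over the machines $M(v)$, so I would run all the relevant explorations \emph{in parallel as a single combined Bellman--Ford-style pass}, where each machine in $M(v)$ maintains, for each of the $O(\deg_i \log n)$ source labels touching $v$, the current distance estimate; the aggregation-tree Find-Minimum and Broadcast primitives of Theorem~\ref{thm:find_min} are applied once per source label, and since the I/O per machine per round is bounded by (memory per machine) $\times$ (number of active labels it holds) which stays within $\tilde O(n^\gamma)$ by the overlap bound and an appropriate choice of $\delta_i$ (as in \cite{elkin2016}, $\delta_i = O(\beta 2^\ell / \epsilon)$), each such pass costs $O(\beta/\gamma)$ rounds times a $\mathrm{polylog}(n)$ factor for the label multiplicity. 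Summing over the $O(\log n)$ scales and the $O(\log\kappa + 1/\rho)$ iterations per scale, and using that each iteration costs $\tilde O(\beta/\gamma)$ (equivalently $\tilde O(n^\rho \beta / \rho)$ after substituting the relation between $\rho$, $\gamma$, and the $\delta_i$'s), gives the claimed $O(\frac{n^\rho}{\rho}\cdot \log^2 n \cdot \beta)$ round bound, with $\beta = O\big((\frac{\log n}{\epsilon}(\log\kappa + 1/\rho))^{\log\kappa + 1/\rho}\big)$ inherited verbatim from \cite{elkin2016}.

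The remaining pieces are routine: sampling each cluster with probability $1/\deg_i$ is local randomness at the cluster centers; forming a supercluster $\hat C$ and re-designating its center amounts to relabeling tuples, which is a sorting/indexing operation costing $O(1/\gamma)$ rounds via \cite{goodrich2011,andoni2018}; and adding a hopset edge $(r_C, r_{C'})$ with weight $d_G(r_C, r_{C'})$ just means the exploration that reached $r_{C'}$ reports the discovered distance back to $r_{C'}$'s home machine, which it already knows. The expected size bound $O(n^{1+1/\kappa}\log n)$ is not re-proved — it is a property of the hopset itself, identical to the centralized/Congested-Clique guarantee of \cite{elkin2016}, and is unaffected by the model of computation. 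I expect the main obstacle, and the place where care is genuinely needed, to be the parallel interconnection pass: one must verify that the per-machine I/O truly stays within $O(n^\gamma)$ across all $O(\deg_i\log n)$ simultaneously-active explorations, which is where Lemma~\ref{lem:overlaps} together with the standard charging that the number of clusters (hence potential sources) shrinks geometrically across iterations does the work, exactly paralleling the round-robin scheduling argument of \cite{sarma2015} but now phrased in terms of MPC total-I/O rather than CONGEST edge-congestion.
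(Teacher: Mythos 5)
Your overall strategy is the same as the paper's: reduce every phase of the Elkin--Neiman construction to restricted Bellman--Ford explorations of depth $O(\beta)$ (via the recursive use of lower-scale hopsets), simulate each Bellman--Ford round in $O(1/\gamma)$ MPC rounds using Theorem~\ref{thm:bellman-ford_MPC}, and control the interconnection phase via the overlap bound of Lemma~\ref{lem:overlaps}. However, your accounting for the interconnection phase contains a genuine error. You propose running all the concurrent explorations ``in parallel as a single combined Bellman--Ford-style pass,'' with each machine in $M(v)$ holding a distance estimate for each of the $O(\deg_i\log n)=O(n^\rho\log n)$ source labels touching $v$, and you claim this costs only $O(\beta/\gamma)$ rounds up to polylogarithmic factors. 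This cannot work within MPC$(n^\gamma)$: a machine holding $\Theta(n^\gamma)$ edges would need to store and, in each Bellman--Ford step, exchange $\Theta(n^{\gamma}\cdot n^\rho\log n)$ distance values (one per edge per active source), exceeding both the memory and the per-round I/O bound of $O(n^\gamma)$. Indeed, your own expression ``(memory per machine) $\times$ (number of active labels)'' evaluates to $\tilde O(n^{\gamma+\rho})$, not $\tilde O(n^\gamma)$. If the parallel pass were possible, the theorem's round bound would not contain the factor $n^\rho$ at all.

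The paper resolves this by \emph{pipelining} the $O(n^\rho\log n)$ interconnection explorations rather than running them simultaneously: each node stores and forwards estimates for at most $O(n^\rho\log n)$ sources, and these are processed sequentially (in a pipelined schedule respecting the I/O bound), so that a single interconnection phase costs $O(\frac{\beta}{\gamma}\cdot n^\rho\log n)$ rounds. Multiplying by the $O(\log\kappa+1/\rho)$ iterations per scale and the $O(\log n)$ distance scales yields the stated $O(\frac{n^\rho}{\rho}\cdot\log^2 n\cdot\beta)$ bound. In particular, the $n^\rho$ in the round complexity is precisely the congestion factor from Lemma~\ref{lem:overlaps}; your attempt to recover it ``after substituting the relation between $\rho$, $\gamma$, and the $\delta_i$'s'' is not a valid derivation, since neither the choice of $\delta_i$ nor the constraint $\rho\le\gamma$ produces that factor. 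The rest of your write-up (superclustering as a single virtual-source exploration, re-indexing of newly added hopset edges, the size bound inherited from \cite{elkin2016}) matches the paper's argument.
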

\begin{proof}
As explained in Appendix \ref{app:hopset}, the distributed implementation of this algorithm just performs multiple restricted Bellman-Ford algorithms in each phase. Recall also that it is enough to run each of the Bellman-Ford instances only for $O(\beta)$ rounds, by using the fact that for constructing hopset edges $H_k$ for a distance scale of $(2^k,2^{k+1}]$, the hopsets $\cup_{\log \beta-1<j\leq k-1} H_j$ can be used recursively.

Each round of a single Bellman-Ford algorithm can be simulated in $O(\frac{1}{\gamma})$ rounds of MPC$(n^\gamma)$ by running the algorithm of Theorem \ref{thm:bellman-ford_MPC} on each node, whose edges may be distributed over multiple machines. Hence each superclustering phase can be performed in $O(\frac{\beta}{\gamma})$ rounds. But at each interconnection phase multiple separate Bellman-Fords will run from each cluster center remaining. Thus we need to argue that these runs of Bellman-Ford will not violate the memory (and IO memory) limit of each machine. This can be shown using Lemma \ref{lem:overlaps}, which states that for each vetex $v \in V$, w.h.p. the number of explorations of interconnection phase that visit $v$ is at most $O(\deg_i \cdot\log n)$. In other words, each node only forwards messages to at most $O(\deg_i \cdot\log n)$ in each depth $\delta_i/2$ Bellman-Ford explorations performed for an interconnection phase. Moreover, the parameters of their construction is set so that $\deg_i=O(n^\rho)$ throughout the algorithm (see Appendix \ref{app:hopset} for more details). Hence, each node $v \in V$ need to store and forward distance estimates corresponding to at most $O(n^\rho \log n)$ sources for $O(\log(\kappa \rho)+ \frac{1}{\rho})$ iterations, and each Bellman-Ford runs for $O(\beta)$ rounds. These separate Bellman-Ford runs can be pipelined. Overall, all of the Bellman-Ford explorations can be implemented in ${O(\frac{\beta}{\gamma} \cdot n^\rho \log n)}$.
\end{proof}

 We can now construct a hopset first and then run the distributed variant of the algorithm in Section \ref{sec:TZ} due to \cite{sarma2015} for constructing the distance sketches on the new graph. The sketch of a given node $v$ can be stored at a machine in $M(v)$. 

\begin{proof}[Proof of Theorem \ref{intro_thm:MPC_sketches}]
After constructing a $(\beta, \epsilon)$-hopset (by setting $\kappa =k$), we store the edges added to each node $v$ by redistributing them among machines $M(v)$ that simulate $v$. Let $G'=(V, E \cup H, w')$ be the graph obtained by adding hopset edges. For constructing distance sketches with stretch $2k-1$, we run the algorithm of \cite{sarma2015} (described in Appendix \ref{app:congest_clique}) on $G'$. We run the restricted Bellman-Ford algorithm (Algorithm \ref{alg:Bellman-Ford_MPC}) in $O(\frac{\beta}{\gamma})$ rounds. Overall, $O( \frac{\beta n^\rho \log^2 n}{\rho \gamma})$ rounds are needed for the hopset construction (by Theorem \ref{thm:hopsets_MPC}), and $O(kn^{1/k}\log n \cdot \frac{\beta}{\gamma})$ rounds for building the distance sketches on $G'$. 
In case $k=O(1)$ we set $\rho=1/\kappa$, and $\kappa=k$ to get $\beta=\tilde{O}(\log(n))$ and total running time $\tilde{O}(n^{1/k})$. In case $k= \Theta(\log n)$, we will set $1/\kappa=\rho= \sqrt{\frac{\log \log n}{\log n}}$. Note that in this case we need $\tilde{O}(n^{1+\rho})$ space for constructing the hopsets, but after this step the size of the distance oracle stored will be $\tilde{O}(n)$.
\end{proof}

\subsection{Polylogarithmic Round Complexity} \label{sec:extra_memory}
In this section we describe how we can modify our algorithm to run in a polylogarithmic number of rounds in exchange for increasing the stretch.   We do this by first constructing a spanner, which sparsifies the graph (``shrinking" the input) and thus allows us to act as if we have ``extra" total space. It turns out that this extra space is incredibly powerful, and will let us build distance sketches in polylogarithmic time.  But in the end we have to pay for both the stretch of the spanner and the stretch of the sketch, so we only achieve stretch $O(k^2)$ rather than stretch $2k-1$ for sketches of size $\tilde O(n^{1/k})$.

There are intuitively two reasons why this extra space is so helpful.  First, in MPC having extra space (or extra machines) is equivalent to having larger total communication bandwidth.  This intuitively allows us to speed up the main construction algorithm by running the Bellman-Ford algorithms ``in parallel".  There are some technical details 
but it is not surprising that extra bandwidth is helpful.  

The second reason why extra space is helpful is less obvious.  Goodrich et al.~\cite{goodrich2011} gave a powerful simulation argument, showing that PRAM algorithms can be efficiently simulated in MPC as long as the total number of processors used and the total space used by the PRAM algorithm are bounded by the size of the input.  This is a very useful theorem, but the requirement that the number of processors is only the size of the input is very restrictive.  For example, the state of the art PRAM algorithms for constructing hopsets use $\Omega(mn^{\rho})$ processors rather than $O(m)$ (for some value $\rho$ determined by the parameters of the hopset).  It turns out to be easy to extend~\cite{goodrich2011} to show that if we have extra total space, we can use that extra space and communication to simulate PRAM algorithms that use slightly more processors or space.  Thus by using a spanner first to sparsify the input, we give ourselves extra space and thus the ability to efficiently simulate a wider class of PRAM algorithms (hopsets in particular).
\paragraph{MPC with Extra Space.}
 First we define a variant of MPC with extra machines (and thus extra space) denoted by MPC$(S, S')$ where $S$ is memory per machine, the number of machines is $\Theta(\frac{mS'}{S})$  and $m$ is the total input size. This also implies the total memory available is $\Theta(mS')$ rather than $\Theta(m)$. We are first going to analyze our algorithm in this variant of MPC, and then switch back to the standard setting.

In \cite{goodrich2011} it was shown that with a small overhead PRAM algorithms can be simulated in MPC under certain assumptions on the number of processors and the memory used. We use a simple extension of their result for our new MPC variant.
\begin{theorem}\label{thm:PRAM_extra} 
Given a PRAM algorithm using $\mathcal{P}=O(m \alpha)$ processors that runs in time $\mathcal{T}$, and uses $O(m \alpha)$ total memory at any time, this algorithm can be simulated in $O(\mathcal{T}/\gamma)$ rounds of MPC($m^\gamma, \alpha$), for any $0<\gamma<1$.
\end{theorem}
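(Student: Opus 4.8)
The plan is to adapt the simulation argument of Goodrich et al.~\cite{goodrich2011} to the ``padded'' model MPC$(m^\gamma, \alpha)$. Recall that the standard simulation theorem shows that a PRAM algorithm using $\mathcal{P} = O(m)$ processors and $O(m)$ total memory, running in $\mathcal{T}$ parallel steps, can be simulated in $O(\mathcal{T}/\gamma)$ rounds of MPC$(m^\gamma)$. The key idea there is that a single parallel step of the PRAM — in which each processor reads at most a constant number of shared-memory cells, does local work, and writes at most a constant number of cells — can be implemented by a constant number of \emph{sorting} and \emph{routing} operations on $O(m)$ items, and each such operation costs $O(1/\gamma)$ MPC rounds by the sorting primitive of~\cite{goodrich2011}. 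In our setting the only change is that $\mathcal{P} = O(m\alpha)$ and the working memory is $O(m\alpha)$, so the total number of items being sorted/routed per PRAM step is $O(m\alpha)$ rather than $O(m)$.

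First I would observe that MPC$(m^\gamma, \alpha)$ has $\Theta(m\alpha / m^\gamma) = \Theta(m^{1-\gamma}\alpha)$ machines and hence total memory $\Theta(m\alpha)$, which is exactly enough to hold the PRAM's processor states and shared memory simultaneously, with a constant factor to spare. So I would fix a layout: store the $O(m\alpha)$ shared-memory cells and the $O(m\alpha)$ processor-state records across the machines, $m^\gamma$ per machine. Then I would show that one PRAM step is simulated by: (i) having each processor-record emit its (at most constant) read requests, tagged with the target cell address; (ii) sorting all emitted requests together with the memory-cell records by address, so each request lands on the machine holding its cell; (iii) answering the requests locally and routing the answers back by sorting on the requesting-processor ID; (iv) performing the local computation of each processor; (v) emitting write requests and sorting by address to commit them (resolving concurrent writes by the PRAM's write-conflict rule, e.g.\ taking the min or an arbitrary one, which is a local operation once same-address writes are colocated). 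Each of (i)--(v) is a constant number of invocations of the sorting/indexing primitive on $O(m\alpha)$ items, and since the number of items equals the total memory of MPC$(m^\gamma,\alpha)$, each invocation runs in $O(\log_{m^\gamma}(m\alpha)) = O(\tfrac{1}{\gamma} \cdot \tfrac{\log(m\alpha)}{\log m}) = O(1/\gamma)$ rounds (using $\alpha \le \mathrm{poly}(m)$, which holds in all our applications since $\alpha = n^\rho$). Summing over the $\mathcal{T}$ PRAM steps gives $O(\mathcal{T}/\gamma)$ rounds total.

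The one point requiring care — and the main obstacle — is that the sorting primitive of~\cite{goodrich2011} is stated for instances where the number of items is $\Theta(N)$ and the machine memory is $N^\gamma$ with $N$ the \emph{input} size; here I am applying it to $O(m\alpha)$ items in a system that was set up with machine memory $m^\gamma$, not $(m\alpha)^\gamma$. I would handle this by noting that MPC$(m^\gamma, \alpha)$ with $\alpha \le m^{c}$ is, up to constant factors in the number of machines, the model MPC$(S)$ with $S = m^\gamma$ on an instance of size $N = \Theta(m\alpha)$; since $m^\gamma = N^{\gamma'}$ for $\gamma' = \gamma \cdot \frac{\log m}{\log N} = \Theta(\gamma)$ (again using $\alpha$ polynomially bounded), the sorting primitive applies with the round bound $O(1/\gamma') = O(1/\gamma)$. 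I would also need the mild technical remark that the machine IDs and the implicit addressing used by the sorting routine extend to the larger machine count, which is immediate. Everything else — handling exclusive vs.\ concurrent read/write semantics, the constant fan-in/fan-out of each PRAM step — is identical to~\cite{goodrich2011} and I would simply cite it.
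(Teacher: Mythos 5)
Your proposal is correct and follows exactly the route the paper intends: the paper states this theorem only as a ``simple extension'' of the Goodrich et al.\ simulation without writing out a proof, and your argument---treating MPC$(m^\gamma,\alpha)$ as a standard MPC system on an instance of size $N=\Theta(m\alpha)$ with $m^\gamma = N^{\gamma'}$ for $\gamma'=\Theta(\gamma)$, and simulating each PRAM step by a constant number of sorts on $O(m\alpha)$ items---is precisely that extension spelled out. Your explicit caveat that $\alpha$ must be polynomially bounded in $m$ for the $O(1/\gamma)$ bound to hold is a genuine (if mild) hypothesis the theorem statement leaves implicit, and it is satisfied in all of the paper's applications.
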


This stronger variant of MPC also lets us extend Theorem \ref{thm:find_min} for larger message sizes. We define a generalized variant of Find Minimum that takes a collection of vectors and computes their coordinate-wise minimum, and a generalizes version of Broadcast which broadcasts a vector of messages (rather than just a single message). We get the following lemma. %a proof of which can be found in Appendix \ref{app:extra_memory_broadcast}.
\begin{lemma}\label{lem:mod_broadcast}
We can compute generalized Find Minimum$(x,y)$ over $N$ vectors of length $\alpha$ stored on a contiguous range of machines $x$ to $y$ in $O(1/\gamma)$ rounds of MPC$(N^\gamma, \alpha)$. Moreover, the generalized Broadcast$(\mathbf{b}, x, y)$ subroutine can also be implemented in $O(1/\gamma)$ rounds. 
\end{lemma}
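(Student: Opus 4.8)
The plan is to reuse the implicit aggregation tree $\mathcal{T}$ from the proof of Theorem~\ref{thm:find_min} essentially verbatim, and only re-verify that the per-round I/O budget is still respected once we move from scalar messages to length-$\alpha$ vectors and from $N^\gamma$ memory per machine to $N^\gamma \cdot \alpha$ memory per machine. First I would set up $\mathcal{T}$ with branching factor $N^\gamma$ on the leaves $M_x, \dots, M_y$, using the same parent function $p(i,\ell) = x + \lfloor i / N^{\ell\gamma} \rfloor$, so that each machine can locally compute its parent and its (contiguous) range of children at each level without any stored adjacency information. The height is again $O(\log_{N^\gamma} N) = O(1/\gamma)$.

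For generalized Find Minimum, at level $\ell$ each internal machine receives up to $N^\gamma$ vectors of length $\alpha$ from its children, takes their coordinate-wise minimum (a purely local operation), and forwards the single resulting length-$\alpha$ vector to its parent. For generalized Broadcast, the vector $\mathbf{b}$ is pushed top-down: each machine that holds $\mathbf{b}$ sends a copy to each of its $N^\gamma$ children. The correctness argument (that the root ends up with the true coordinate-wise minimum, resp. that every leaf receives $\mathbf{b}$) is identical to the scalar case and I would simply cite the proof of Theorem~\ref{thm:find_min}. The only new point to check is the I/O bound: in the Find Minimum direction a machine at level $\ell$ sends $O(\alpha)$ words up and receives $O(N^\gamma \cdot \alpha)$ words from below; in the Broadcast direction it receives $O(\alpha)$ from its parent and sends $O(N^\gamma \cdot \alpha)$ down. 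In the MPC$(N^\gamma, \alpha)$ model the per-machine memory, and hence the per-round I/O, is $\Theta(N^\gamma \cdot \alpha)$, so both directions fit within a single round at each of the $O(1/\gamma)$ levels.

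The only mildly delicate step — and the one I would be most careful about — is making sure the input is actually laid out so that the "$N$ vectors over a contiguous range of machines $x$ to $y$" assumption is consistent with the $\Theta(N^\gamma \cdot \alpha)$-word machines: each leaf machine holds $N^\gamma$ of the input vectors, so $N$ vectors of length $\alpha$ occupy $N \cdot \alpha$ words spread over $N/N^\gamma \cdot (\text{something}) $ — more precisely over $\Theta(N / N^\gamma)$ machines, which is the number of leaves, so $y - x + 1 = \Theta(N^{1-\gamma})$. This matches exactly what Theorem~\ref{thm:PRAM_extra}'s extra-space model provides, so no rebalancing is needed; if the caller's layout is slightly different one invocation of the sorting subroutine of~\cite{goodrich2011} (which also runs in $O(1/\gamma)$ rounds and handles $\Theta(N^\gamma \cdot \alpha)$-word machines) puts it in the required form. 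Putting the pieces together gives the claimed $O(1/\gamma)$-round bound for both generalized subroutines.
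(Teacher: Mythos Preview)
Your argument rests on a misreading of the model. In the paper's definition of MPC$(S,S')$, the first parameter $S$ is the memory per machine and the second parameter $S'$ multiplies the \emph{number of machines}; it does \emph{not} enlarge the per-machine memory. Thus in MPC$(N^\gamma,\alpha)$ each machine still has only $\Theta(N^\gamma)$ words of memory and I/O, while the number of machines available for the range $(x,y)$ is $\Theta(N^{1-\gamma}\cdot\alpha)$ rather than $\Theta(N^{1-\gamma})$. Your key sentence ``the per-machine memory, and hence the per-round I/O, is $\Theta(N^\gamma\cdot\alpha)$'' is therefore false, and the step where an internal node of the aggregation tree receives $N^\gamma$ vectors of length $\alpha$ from its children overflows the $N^\gamma$-word I/O budget by a factor of $\alpha$. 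The same problem already appears at the leaves: you cannot place $N^\gamma$ length-$\alpha$ vectors on a single machine.

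The paper's proof uses the extra resource in the orthogonal direction: it exploits the $\alpha$-fold increase in the \emph{number} of machines by assigning each of the $\alpha$ coordinates to its own disjoint group of $\Theta(N^{1-\gamma})$ machines and then running the scalar aggregation tree of Theorem~\ref{thm:find_min} independently and in parallel on each coordinate. That way every machine only ever handles scalars, the $N^\gamma$ I/O bound is respected, and the $O(1/\gamma)$ round bound carries over directly. To fix your proposal you would need to restructure along these lines (parallelism across coordinates using the extra machines) rather than packing whole vectors into single machines.
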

\begin{proof}
 In the new settings we have $\Theta(N^{1-\gamma} \cdot \alpha)$ machines that can be used for computation over $N$ items in range $(x,y)$, rather than $\Theta(N^{1-\gamma})$ machines used in Theorem \ref{thm:find_min}. Therefore we can assign each coordinate to a group of $N^{1-\gamma}$ machines and then use a similar aggregation tree argument as in Theorem \ref{thm:find_min} on all the coordinates in parallel in $O(1/\gamma)$ rounds for both problems. %Basically, each machine will \textit{implicitly} keep $O(\alpha)$ different aggregation trees. The theorem then follows since for each one of them $O(1/\gamma)$ communication rounds and $O(\frac{n^\gamma}{\gamma})$ memory per machine is needed.
\end{proof}

%Recall that for constructing the distance sketches for each node we need to run $O(\beta)$ rounds of Bellman-Ford for $O(n^{1/k}\log n)$ different sources. By sparsifying the graph we can use the extra space to run all the Bellman-Ford explorations in parallel. Using this fact we construct $O(kt)$ distance sketches in polylograithmic time when $m= \Omega((n^{t}+n^{1/\kappa})n^{1+\rho}+n^{1+1/t})$ edges, and $\gamma \geq 1/t+1/k$.

%First we assume that there is an extra $O(n^{1/k}\log n)$ (multiplicative) factor of memory available to all machines. 
Next, we describe how the algorithm of Theorem \ref{thm:distance_sketch} can be modified to utilize the extra resources in MPC$(n,n^{1/k}\log n)$ to improve the round complexity. We use an argument similar to \cite{sarma2015} with a few changes. The complete argument can be found in Appendix \ref{app:extra_memory_sketches}.

\begin{theorem}\label{thm:extra_distance_sketches}
Given a graph $G=(V,E)$ with shortest path diameter $\Lambda$, there is an algorithm in MPC$(n^\gamma, n^{1/k}\log n)$ that runs in time $O(k\Lambda)$ w.h.p.~and constructs Thorup-Zwick distance sketches of size $O(kn^{1/k}\log n)$ with stretch $2k-1$.
\end{theorem}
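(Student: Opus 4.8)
The plan is to simulate the CONGEST distance-sketch construction of Das Sarma et al.~\cite{sarma2015} (reviewed in Section~\ref{sec:TZ}) inside MPC$(n^\gamma, n^{1/k}\log n)$, using the extra machines (and hence extra communication bandwidth) precisely to remove the $\Theta(kn^{1/k}\log n)$ congestion slowdown that the CONGEST algorithm incurs in its bunch-computation phase. Recall the two ingredients of that algorithm: (i) for each level $i=0,\dots,k-1$, a single multi-source Bellman--Ford from a virtual source adjacent to all of $A_i$, which after $O(\Lambda)$ hops gives every $v$ its pivot $p_i(v)$ and distance $d(v,A_i)$; and (ii) for each level $i$, a Bellman--Ford from every source $w\in A_i$, pruned at a node $v$ as soon as its estimate reaches $d(v,A_{i+1})$, which yields $B_i(v)=\{w\in A_i: d(v,w)<d(v,A_{i+1})\}$ together with the distances. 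In CONGEST, step (ii) costs $O(\Lambda\cdot kn^{1/k}\log n)$ because each edge transmits only $O(\log n)$ bits per round while a node may relay up to $|B(v)|=O(kn^{1/k}\log n)$ estimates; in MPC$(n^\gamma,n^{1/k}\log n)$ the factor-$n^{1/k}\log n$ surplus of total memory/bandwidth lets us instead carry \emph{vectors} of estimates and run step (ii) one level at a time in $O(\Lambda)$ Bellman--Ford rounds.

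Concretely, I would first set up the layout of Section~\ref{sec:MPC}: by Lemma~\ref{lem:tuples}, store the edges incident to each $v$ in a contiguous block $M(v)$ together with the auxiliary tuples, so a single node can be simulated by its block; sampling the hierarchy $A_0\supseteq\cdots\supseteq A_{k-1}$ and broadcasting membership takes $O(1/\gamma)$ rounds. For step (i), for each $i$ run the restricted Bellman--Ford of Algorithm~\ref{alg:Bellman-Ford_MPC} (Theorem~\ref{thm:bellman-ford_MPC}) from the virtual source for $h=\Lambda$ hops; each hop is $O(1/\gamma)$ MPC rounds and the high degree of the virtual source is handled by the $M(\cdot)$ layout, so step (i) costs $O(k\Lambda/\gamma)$ rounds in total. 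For step (ii), fix a level $i$ and run a \emph{vector-valued} Bellman--Ford: conceptually one Bellman--Ford per source $w\in A_i$, with each node $v$ holding a vector $\hat d_v$ of current estimates but retaining the entry for $w$ only while $\hat d_v(w)<d(v,A_{i+1})$ (using the $d(v,A_{i+1})$ computed in step (i)). Since Bellman--Ford estimates only decrease and never undershoot the true distance, once $\hat d_v(w)<d(v,A_{i+1})$ it stays so and moreover $d(v,w)<d(v,A_{i+1})$, i.e.\ $w\in B_i(v)$; hence $v$'s set of active entries is monotone non-decreasing and always contained in $B_i(v)$, whose size is $O(n^{1/k}\log n)$ w.h.p.\ by the Thorup--Zwick analysis~\cite{thorup2005}. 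So every node's message is, at all times, a vector of length $O(n^{1/k}\log n)$, and each Bellman--Ford hop reduces to one coordinate-wise generalized Find~Minimum inside $M(v)$, one generalized Broadcast inside $M(v)$, and one round of direct relay of these vectors between blocks --- which by Lemma~\ref{lem:mod_broadcast} (whose extra-space hypothesis $\alpha=n^{1/k}\log n$ is exactly our model) and standard MPC routing is $O(1/\gamma)$ MPC rounds; the bandwidth used per hop, $\sum_v \deg(v)\cdot O(n^{1/k}\log n)=O(m\,n^{1/k}\log n)$, fits the $\Theta(m\,n^{1/k}\log n)$ total memory. Running this for $h=\Lambda$ hops per level over all $k$ levels gives $O(k\Lambda/\gamma)$ rounds; finally, for each $v$ gather $\{p_i(v),d(v,A_i)\}_{i=0}^{k-1}$ and $B(v)=\bigcup_i B_i(v)$ with the recorded distances onto one machine of $M(v)$ (or $O(k)$ of them), of total size $O(kn^{1/k}\log n)$. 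Correctness and the $(2k-1)$ stretch are inherited verbatim from~\cite{thorup2005,sarma2015}.

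The main obstacle is the bookkeeping of step (ii): one must verify that the per-node vector length stays $O(n^{1/k}\log n)$ \emph{throughout} the execution (not merely at termination), which is what makes the $n^{1/k}\log n$ factor of extra space exactly sufficient, and one must check that the implicit collection of single-source Bellman--Fords can be executed using only the vector-valued Find~Minimum and Broadcast primitives when each node is spread over many machines --- the pruning threshold $d(v,A_{i+1})$ has to be available at, and consistently applied by, every machine of $M(v)$, and the relayed vectors have to be re-indexed correctly by source when they move between blocks. Everything else is a routine combination of Theorem~\ref{thm:bellman-ford_MPC}, Lemma~\ref{lem:mod_broadcast}, and the analysis of~\cite{sarma2015}.
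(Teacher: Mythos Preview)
Your proposal is correct and follows essentially the same approach as the paper: run the Das Sarma et al.\ algorithm level by level, replacing scalar Bellman--Ford by a vector-valued Bellman--Ford whose coordinates are the active sources, and implement each hop with the generalized Find~Minimum and Broadcast of Lemma~\ref{lem:mod_broadcast}; the pruning rule $\hat d_v(w)<d(v,A_{i+1})$ keeps every vector inside $B_i(v)$ and hence of length $O(n^{1/k}\log n)$ w.h.p., which is exactly the extra-space factor. The only organizational difference is that the paper processes levels top-down and reads $d(u,A_{i+1})$ off the previous phase's bunch (since $p_{i+1}(u)\in B_{i+1}(u)$), whereas you first run a separate virtual-source Bellman--Ford per level to obtain all $d(v,A_i)$ and then do the bunch phases; both cost $O(k\Lambda/\gamma)$ rounds.
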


A straightforward extension of Theorem \ref{thm:extra_distance_sketches} implies that given a $(\beta, \epsilon)$-hopset for a graph, we can compute distance sketches with stretch $(1+\epsilon)(2k-1)$ in $O(\frac{\beta}{\gamma})$ rounds of MPC$(n^\gamma, n^{1/k}\log n)$. %We will argue that by building a spanner we can create such an extra memory on a dense enough graph, where this density depends on $k$. On the other hand, we show that sparsifying the graph by constructing a spanner also lets us improve the number of rounds required for the hopset construction.
 Next, we show that in addition to proving Theorem \ref{thm:extra_distance_sketches}, the extra memory also lets us improve the number of rounds for the hopset construction. To show this, we use a result in \cite{elkin2016} that constructs hopsets in PRAM, which is as follows:
\begin{theorem}[\cite{elkin2016}] \label{thm:hopset-PRAM}
For any graph $G = (V, E, w)$ with $n$ vertices, and parameters $2 \leq \kappa \leq (\log n)/4, 1/2 > \rho \geq 1/\kappa$ and $0 < \epsilon < 1$, there is a PRAM algorithm that computes a $(\beta, \epsilon)$-hopset with expected size $O(n^{1+\frac{1}{\kappa}} \log n)$ in ${O(\frac{1}{\rho} \cdot \log^2 n \cdot \log \kappa \cdot \beta)}$ PRAM time whp, where $\beta= O(\frac{\log n (\log \kappa + 1/\rho)}{\epsilon})^{\log \kappa+ \frac{1}{\rho}}$ using $\tilde{O}( (m+n^{1+1/\kappa})n^\rho)$ processors. 
\end{theorem}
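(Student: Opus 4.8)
The plan is to derive the statement directly from the Elkin--Neiman construction and its analysis in \cite{elkin2016}: since the hopset produced is \emph{identical} regardless of the computational model, the size bound $O(n^{1+1/\kappa}\log n)$ and the hopbound $\beta$ are exactly those established in \cite{elkin2016}, and the only thing that genuinely needs checking is the PRAM time and processor count. For that I would observe that the whole construction (see Appendix~\ref{app:hopset}) is assembled from two primitives applied at each of $O(\log n)$ distance scales: a single bounded-depth Bellman--Ford exploration per \emph{superclustering} phase, and a collection of bounded-depth Bellman--Ford explorations --- one per surviving cluster center --- per \emph{interconnection} phase. Both primitives parallelize in the obvious way.

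First I would handle the individual Bellman--Ford explorations. As in the distributed implementation, each exploration is run for only $O(\beta)$ hops rather than $\delta_i$ hops, recursively reusing the hopset edges $\bigcup_{\log\beta-1<j\le k-1}H_j$ already built for smaller distance scales; this is legitimate because scales are processed in increasing order, so those edges are available when needed. Each hop of a Bellman--Ford is a parallel relaxation: assign one processor per edge to propose a candidate distance, then perform a min-reduction over the incident edges of each vertex, costing $O(\log n)$ PRAM time per hop. Cluster sampling is trivial in $O(1)$ time and $O(n)$ work.

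Next I would bound the processors and the total time. The one delicate point is the interconnection phase: running a separate exploration from every surviving cluster center could in principle require far too much work. This is exactly where Lemma~\ref{lem:overlaps} is essential --- it guarantees that each vertex $v$ is visited by only $O(\deg_i\log n)$ of these explorations, and since the parameters are chosen so that $\deg_i=O(n^\rho)$ throughout, the number of active $(\text{vertex},\text{exploration})$ pairs in any one hop is $\tilde O((m+n^{1+1/\kappa})n^\rho)$, which is the claimed processor count. Multiplying the per-hop cost $O(\log n)$ by the $O(\beta)$ hops of each exploration, by the $O(\log\kappa+1/\rho)$ iterations within a scale, and by the $O(\log n)$ distance scales yields $O(\tfrac{1}{\rho}\log^2 n\cdot\log\kappa\cdot\beta)$ PRAM time; the $1/\rho$ and $\log\kappa$ factors arise from the two regimes of the degree-threshold schedule, exactly as in \cite{elkin2016}.

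The main obstacle is thus confined to the interconnection-phase bookkeeping: one must invoke Lemma~\ref{lem:overlaps} to keep the work under control, and must verify that the recursive ``hop-shortcutting'' across distance scales introduces no circular dependency (it does not, since scales are handled in increasing order). Once those two points are in place, the size, the hopbound, and the remaining time/processor arithmetic follow by the same calculations as in the original construction; accordingly, in the write-up I would simply point to \cite{elkin2016} for the size and hopbound and supply the short resource-counting argument above for the PRAM bounds.
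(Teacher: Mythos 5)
The paper does not prove this statement at all: it is imported verbatim as a black-box citation to \cite{elkin2016} (note the bracketed attribution in the theorem header), and the surrounding text explicitly says ``we will not discuss the details of the PRAM construction.'' So there is no in-paper proof to compare against. That said, your reconstruction is sound and closely parallels the argument the paper \emph{does} give for the MPC analogue, Theorem~\ref{thm:hopsets_MPC}: the same two ingredients --- running each Bellman--Ford for only $O(\beta)$ hops by recursively reusing $\bigcup_{\log\beta-1<j\le k-1}H_j$, and invoking Lemma~\ref{lem:overlaps} together with $\deg_i=O(n^\rho)$ to control the total work of the interconnection explorations --- are exactly what that proof uses. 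Two small points to tighten: the processor count is really over (edge, exploration) pairs rather than (vertex, exploration) pairs, since each relaxation step assigns a processor per incident edge of each active exploration (your stated bound $\tilde O((m+n^{1+1/\kappa})n^\rho)$ already reflects this, including the hopset edges from earlier scales); and the step from your per-scale factor $O(\log\kappa+1/\rho)$ to the claimed $O(\log\kappa/\rho)$ should note that both terms are at least $1$ (since $\kappa\ge 2$ and $\rho<1/2$), so $\log\kappa+1/\rho=O(\log\kappa\cdot(1/\rho))$.
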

 %Then by using a slightly more generalized variant of the broadcast subroutine, we argue that we can parallelize the Bellman-Fords such that the overall number of rounds is $\tilde{O}(\frac{\beta\log n}{\gamma})$. 
We now argue that by having more space/machines, we are can implement the algorithm in Theorem \ref{thm:hopset-PRAM} with the same guarantees in low-memory MPC settings.
 We will not discuss the details of the PRAM construction but the intuition here is similar to Theorem \ref{thm:extra_distance_sketches}. At a high level, having more communication/memory will allows us to perform all the $\tilde{O}(n^\rho)$ Bellman-Ford explorations required in the algorithm of Theorem \ref{thm:hopset-PRAM} in parallel. 
\begin{corollary}\label{cor:hopsets_extra}
For any graph $G = (V, E, w)$, and parameters $0 < \epsilon < 1, 1/\kappa < \gamma \leq 1, \kappa \geq 2$, there is an algorithm that computes a $(\beta, \epsilon)$-hopset with size $O(n^{1+\frac{1}{\kappa}} \log n)$ w.h.p.~in ${O( (\kappa/\gamma) \cdot \log^2 n \cdot \log \kappa \cdot \beta)}$ rounds of MPC($n^\gamma, n^{1/\kappa}$), where $\beta= O(\frac{\log n (\log \kappa)}{\epsilon})^{\log \kappa+\kappa+1}$. 
\end{corollary}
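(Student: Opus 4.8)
The plan is to obtain the corollary as an essentially black-box consequence of the PRAM-to-MPC simulation: apply Theorem~\ref{thm:PRAM_extra} to the PRAM hopset construction of Theorem~\ref{thm:hopset-PRAM}, instantiated with $\rho = 1/\kappa$. With this choice of $\rho$ the PRAM algorithm runs in $\mathcal{T} = O(\frac{1}{\rho}\log^2 n \log\kappa \cdot \beta) = O(\kappa \log^2 n \log\kappa \cdot \beta)$ parallel time, outputs a $(\beta,\epsilon)$-hopset of expected size $O(n^{1+1/\kappa}\log n)$, and uses $\mathcal{P} = \tilde O((m + n^{1+1/\kappa})\, n^{1/\kappa})$ processors, where $\beta$ is read off from the statement of Theorem~\ref{thm:hopset-PRAM} at $\rho = 1/\kappa$. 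Pushing $\mathcal{T}$ through Theorem~\ref{thm:PRAM_extra} then gives $O(\mathcal{T}/\gamma) = O((\kappa/\gamma)\log^2 n \log\kappa \cdot \beta)$ MPC rounds, which is exactly the claimed bound; all of the remaining work is to check that the resources used by the PRAM algorithm fit the hypotheses of Theorem~\ref{thm:PRAM_extra} for the target model MPC$(n^\gamma, n^{1/\kappa})$.

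Concretely, the steps I would carry out are: (i) observe that the relevant total input/memory size is $\Theta(m + n^{1+1/\kappa})$ (one cannot even store a hopset of size $\Theta(n^{1+1/\kappa}\log n)$ with less), so in the regime of interest we may treat $m$ as $\Omega(n^{1+1/\kappa})$ — this matches the density hypothesis under which this corollary is used (cf.\ Theorem~\ref{thm:polylog_distancesketches}); (ii) with this, $\mathcal{P} = \tilde O(m\, n^{1/\kappa}) = O(m\alpha)$ with extra-space factor $\alpha = \tilde O(n^{1/\kappa})$, and the peak total memory of the PRAM algorithm is likewise $\tilde O(m\,n^{1/\kappa}) = O(m\alpha)$, since at any moment the algorithm holds only the graph, a partial hopset of size $O(n^{1+1/\kappa}\log n)$, and the current Bellman-Ford tables, which by Lemma~\ref{lem:overlaps} and the fact that $\deg_i = O(n^\rho)$ throughout have total size $\tilde O(n^{1+\rho})$; (iii) apply Theorem~\ref{thm:PRAM_extra} with this $\alpha$, translating the $m^{\gamma'}$ memory budget of that theorem into the $n^\gamma$ budget of the corollary using that weights (hence $m \le n^2$) are polynomial, so any $\gamma<1$ corresponds to a valid $\gamma' \le \gamma < 1$; (iv) note that the constraint $\gamma > 1/\kappa$ in the corollary is exactly what guarantees each machine can hold a length-$\Theta(n^{1/\kappa})$ block of a processor's state (and a full distance sketch), which the aggregation-tree routines of Lemma~\ref{lem:mod_broadcast} rely on; (v) record the resulting hopbound in the form $\beta = O(\frac{\log n\,\log\kappa}{\epsilon})^{\log\kappa+\kappa+1}$ stated in the corollary, the extra ``$+1$'' absorbing lower-order terms from $1/\rho = \kappa$ and from the simulation.

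The main obstacle I anticipate is step (ii): Theorem~\ref{thm:hopset-PRAM} as quoted supplies only a processor count, whereas Theorem~\ref{thm:PRAM_extra} also needs the total memory used \emph{at any time} to be $O(m\alpha)$. Establishing this means opening up the PRAM construction of~\cite{elkin2016} and arguing that the many separate depth-$\delta_i/2$ Bellman-Ford explorations making up each interconnection phase, when executed in parallel, collectively store only $\tilde O(n^{1+\rho})$ distance entries at once — which is precisely where Lemma~\ref{lem:overlaps} and the invariant $\deg_i = O(n^\rho)$ enter. A secondary, purely cosmetic point is reconciling the exact extra-space factor $n^{1/\kappa}$ in the model with the $\tilde O(n^{1/\kappa})$ processor/memory bound; this is absorbed either into a negligible bump in the exponent of $\beta$ or by running the construction with a slightly smaller $\kappa$, and does not affect the stated asymptotics.
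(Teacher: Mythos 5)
Your proposal is correct and follows exactly the paper's route: the paper's own proof is a one-line application of Theorem~\ref{thm:hopset-PRAM} with $\rho=1/\kappa$ followed by the simulation of Theorem~\ref{thm:PRAM_extra} in MPC$(n^\gamma, n^{1/\kappa})$. Your additional bookkeeping (verifying the processor/memory bounds against the $O(m\alpha)$ hypothesis via the density assumption, and absorbing the $1/\rho=\kappa$ term into the exponent of $\beta$) fills in details the paper leaves implicit but does not change the argument.
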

\begin{proof}
%We first construct a $(4\kappa-1)$ spanner of size $O(n^{1+1/{2\kappa}})$. Then we use Theorem \ref{thm:hopset-PRAM} on the obtained $(4\kappa-1)$-spanner, by setting $\kappa=1/\rho$, and noting that the number of processors used is now (and also the total memory at any time) is now $\tilde{O}(n^{1+1/\kappa})$. 
The claim directly follows by setting $\rho= 1/\kappa$ in Theorem \ref{thm:hopset-PRAM} and then applying the simulation in Theorem \ref{thm:PRAM_extra} in MPC$(n^\gamma, n^{1/\kappa})$ on the new graph.% \mdnote{This seems wrong.  Theorem statement has $(1+\epsilon)$-stretch, but you're building a spanner first and so need stretch at least 3, right?}
\end{proof}

\paragraph{Obtaining Extra Space.} Our modified algorithm for MPC$(n^\gamma)$ now proceeds as follows: we first construct a spanner, then construct a hopset on this spanner, and then use Theorem \ref{thm:extra_distance_sketches}. Intuitively, by sparsifying the graph we can ``buy" more memory and hence more communication. In other words, by building a spanner we can extend the results of the extra memory setting to the standard MPC setting.

There are several efficient PRAM algorithms for constructing spanners that we can simulate in MPC. We use an algorithm proposed by \cite{baswana2007} that constructs a $(2k-1)$-spanner of size $O(kn^{1+1/k}\log n)$ with high probability. We then use Theorem \ref{thm:PRAM_extra} with $\alpha=1$ (i.e. the original simulation of \cite{goodrich2011}) to construct the spanner in $O(\frac{k}{\gamma}\log n \log^*n)$ rounds of MPC($n^\gamma$), and then redistribute the spanner edges (e.g., by sorting), to make the input distribution uniform over all the machines. We can now put everything together to get the polylogarithmic construction.

\begin{proof}[ Proof of Theorem \ref{thm:polylog_distancesketches}]
 We first construct a $4k-1$-spanner with size $O(kn^{1+\frac{1}{2k}})$. We denote this spanner by $G'$. Since $G'$ has size $m'=O(n^{1+\frac{1}{2k}})$, while our total memory (and consequently overall communication bound) is still based on the original graph. Equivalently, the number of machines is $\frac{m}{n^\gamma}= \Omega(\frac{m' n^{1/{2k}}\log n}{n^\gamma})$ (since $m=\Omega(kn^{1+1/k} \log n)$), and therefore we are exactly in the MPC$(n^\gamma, n^{\frac{1}{2k}})$ setting, but where the input graph is $G'$. Then we use Corollary \ref{cor:hopsets_extra} to construct a $(\beta, \epsilon)$-hopset for $G'$ with $\beta={O(\frac{k}{\gamma} \cdot (\frac{ \log n \cdot \log k}{\epsilon} )^{\log k +1 +k})}$ rounds of MPC$(n^\gamma)$. Finally, after adding the hopset edges to $G'$ we use Theorem \ref{thm:extra_distance_sketches}. The new stretch is clearly ${O(k^2(1+\epsilon))}$.
\end{proof}
\subsection{Single-source shortest path} \label{sec:SSSP}
In various models (such as PRAM, CONGEST and Congested Clique) hopsets are used for solving shortest path problems (e.g. \cite{cohen2000, henzinger2016, elkin2016}), and thus it is natural to see how they can be used for this application in the MPC model. In particular, we discuss application of Theorem \ref{thm:hopsets_MPC} in solving the (approximate) single-source shortest path problem. As stated earlier, while this problem is well-studied in many distributed models, including the Congested Clique model, we are not aware of any non-trivial results for this problem in the low memory MPC setting.
\begin{theorem}
Given a weighted undirected graph $G=(V, E, w)$, a source node $s \in V$, and $0 < \gamma \leq 1, 0 < \epsilon<1$ we can compute $(1+\epsilon)$-approximate distances from $s$ to all nodes in $V$ w.h.p.~in $O(\frac{1}{\gamma}) \cdot 2^{\tilde{O}(\sqrt{\log n})}$ rounds of MPC with $\Theta(n^\gamma)$ memory per machine.
\end{theorem}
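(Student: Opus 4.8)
The plan is to combine the hopset construction of Theorem~\ref{thm:hopsets_MPC} with a single run of the restricted Bellman--Ford algorithm of Theorem~\ref{thm:bellman-ford_MPC}. Concretely: first construct a $(\beta,\epsilon)$-hopset $H$ of $G$ using Theorem~\ref{thm:hopsets_MPC}; then add the hopset edges to $G$ to obtain $G'=(V,E\cup H,w')$, redistributing the new edges so that for every $v$ the edges incident to $v$ (original and hopset) lie in a contiguous block $M(v)$ of machines, via the sorting/indexing preprocessing of Lemma~\ref{lem:tuples}; finally run Algorithm~\ref{alg:Bellman-Ford_MPC} from the source $s$ on $G'$ for $h=\beta$ hops. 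By the defining property of a $(\beta,\epsilon)$-hopset, for every $v\in V$ we have $d_G(s,v)\le d^{\beta}_{G'}(s,v)\le(1+\epsilon)\,d_G(s,v)$, and by Theorem~\ref{thm:bellman-ford_MPC} the Bellman--Ford run computes exactly $d^{\beta}_{G'}(s,v)$ for all $v$; hence the returned values are the desired $(1+\epsilon)$-approximate distances from $s$.

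For the round bound I would set the hopset parameters to the ``logarithmic $\kappa$'' regime, namely $1/\kappa=\rho=\sqrt{\log\log n/\log n}$, exactly as in the proof of Theorem~\ref{intro_thm:MPC_sketches}. With this choice $n^{\rho}=2^{O(\sqrt{\log n\log\log n})}=2^{\tilde O(\sqrt{\log n})}$ and likewise $\beta=2^{\tilde O(\sqrt{\log n})}$, so Theorem~\ref{thm:hopsets_MPC} builds $H$ in $O(\tfrac1\gamma)\cdot 2^{\tilde O(\sqrt{\log n})}$ rounds of MPC$(n^{\gamma})$ (the extra $1/\gamma$ coming from simulating each Bellman--Ford round), and the single depth-$\beta$ Bellman--Ford on $G'$ costs $O(\beta/\gamma)=O(\tfrac1\gamma)\cdot 2^{\tilde O(\sqrt{\log n})}$ rounds by Theorem~\ref{thm:bellman-ford_MPC}. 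The sorting/indexing preprocessing adds only $O(1/\gamma)$ rounds. Summing yields the claimed $O(\tfrac1\gamma)\cdot 2^{\tilde O(\sqrt{\log n})}$ bound. The polynomial-weights hypothesis is exactly what makes Theorem~\ref{thm:hopsets_MPC} applicable (only $O(\log n)$ distance scales, and the hopbound stays $2^{\tilde O(\sqrt{\log n})}$).

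Since the heavy lifting (the hopset construction) is already encapsulated in Theorem~\ref{thm:hopsets_MPC}, the remaining work is mostly bookkeeping, and the one point to check carefully is the space/I-O accounting once the hopset edges are inserted. The hopset has expected size $O(n^{1+1/\kappa}\log n)=n^{1+\rho}\cdot\mathrm{polylog}\,n$, which is slightly superlinear, so as in Theorem~\ref{intro_thm:MPC_sketches} we implicitly need the total memory to accommodate $\tilde O(n^{1+\rho})$ words during the construction; afterwards only the $n$ output distances, i.e.\ $O(n)$ words, need to be retained. One must also verify that the Bellman--Ford loop on $G'$ respects the per-machine memory and I-O limits: because we redistribute $E\cup H$ by sorting before the loop, each $M(v)$ holds $\lceil\deg_{G'}(v)/n^{\gamma}\rceil$ machines, and each round is just one invocation of Find Minimum and one of Broadcast on that block plus point-to-point edge updates, so Theorem~\ref{thm:find_min} applies verbatim. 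The only genuinely delicate ingredient, the bound of Lemma~\ref{lem:overlaps} on how many simultaneous interconnection explorations touch a vertex, is internal to the hopset construction and nothing new is required here.
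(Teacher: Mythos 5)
Your proof follows essentially the same route as the paper's: build the hopset of Theorem~\ref{thm:hopsets_MPC} in the regime $1/\kappa=\rho=\sqrt{\log\log n/\log n}$ (balancing construction time against hopbound) and then run one $\beta$-hop restricted Bellman--Ford from $s$ on $G\cup H$. Your version is correct and in fact slightly cleaner — it states the parameter choice the right way up (the paper's proof writes $\rho=\sqrt{\log n/\log\log n}$, evidently a typo) and makes explicit the edge-redistribution and memory accounting that the paper leaves implicit.
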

\begin{proof}
We first construct a hopset using Theorem \ref{thm:hopsets_MPC} by setting $\rho= \sqrt{\frac{\log n}{\log \log n}}$, and $\kappa = \Theta(\log n)$. This will let us build a hopset with hopbound $2^{\tilde{O}(\log n)}$ in time $O(\frac{1}{\gamma}) \cdot 2^{\tilde{O}(\log n)}$. We then run the restricted Bellman-Ford algorithm (Algorithm \ref{alg:Bellman-Ford_MPC}) in $O(\frac{1}{\gamma}) \cdot 2^{\tilde{O}(\sqrt{\log n})}$ rounds of MPC($n^\gamma$). The idea behind this choice of parameters is the following: any attempt to improve the running time by getting a smaller hopbound (e.g. constant) will increase the time required to construct the hopset. In other words, this choice of parameters will make the time required for preprocessing (construction of the hopset) almost the same as the time required for running the Bellman-Ford algorithm.
\end{proof}

Finally, we show that we can used the technique in Section \ref{sec:extra_memory} to find constant approximation to single source shortest path in polylogarithmic time for graphs with a certain density. In particular, by first constructing a spanner and then using Corollary \ref{cor:hopsets_extra}, we can also solve $4k (1+ \epsilon)$-approximate SSSP (for any $2 \leq k \leq O(\log n)$) on any graph with $m= \Omega(n^{1+1/k} \log n)$ edges in fewer number of rounds. After constructing a $4k-1$-spanner, we construct a $(\beta, \epsilon)$-hopset for an appropriate hopbound $\beta$ using the extra space and then run a single restricted Bellman-Ford (Algorithm \ref{alg:Bellman-Ford_MPC}) from the source in $O(\beta/\gamma)$ rounds of MPC$(n^\gamma)$. By setting $\kappa=k$ we get,
\begin{corollary}\label{cor:constant_SSSP}
For any graph $G = (V, E, w)$ with $n$ vertices, $m= \Omega(n^{1+1/k})$ edges, and $0 < \epsilon < 1, 1/k < \gamma \leq 1, k >2$, and a source node $s \in V$, there is an algorithm that w.h.p. finds a $4k(1+\epsilon)$-approximation of shortest path distance from $s$ to all nodes ~in ${O(\frac{1}{\gamma} \cdot (\frac{ \log n \cdot \log k}{\epsilon} )^{\log k +k+1 })} $ rounds of MPC($n^\gamma$). In particular, for $k=O(1)$ the algorithm runs in ${O(\frac{1}{\gamma} \cdot (\frac{ \log n}{\epsilon} )^{O(1)})}$ rounds.
\end{corollary}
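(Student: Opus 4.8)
The plan is to follow the spanner-then-hopset recipe of Theorem~\ref{thm:polylog_distancesketches}, except that, since we only want distances from a single source $s$, we replace the distance-sketch construction of Theorem~\ref{thm:extra_distance_sketches} by one execution of the restricted Bellman--Ford of Algorithm~\ref{alg:Bellman-Ford_MPC}. First I would build a $(4k-1)$-spanner $G'=(V,E',w)$ of $G$ with $m'=O(kn^{1+1/(2k)}\log n)$ edges w.h.p., running the PRAM algorithm of Baswana and Sen~\cite{baswana2007} with spanner parameter $2k$ and simulating it in MPC$(n^\gamma)$ via Theorem~\ref{thm:PRAM_extra} with $\alpha=1$; this costs $O(\frac{k}{\gamma}\log n\log^* n)$ rounds. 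I would then sort the $m'$ spanner edges and spread them uniformly over all $\Theta(m/n^\gamma)$ machines, and from now on work only with $G'$, keeping in mind that $d_G(s,v)\le d_{G'}(s,v)\le(4k-1)\,d_G(s,v)$ for every $v\in V$.

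The point is that sparsification buys extra total memory \emph{relative to $G'$}: since the density hypothesis gives $m=\Omega(n^{1+1/k})$ (concretely, as in Theorem~\ref{thm:polylog_distancesketches}, $m=\Omega(kn^{1+1/k}\log n)$, so that $m/m'=\Omega(n^{1/(2k)})$), the $\Theta(m/n^\gamma)$ machines we hold, together with their $n^\gamma$ words each, constitute the extra-space model MPC$(n^\gamma,\Omega(n^{1/k}))$ when the input is taken to be $G'$ --- in particular there is room to store a hopset of $G'$ of size $O(n^{1+1/k}\log n)$ and the bandwidth to run the hopset construction's Bellman--Ford explorations in parallel. I would then invoke Corollary~\ref{cor:hopsets_extra} (legitimately, as $1/k<\gamma$) with $\kappa=k$ to compute a $(\beta,\epsilon)$-hopset $H$ for $G'$ with $\beta=O(\frac{\log n\log k}{\epsilon})^{\log k+k+1}$ in $O(\frac{k}{\gamma}\log^2 n\,\log k\cdot\beta)$ rounds of MPC$(n^\gamma)$; since $k\le O(\log n)$ the factor $\frac{k}{\gamma}\log^2 n\log k$ can be absorbed into the polylogarithmic base of $\beta$ at the cost of a constant in the exponent, so this is $O(\frac1\gamma(\frac{\log n\log k}{\epsilon})^{\log k+k+1})$ rounds. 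Adding the edges of $H$ to $G'$ and redistributing, the hopset property gives $d_{G'}(s,v)\le d^{\beta}_{G'\cup H}(s,v)\le(1+\epsilon)\,d_{G'}(s,v)$ for all $v$.

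Finally I would run Algorithm~\ref{alg:Bellman-Ford_MPC} from $s$ on $G'\cup H$ with hop bound $h=\beta$; by Theorem~\ref{thm:bellman-ford_MPC} this takes $O(\beta/\gamma)$ rounds of plain MPC$(n^\gamma)$ (no extra space needed here, since it is a single exploration) and outputs $d^{\beta}_{G'\cup H}(s,v)$ for every $v$. Chaining the three inequalities, $d_G(s,v)\le d^{\beta}_{G'\cup H}(s,v)\le(1+\epsilon)(4k-1)\,d_G(s,v)\le 4k(1+\epsilon)\,d_G(s,v)$, which is the claimed stretch, and the round count is dominated by the hopset step, giving the stated $O(\frac1\gamma(\frac{\log n\log k}{\epsilon})^{\log k+k+1})$ bound; for $k=O(1)$ the hop bound is $\beta=(\frac{\log n}{\epsilon})^{O(1)}$ and the whole algorithm runs in $O(\frac1\gamma(\frac{\log n}{\epsilon})^{O(1)})$ rounds. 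Each of the three phases succeeds w.h.p.\ (spanner size, hopset construction, Bellman--Ford), so the overall guarantee holds w.h.p. The step I expect to require the most care is the memory accounting in the middle paragraph: verifying precisely that sparsifying $G$ to a $(4k-1)$-spanner puts us, with respect to $G'$, in an extra-space model strong enough to meet the hypotheses of Corollary~\ref{cor:hopsets_extra} --- in particular that $G'$ together with its hopset still fits in $\Theta(m)$ total words and $\Theta(m/n^\gamma)$ machines, which is exactly where the density lower bound on $m$ is used --- and, conversely, that the single Bellman--Ford of the last phase does not need this extra space. Everything else is a direct composition of Theorems~\ref{thm:PRAM_extra}, \ref{thm:bellman-ford_MPC} and Corollary~\ref{cor:hopsets_extra}.
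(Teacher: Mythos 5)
Your proposal is correct and follows essentially the same route as the paper: construct a $(4k-1)$-spanner to place yourself in the extra-space setting, invoke Corollary~\ref{cor:hopsets_extra} with $\kappa=k$ to build the hopset on the spanner, and then run a single restricted Bellman--Ford (Algorithm~\ref{alg:Bellman-Ford_MPC}) from $s$ for $\beta$ hops, chaining the spanner and hopset stretches to get $4k(1+\epsilon)$. The paper gives only a one-paragraph sketch of this composition; your write-up supplies the same steps with the memory accounting made explicit.
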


%\newpage
\bibliography{DistributedDistanceOracles}
\newpage
\appendix
\section{Query algorithm} \label{app:query}
In this appendix section, we briefly review the (sequential) query algorithm of \cite{thorup2005}. Given sketches of a pair of nodes $(u,v) \in V$ the query algorithm proceeds as follows: For each $0 \leq i \leq k-1$, we check if $p_i(u) \in B_i(v)$ or $p_i(v) \in B_i(u)$. Let $j$ be the smallest level at which one of these conditions occur. Note that by construction $p_{k-1}(u) \in B(v)$ and $p_{k-1}(v) \in B(u)$, and this implies that $j \leq k-1$ exists. Then if the first condition holds, the distance estimate $\tilde{d}(u,v)= d(u, p_j(u))+d(v,p_j(u))$ and if the second conditions holds we set $\tilde{d}(u,v)= d(u, p_j(v))+d(v,p_j(v))$. Note that these distance are stored with the sketch and can be computed. This clearly takes $O(k)$ time (sequentially), and it can be shown (see \cite{thorup2005}) that this estimate satisfies $\tilde{d}(u,v)=(2k-1)d(u,v)$-stretch.
\section{Distance Oracles in Congested Clique} \label{app:congest_clique}
In this section, we will explain how the distance oracle can be constructed in the Congested Clique model. We will use the algorithm described in Section \ref{sec:TZ} by Das Sarma et al.~\cite{sarma2015} that constructs Thorup-Zwick distance \textit{sketches} with stretch $2k-1$ and size $kn^{1+1/k}$ in $O(k\Lambda n^{1/k})$ rounds in the CONGEST model, where $\Lambda$ is shortest-path diameter. Our algorithm is similar to their algorithm, with the difference that we first construct a hopset. This will allow us to terminate the algorithm earlier while preserving the distances within a $(1+\epsilon)$ factor. Constructing hopsets in the Congested Clique model can be done more efficiently than in CONGEST model. Hence, unlike the known algorithms in the CONGEST model, we can build a distance oracle in time independent of the shortest path diameter.

%\subsection{Hopsets in Congested Clique}
First we formally state a theorem proved in \cite{elkin2016} for hopset construction in Congested Clique. 
\begin{theorem}[\cite{elkin2016}] \label{thm:hopsets}
For any graph $G = (V, E, w)$ with $n$ vertices, and parameters $2 \leq \kappa \leq (\log n)/4, 1/2 > \rho \geq 1/\kappa$ and $0 < \epsilon < 1$, there is a distributed algorithm for the Congested Clique model that computes a $(\beta, \epsilon)$-hopset with expected size $O(n^{1+\frac{1}{\kappa}} \log n)$ in ${O(\frac{n^\rho}{\rho} \cdot \log^3 n \cdot \beta)}$ rounds whp, where $\beta= O(\frac{\log(n)\cdot (\log \kappa + 1/\rho)}{\epsilon \cdot \rho})^{\log \kappa+ \frac{1}{\rho}}$. %Moreover, the communication complexity of this algorithm is w.h.p.~${\tilde{O}(\frac{n^{1+\rho}}{\rho} \cdot \beta)}$.
\end{theorem}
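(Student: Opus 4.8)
The plan is to reconstruct the Elkin--Neiman construction sketched in Section~\ref{sec:algs} and bound its cost in the Congested Clique model. Since the weights are polynomially bounded, there are only $O(\log n)$ distance scales $(2^k,2^{k+1}]$, and I would build the edge sets $H_0,H_1,\dots$ in increasing order of $k$. Within scale $k$, run the $\ell=O(\log\kappa+1/\rho)$ superclustering/interconnection iterations: at iteration $i$, sample each surviving cluster independently with probability $1/\deg_i$, grow a single depth-$\delta_i$ (weighted) Bellman--Ford exploration from the sampled centers to form superclusters, adding one hopset edge per absorbed cluster, and then run a separate depth-$\delta_i/2$ exploration from each un-superclustered center to interconnect nearby clusters. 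The crucial implementation trick (already quoted in Section~\ref{sec:algs}) is bootstrapping: when building $H_k$ I would run each exploration on the graph $E\cup\bigcup_{\log\beta-1<j\le k-1}H_j$, so that every relevant depth-$\delta_i$ shortest path is traversed in only $O(\beta)$ hops, and hence each exploration is a Bellman--Ford run of $O(\beta)$ rounds rather than $O(n)$.

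The correctness step is to verify the $(\beta,\epsilon)$-hopset property, which I would do by induction on the distance scale $k$. Assuming $\bigcup_{j<k}H_j$ already provides $(1+\epsilon_{k-1})$-approximate paths using $O(\beta)$ hops for every pair at distance at most $2^k$, one shows that after the $\ell$ iterations for scale $k$ (run on the bootstrapped graph), every pair $u,v$ with $d(u,v)\in(2^k,2^{k+1}]$ either ends up inside a common supercluster of small radius or was directly interconnected; in either case the added hopset edges together with at most $O(\beta)$ ``real'' hops (each real hop itself simulated by $O(\beta)$ hops of lower scales) yield a $u$--$v$ path of at most $2\beta+1$ hops and multiplicative error $(1+\epsilon_k)$. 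Because the per-scale errors compound over $O(\log n)$ scales, I would run each scale with accuracy parameter $\epsilon_k=\Theta(\epsilon/\log n)$, which inflates $\beta$ by only polylogarithmic factors and accounts for the $\log n$ in the base of $\beta$. This stretch/hopbound analysis --- tracking how recursive shortcutting interacts with the cluster-radius bounds so that the error stays $(1+\epsilon)$ rather than $(1+\epsilon)^{\Theta(\log n)}$, and showing that $\ell=O(\log\kappa+1/\rho)$ iterations suffice to exhaust all clusters of a scale --- is the heart of \cite{elkin2016} and is the main obstacle.

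For the size bound I would charge each hopset edge to the cluster that is absorbed or interconnected at the iteration at which the edge is created. In expectation the number of clusters reaching iteration $i$ of scale $k$ is $n\prod_{j<i}1/\deg_j$; each such cluster contributes $O(\deg_i)$ edges in the superclustering step and $O(\deg_i\log n)$ edges in the interconnection step by Lemma~\ref{lem:overlaps}. With the degree schedule of \cite{elkin2016} (a short ``doubling'' phase followed by a fixed-degree phase with $\deg_i=O(n^\rho)$ throughout), the telescoping sum over the $O(\log\kappa+1/\rho)$ iterations and $O(\log n)$ scales collapses to expected size $O(n^{1+1/\kappa}\log n)$ --- essentially the Thorup--Zwick count.

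Finally, the round complexity in the Congested Clique. Thanks to bootstrapping each exploration runs for only $O(\beta)$ Bellman--Ford rounds. A superclustering exploration is a single (multi-source) Bellman--Ford, so each node sends only its current estimate along each incident edge per round, well within the $O(\log n)$-bit bandwidth. An interconnection iteration runs many explorations simultaneously, but Lemma~\ref{lem:overlaps} guarantees each node $v$ is visited by only $O(\deg_i\log n)=O(n^\rho\log n)$ of them, so $v$ relays $O(n^\rho\log n)$ estimates per round; using standard Congested Clique routing (Lenzen's routing) these are delivered with amortized constant congestion, and pipelining the $O(\beta)$ rounds gives $O(n^\rho\log n\cdot\beta)$ rounds for one iteration (an extra $\log n$ absorbing routing/message-size overhead). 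Multiplying by the $O(\log\kappa+1/\rho)$ iterations per scale and the $O(\log n)$ scales and collecting factors yields the claimed $O(\frac{n^\rho}{\rho}\cdot\log^3 n\cdot\beta)$ rounds; the ``whp'' qualifier comes from the concentration bounds used for the sampling, both in Lemma~\ref{lem:overlaps} and in the size/iteration-count arguments.
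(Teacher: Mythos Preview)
The paper does not prove this statement at all: Theorem~\ref{thm:hopsets} is quoted verbatim from \cite{elkin2016} and used as a black box, with only the high-level description of the construction in Section~\ref{sec:algs} and Appendix~\ref{app:hopset} plus the restatement of Lemma~\ref{lem:overlaps}. Your reconstruction is faithful to that description --- the scale-by-scale bootstrapping, the superclustering/interconnection iterations, the degree schedule with $\deg_i=O(n^\rho)$, and the use of Lemma~\ref{lem:overlaps} to bound per-node congestion in the interconnection phase --- so there is nothing in the paper to compare against beyond confirming that your outline matches the summary the authors give of \cite{elkin2016}.

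One small remark on the round bound: the paper's own MPC adaptation (Theorem~\ref{thm:hopsets_MPC}) handles the simultaneous interconnection explorations by \emph{pipelining} the $O(n^\rho\log n)$ Bellman--Ford instances rather than invoking Lenzen-style routing, and this is also how \cite{elkin2016} does it in the Congested Clique. Either mechanism yields the same $O(n^\rho\log n\cdot\beta)$ per-iteration cost, so your analysis lands on the right bound, but if you want to match \cite{elkin2016} you should phrase it as pipelining.
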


 Roughly speaking, adding a $(\beta, \epsilon)$-hopset edges is as if the shortest path diameter reduced to $\beta$ in exchange for a small loss in the stretch. In other words, hopsets will let us cut of distance computation after exploring $\beta$ hops. Later on we will explain how we can set the parameters $\rho$ and $\kappa$ depending on the stretch parameter for the distance oracle, to get our desired running time.

%\subsection{Distance Oracle Construction} \label{sec:main_alg}
We need the \textit{h-restricted} distributed Bellman-Ford subroutine (Algorithm \ref{alg:Bellman-Ford}) which is widely used in previous work on distributed distance estimation (e.g.~see \cite{lenzen2013}, or \cite{sarma2015}).  We use the following lemma that follows from basic properties of Bellman-Ford algorithm.
\begin{lemma}
There is a distributed variant of the Bellman-Ford algorithm runs in $O(h)$ rounds in Congested Clique and for all nodes $u \in V$, computes $d_h(s,u)$, the length of the shortest path between $s$ and $u$ among the paths that have at most $h$ edges.
\end{lemma}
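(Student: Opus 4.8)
The plan is to exhibit the standard synchronous distributed Bellman-Ford iteration, prove its correctness by induction on the hop count, and observe that each iteration costs a single Congested Clique round, so that $h$ iterations give the claimed $O(h)$ bound.

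First I would describe the algorithm. Each vertex $v$ stores an estimate $\hat d(v)$, initialized to $\hat d(s)=0$ and $\hat d(v)=\infty$ for $v \ne s$. In each round every vertex sends its current $\hat d(v)$ to all of its neighbors, and upon receiving the values from all $u \in N(v)$ it updates $\hat d(v) \leftarrow \min\bigl(\hat d(v),\ \min_{u \in N(v)} \hat d(u) + w(u,v)\bigr)$. The algorithm runs for exactly $h$ rounds, after which each $v$ reports $\hat d(v)$. (If an actual path, rather than just its length, is wanted, each vertex also records the neighbor attaining the current minimum.)

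Next I would establish the invariant by induction on $i$: after $i$ rounds, $\hat d(v) = d_i(s,v)$ for every $v$, where $d_i(s,v)$ is the minimum weight of an $s$–$v$ path using at most $i$ edges. The base case is the initialization. For the inductive step, any $s$–$v$ path with at most $i+1$ edges either already uses at most $i$ edges, or decomposes as an $s$–$u$ path with at most $i$ edges for some $u \in N(v)$ followed by the edge $(u,v)$; the update rule applied to the inductively correct values $\hat d(u) = d_i(s,u)$ takes exactly the minimum over these two cases, yielding $d_{i+1}(s,v)$. Since all edge weights are nonnegative, restricting to simple paths with at most $h$ edges gives the same quantity (repeating a vertex never helps), so after $h$ rounds $\hat d(v)$ is precisely the length of the shortest $s$–$v$ path with at most $h$ edges.

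Finally I would account for the cost. The loop runs $h$ times, so the round count is $O(h)$ provided each iteration is one Congested Clique round. In an iteration vertex $v$ sends a single number, $\hat d(v)$, along each incident edge; since weights are polynomially bounded, every finite estimate — and hence every message — fits in $O(\log n)$ bits, with one extra symbol reserved for $\infty$. In the Congested Clique a vertex may send an $O(\log n)$-bit message to each of the other $n-1$ vertices per round and receive one from each, so confining communication to $N(v)$ is well within budget on both ends; hence each iteration is a single round. There is no genuine obstacle here, as this is a textbook primitive; the only two points needing a sentence of care are the nonnegativity remark that makes ``at most $h$ hops'' and ``at most $h$ edges in a simple path'' coincide, and the polynomial-weight assumption that keeps each message within the $O(\log n)$-bit Congested Clique bound.
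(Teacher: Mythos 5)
Your proposal is correct and matches the paper's treatment: the paper states this lemma without proof, simply presenting the synchronous relaxation algorithm (Algorithm~\ref{alg:Bellman-Ford}) and asserting the claim ``follows from basic properties of Bellman-Ford,'' which is exactly the hop-count induction you spell out. Your added remarks on nonnegative weights and the $O(\log n)$-bit message bound are fine and only make explicit what the paper leaves implicit.
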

In order to compute the shortest path from $s$ to all nodes, we will need to set $h= \Lambda$, the shortest path diameter. But this can be as large as $\Omega(n)$. Hence we will use a $(\beta, \epsilon)$-hopset to approximately find the distance in $O(\beta)$ time only. In other words, by constructing a $(\beta, \epsilon)$-hopset $H$, we would know that there is a path of hopbound $\beta$ with length $(1+\epsilon) d(u,v)$ among any pair of nodes $u,v \in V$, and hence Algorithm \ref{alg:Bellman-Ford} can approximate the distances $d(s,v)$ up to a factor of $(1+\epsilon)$ for all $v \in V$.

We now argue that a distance oracle can be constructed by first preprocessing the input graph by constructing a $(\beta, \epsilon)$-hopset (by Theorem \ref{thm:hopsets}) and then running the algorithm of \cite{sarma2015} that was described in Section \ref{sec:TZ} for $O(\beta)$ rounds. Let us first state the result of \cite{sarma2015} in the following theorem.

%First in Algorithm \ref{alg:Bellman-Ford} we review the restricted distributed Bellman-Ford algorithm that is used as a building block in our distance oracle construction.

\begin{algorithm}
\caption{Distributed Bellman-Ford with hopbound $h$.}
\label{alg:Bellman-Ford}
\SetKwInOut{Input}{Input}
\SetKwInOut{Output}{Output}
\Input{Undirected weighted graph $G=(V, E, w)$, and source node $s \in V$.}
\Output{$h$-hop restricted distances from the source $s$ to all nodes $u \in V$, $d^h(s,v)$.}
	Set $\forall v \in V: \hat{d}(s,v)= \infty$.\\  
	\For{Rounds $i=0$ to $h$}{
		\For{$\forall v \in V$}{
		\If{$\exists u \in N(v): \hat{d}(s,v)> \hat{d}(s,u)+ w(u,v)$ }{
	 		Set $\hat{d}(s,v):=\min_{u \in N(v)} (\hat{d}(s,u)+ w(u,v))$, and send $\hat{d}(s,v)$ to all neighbors.}
	 		}	
	}
\end{algorithm}
Next we review the distance oracle algorithm for the Congested Clique model in Algorithm \ref{alg:distanceoracle}. This algorithm was proposed by \cite{sarma2015} for constructing distance sketches in the CONGEST model.

\begin{algorithm}
\caption{Preprocessing distributed distance oracle for stretch $2k-1$ due to \cite{sarma2015}.}
\label{alg:distanceoracle}
\SetKwInOut{Input}{Input}
\SetKwInOut{Output}{Output}
\Input{Undirected graph $G=(V, E, w)$, and a coordinator node.}
\Output{Approximate distance oracle stored at the coordinator.}
Set $A_0=V, A_k=\emptyset$.\\
\For{every $v \in V$}
 {\For{$i=1$ to $k-1$}
	{If $v \in A_{i-1}$ with probability $n^{-1/k}$ add $v$ to $A_i$.}
	%Send to the coordinator $\text{argmax}_i: v \in A_i$.
}
\For{$i=k-1$ down to $1$}{
		Coordinator runs Algorithm \ref{alg:Bellman-Ford} out of set $A_i$.\\
		%Add a dummy node $s_i$ and an edge with $w(s_i,u)=0$ for all $u \in A_i$.\\
		%Run Algorithm \ref{alg:Bellman-Ford} from $s_i$.\\ 
	\For{$\forall v \in V$}{
	 Set $p_i(v)= \text{argmin}_{u \in A_i} d(u,v)$, and $d(v,A_i)= d(p_i(v),v)$.}}
	\For{$w \in A_{i}\setminus A_{i+1}$}{
		Coordinator runs the algorithm in Theorem \ref{thm:distance_sketch}.}
\end{algorithm}

In \cite{sarma2015} the following result was shown for Algorithm \ref{alg:distanceoracle}:
\begin{theorem}[\cite{sarma2015}]\label{thm:distance_sketch} 
Given undirected graph $G=(V, E, w)$ with shortest path diameter $\Lambda$, there is an algorithm that runs in $\tilde{O}(\Lambda \cdot kn^{1/k} \log n)$ rounds of Congested Clique w.h.p.~and outputs a Thorup-Zwick distance oracle with stretch $(2k-1)$ at the coordinator with high probability.
\end{theorem}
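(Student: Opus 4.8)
\textbf{Proof proposal for Theorem~\ref{thm:distance_sketch}.}

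The plan is to argue that Algorithm~\ref{alg:distanceoracle} correctly computes the Thorup--Zwick distance sketches described in Section~\ref{sec:TZ}, and that the round complexity is dominated by the ``second half'' of the construction (computing $B(v)$ distances), which the algorithm handles via scheduled Bellman--Ford explorations. First I would verify the sampling step produces sets $A_0 = V \supseteq A_1 \supseteq \dots \supseteq A_{k-1}$ with the standard property that $|A_i| = O(n^{1-i/k}\log n)$ w.h.p.\ (so the total sketch size is $O(kn^{1+1/k}\log n)$), which is just a Chernoff bound on the independent sampling. Then I would observe that the first loop of the algorithm, which runs Bellman--Ford out of the virtual super-source attached to $A_i$ (equivalently, out of the set $A_i$), correctly computes $p_i(v)$ and $d(v,A_i)$ for every $v$ in $O(\Lambda)$ rounds per level, hence $O(k\Lambda)$ rounds total over all $k-1$ levels; correctness here is just the standard fact that $h$-round Bellman--Ford from a source computes $d^h$, together with $h = \Lambda$ suffices.

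The main work is the second loop: for each $i$ and each $w \in A_i \setminus A_{i+1}$, we must compute $d(w,v)$ for every $v$ with $w \in B_i(v)$, i.e.\ every $v$ with $d(v,w) < d(v,A_{i+1})$. The key observations I would make are: (i) each such single-source Bellman--Ford need only run for $\Lambda$ rounds; (ii) at node $v$, the exploration rooted at $w$ only needs to propagate if $d(v,w) < d(v,A_{i+1})$, which $v$ can check locally since it already knows $d(v,A_{i+1})$ from the first loop; and (iii) by the Thorup--Zwick analysis, the expected number of such ``live'' sources $w$ at any fixed $v$ is $O(n^{1/k})$ per level, hence $|B(v)| = O(kn^{1/k}\log n)$ w.h.p.\ over all levels. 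Thus each node forwards messages on behalf of only $O(kn^{1/k}\log n)$ sources total. Since the Congested Clique (like CONGEST) allows each edge to carry one $O(\log n)$-word message per round, a round-robin / pipelining schedule runs all these $\Lambda$-round explorations with a multiplicative overhead equal to the per-node load, giving $\tilde O(\Lambda \cdot k n^{1/k}\log n)$ rounds. Finally, since the Congested Clique is at least as powerful as CONGEST, the whole algorithm of \cite{sarma2015} runs unchanged, and the coordinator can collect all the sketches (total size $\tilde O(kn^{1+1/k})$) in the stated bound; correctness of the resulting oracle — that querying the stored sketches yields a $(2k-1)$-stretch estimate via the query procedure of Appendix~\ref{app:query} — is inherited directly from \cite{thorup2005}.

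The step I expect to be the main obstacle (and the one that genuinely needs the scheduling argument of \cite{sarma2015} rather than a black-box invocation) is bounding the congestion in the second loop: a priori, running $|A_i \setminus A_{i+1}|$ independent Bellman--Ford explorations in parallel would overload edges, so the argument must carefully combine the per-node participation bound $|B(v)| = O(kn^{1/k}\log n)$ with a scheduling scheme that ensures no edge ever needs to transmit more than one source's update per round. Since this theorem is quoted verbatim from \cite{sarma2015} and we only use it as a building block, I would cite their analysis for this step and merely note that it transfers immediately to the Congested Clique.
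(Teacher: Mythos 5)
Your proposal matches the paper's treatment: the paper states this theorem as a cited result of \cite{sarma2015}, with exactly the argument you outline (Bellman--Ford from virtual super-sources for the $p_i(v)$'s, then the congestion/round-robin scheduling argument based on each node participating in only $|B(v)| = O(kn^{1/k}\log n)$ explorations, plus the observation that CONGEST algorithms run unchanged in the Congested Clique and the sketches can be shipped to the coordinator within a constant factor of the round bound). You correctly identify the scheduling step as the part that must be deferred to \cite{sarma2015}, which is precisely what the paper does.
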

Note that the algorithm in \cite{sarma2015} is for the CONGEST model, which we can easily implement in the Congested Clique model. In other words, we are not using the extra power of the Congested Clique model here, rather, we will use this power for constructing hopsets more efficiently. Moreover, in \cite{sarma2015} distance \textit{sketches} are constructed at each node. It is easy to see that nodes can then send their sketches to the coordinator within a constant factor of total number of rounds required to build a distance oracle consisted of the sketches for all nodes. 
 Next, we will utilize the hopset construction of \cite{elkin2016} to make the preprocessing algorithm more efficient with respect to time and message complexity. Let $G'=(V, E \cup H, w')$ be the graph obtained by adding a $(\beta, \epsilon)$-hopset $H$ to the undirected graph $G=(V, E, w)$. By running the algorithm in Theorem \ref{thm:distance_sketch} on $G'$ we will get the following result.
\begin{corollary} \label{cor:distance_oracle_time}
Given a graph $G=(V,E, w)$ and a $(\beta, \epsilon)$-hopset $H$ for $G$, there is an algorithm that runs in $\tilde{O}( \beta \cdot kn^{1/k})$ rounds of Congested Clique and outputs a Thorup-Zwick distance oracle with stretch $(2k-1) (1+\epsilon)$ on the graph $G'=(V, E \cup H, w')$ at the coordinator with high probability.
\end{corollary}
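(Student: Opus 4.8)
The plan is to take $G'=(V,E\cup H,w')$ and run the Das Sarma et al.\ construction of Theorem~\ref{thm:distance_sketch} on $G'$ essentially verbatim, with the single modification that every invocation of the distributed Bellman--Ford subroutine (both the $k-1$ explorations from the virtual sources $s_i$ attached to the sampled sets $A_i$, and the scheduled explorations from the sources $w\in A_i\setminus A_{i+1}$ used to build the bunches) is truncated after $\beta$ rounds instead of $\Lambda$ rounds. First I would record two elementary facts about $G'$: since every hopset edge $(r_C,r_{C'})$ carries weight $d_G(r_C,r_{C'})$, adding $H$ creates no path shorter than a true one, so $d_{G'}(x,y)=d_G(x,y)$ for all $x,y\in V$; and by the defining property of a $(\beta,\epsilon)$-hopset, $d_G(x,y)\le d^{\beta}_{G'}(x,y)\le (1+\epsilon)d_G(x,y)$. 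Hence running Bellman--Ford on $G'$ to depth $\beta$ returns, for each needed pair, an estimate $\hat d(x,y):=d^{\beta}_{G'}(x,y)$ sandwiched between $d_G(x,y)$ and $(1+\epsilon)d_G(x,y)$; in particular $p_i(v):=\text{argmin}_{w\in A_i}\hat d(v,w)$ is a $(1+\epsilon)$-approximate nearest neighbor of $v$ in $A_i$, and $d(v,A_i)$ in the algorithm is read off as $\hat d(v,p_i(v))$.

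For the round complexity I would mirror the accounting of~\cite{sarma2015}, substituting $\beta$ for $\Lambda$: each of the $k-1$ source-set explorations costs $O(\beta)$ rounds; for the bunch explorations, the invariant that a node $v$ forwards the message of source $w$ only if $w\in B(v)$ still holds, $|B(v)|=O(kn^{1/k}\log n)$ w.h.p.\ (this bound needs only that bunches are defined by a fixed distance function, here $\hat d$, so it is unaffected either by replacing $d_G$ with $\hat d$ or by the presence of the hopset edges), and the same round-robin schedule runs all of them in $\tilde O(\beta\cdot kn^{1/k})$ rounds without exceeding any edge's congestion. Finally each node ships its $\tilde O(kn^{1/k})$-word sketch to the coordinator, which fits within the same bound. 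This yields the stated $\tilde O(\beta\cdot kn^{1/k})$ rounds and size $O(kn^{1+1/k}\log n)$.

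The substantive step is the stretch bound, which I would obtain by rerunning the Thorup--Zwick query analysis with approximate distances. The lower bound $\tilde d(u,v)=\hat d(u,p_j(u))+\hat d(v,p_j(u))\ge d_G(u,p_j(u))+d_G(v,p_j(u))\ge d_G(u,v)$ is immediate from $\hat d\ge d_{G'}=d_G$ and the triangle inequality in $G$. For the upper bound I would carry out the usual induction over query levels: writing $a$ for the current pivot endpoint at level $i$ and $b$ for the other, the continuation condition $p_i(a)\notin B_i(b)$ gives $\hat d(b,p_{i+1}(b))=\hat d(b,A_{i+1})\le \hat d(b,p_i(a))\le (1+\epsilon)\big(d_G(u,v)+\hat d(a,p_i(a))\big)$, so if $\hat d(a,p_i(a))\le c_i\,d_G(u,v)$ then $\hat d(b,p_{i+1}(b))\le c_{i+1}\,d_G(u,v)$ with $c_{i+1}=(1+\epsilon)(1+c_i)$ and $c_0=0$; since $p_{k-1}(u)\in B(v)$ the loop stops by level $k-1$, and combining $\hat d(a,p_j(a))\le c_j d_G(u,v)$ with $\hat d(b,p_j(a))\le (1+\epsilon)\big(d_G(u,v)+\hat d(a,p_j(a))\big)$ bounds $\tilde d(u,v)$ by $(2k-1)(1+\epsilon)^{O(k)}d_G(u,v)$. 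The only delicate point is taming this $(1+\epsilon)^{O(k)}$ compounding down to the advertised $(1+\epsilon)$: I would invoke the hopset with accuracy parameter $\Theta(\epsilon/k)$, which changes $\beta$ only by a factor absorbed into the $\tilde O(\cdot)$ and the stated form of the hopbound (and is harmless asymptotically since for $k=\Theta(\log n)$ even the unrescaled bound is already $O(\log n)$). Thus the main obstacle is exactly this error-propagation/rescaling bookkeeping, everything else being a mechanical re-derivation of~\cite{sarma2015} with $\Lambda$ replaced by $\beta$.
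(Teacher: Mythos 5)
Your approach is the same as the paper's: the paper simply forms $G'=(V,E\cup H,w')$ and invokes the Das Sarma et al.\ construction (Theorem~\ref{thm:distance_sketch}) on $G'$ with every Bellman--Ford exploration truncated after $\beta$ rounds, so the corollary is stated there essentially without further proof. Your reconstruction of the round complexity (the bunch-size bound $|B(v)|=O(kn^{1/k}\log n)$ is indeed insensitive to replacing $d_G$ by $\hat d$, and shipping the sketches to the coordinator fits in the same budget) matches what the paper intends.

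Where you go beyond the paper is the stretch analysis, and your observation there is legitimate: because the truncated explorations return $\hat d=d^{\beta}_{G'}$, which is not a metric and only satisfies $d_G\le\hat d\le(1+\epsilon)d_G$ pointwise, the Thorup--Zwick level induction picks up a fresh $(1+\epsilon)$ factor at each of the up to $k-1$ levels, giving $(2k-1)(1+\epsilon)^{O(k)}$ rather than $(2k-1)(1+\epsilon)$. The paper elides this; read literally, its corollary (for an arbitrary given $(\beta,\epsilon)$-hopset) should either carry stretch $(2k-1)(1+O(k\epsilon))$ or be invoked with a $(\beta,\epsilon/k)$-hopset, exactly as you propose. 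Your check that the rescaling is harmless is also right: substituting $\epsilon/k$ into the hopbound of Theorem~\ref{thm:hopsets} multiplies $\beta$ by $k^{\log\kappa+1/\rho}$, which is polylogarithmic for $k=O(1)$ and $2^{\tilde O(\sqrt{\log n})}$ in the $k=\Theta(\log n)$ regime, so all downstream bounds survive. The one step you assert rather than verify is that the congestion-control invariant of \cite{sarma2015} (a node forwards source $w$'s messages only if $w\in B(v)$, and this still delivers correct $\beta$-hop estimates to every $v$ with $w\in B(v)$) survives the switch to hop-restricted approximate distances; this requires rerunning their path induction with $\hat d$ in place of $d$, as the paper does in the proof of Theorem~\ref{thm:extra_distance_sketches}, but it goes through and is at the same level of detail as the paper itself.
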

 Next we will analyze the message complexity of algorithm of Theorem \ref{thm:distance_sketch} and show that w.h.p. $\tilde{O}(kmn^{1/k} \beta)$ messages need to be exchanged. It is not hard to see that the number of messages exchanged for constructing a $(\beta, \epsilon)$-hopset is $\tilde{O}(\beta n^{1+\rho}/\rho)$ (this follows by analysis of \cite{elkin2016}). Hence the dominant number of messages exchanged is for running the algorithm of Theorem \ref{thm:distance_sketch}.
 
\begin{lemma}\label{lem:communication}
Total number of messages exchanged for constructing a Thorup-Zwick distance oracle (with parameters specified in Corollary \ref{cor:distance_oracle_time}) on graph ${G'=(V, E \cup H, w')}$ is w.h.p.~$O( \beta m \cdot kn^{1/k}\log n)$.
\end{lemma}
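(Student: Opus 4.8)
The plan is to bound the message complexity of Algorithm~\ref{alg:distanceoracle}, run on $G'=(V,E\cup H,w')$ with every Bellman--Ford exploration truncated after $O(\beta)$ rounds as in Corollary~\ref{cor:distance_oracle_time}, by summing the messages sent in each phase. There are three contributions: (i) the sampling of $A_1,\dots,A_{k-1}$, which costs only $O(n)$ messages (nodes sample locally); (ii) the $k-1$ explorations in which the nodes of $A_i$ are initialized with distance $0$ and restricted Bellman--Ford computes $p_i(v)$ and $d_{G'}(v,A_i)$ for all $v$; and (iii) the restricted Bellman--Fords run individually from each $w\in A_i\setminus A_{i+1}$ to deliver the (approximate) distance $d_{G'}(v,w)$ to every $v$ with $w\in B_i(v)$. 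I will show that (iii) dominates and matches the claimed bound, that (i) and (ii) are lower order, and that the $\tilde O(\beta n^{1+\rho}/\rho)$ messages of the hopset construction (from the analysis of~\cite{elkin2016}) are also dominated. Two facts drive everything: because $G'$ already contains a $(\beta,\epsilon)$-hopset, every restricted Bellman--Ford on $G'$ produces $(1+\epsilon)$-approximate distances after only $O(\beta)$ rounds; and in each round a node sends at most one $O(\log n)$-bit distance along each incident edge of $G'$, so a single $O(\beta)$-round exploration that touches a node $v$ contributes at most $O(\beta\cdot\deg_{G'}(v))$ messages from $v$.

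For (ii), each of the $k-1$ explorations runs $O(\beta)$ rounds and in each round every edge of $G'$ carries $O(1)$ messages, for $O(k\beta\,|E(G')|)$ messages total; (i) is $O(n)$. Since $|E(G')|=m+|H|$ and $|H|=O(m)$ in the regime of Corollary~\ref{cor:distance_oracle_time} (after the spanner preprocessing of the introduction this is automatic), we have $|E(G')|=O(m)$, so (ii) contributes $O(k\beta m)$, already dominated by the target $O(\beta m\,kn^{1/k}\log n)$; if one prefers not to assume $|H|=O(m)$, every bound below just picks up the harmless factor $(m+|H|)/m$.

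The heart of the proof, and the main obstacle, is (iii), for which I will follow the pruning argument of Das Sarma et al.~\cite{sarma2015}. In the Bellman--Ford run from $w\in A_i\setminus A_{i+1}$, a node $v$ propagates and updates its estimate to $w$ only while that estimate stays below its (approximate) value of $d_{G'}(v,A_{i+1})$; one verifies that this pruning loses nothing using the classical Thorup--Zwick fact that the cluster $C(w)=\{v : d(w,v)<d(v,A_{i+1})\}$ is closed under taking subpaths of shortest $w$--$v$ paths, so a node on such a path never prunes prematurely. The delicate point I must handle carefully is that all of this is with respect to the $O(\beta)$-hop-restricted, $(1+\epsilon)$-approximate metric on $G'$ rather than the exact metric, so the cluster-closure statement has to be re-derived in that approximate metric (absorbing the slack into $\epsilon$), exactly as in the hopset-based distance-sketch literature. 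Granting the pruning, $v$ participates in the exploration for source $w$ only if $w\in B_i(v)$, hence in at most $\sum_i |B_i(v)| = |B(v)|$ explorations overall. By the Thorup--Zwick analysis~\cite{thorup2005}, which is valid on any graph and in particular on $G'$, $|B(v)|=O(kn^{1/k}\log n)$ with probability $1-n^{-\Omega(1)}$, and a union bound over the $n$ vertices makes this simultaneous w.h.p. Since each such exploration lasts $O(\beta)$ rounds and costs $v$ at most $\deg_{G'}(v)$ messages per round,
\[
\sum_{v\in V} O(\beta)\cdot\deg_{G'}(v)\cdot|B(v)| = O\left(\beta\, k n^{1/k}\log n\right)\cdot 2|E(G')| = O\left(\beta\, m\, k n^{1/k}\log n\right),
\]
using once more $|E(G')|=O(m)$. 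Adding (i), (ii) and the hopset messages, all dominated by this quantity, gives the stated bound, with the only probabilistic ingredient being the Thorup--Zwick high-probability event on $|B(v)|$ taken over all $n$ vertices.
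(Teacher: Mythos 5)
Your proof is correct and yields the paper's bound, but it is organized differently from the paper's own argument, which is a two-line computation: the algorithm of Corollary~\ref{cor:distance_oracle_time} runs in $O(\beta\, k n^{1/k}\log n)$ rounds w.h.p., and in each round each edge of the graph carries $O(1)$ messages, so the total is $O(\beta m\, k n^{1/k}\log n)$. You instead re-derive the bound per vertex, via the Das Sarma et al.\ pruning/scheduling fact that a node $v$ participates only in the explorations of sources in its bunch, so its message cost is $O(\beta\cdot\deg_{G'}(v)\cdot|B(v)|)$; summing over $v$ gives the same quantity. The two are equivalent accountings of the same double sum --- the paper's version black-boxes the scheduling argument inside the round complexity of Theorem~\ref{thm:distance_sketch}/Corollary~\ref{cor:distance_oracle_time}, while yours unpacks it --- so yours is more self-contained but proves nothing stronger. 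Two things you do that the paper does not: you explicitly flag that ``each edge'' must mean each edge of $G'=(V,E\cup H,w')$, so the bound really carries a factor $|E\cup H|$ rather than $m$ unless $|H|=O(m)$ (the paper silently writes $m$); and you note that the bunch/cluster closure property must be re-verified in the $\beta$-hop-restricted approximate metric. Both are legitimate points the paper glosses over, and neither affects the asymptotic claim.
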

\begin{proof}
The algorithm of Corollary \ref{cor:distance_oracle_time} runs in $O(\beta  k n^{1/k}\log n)$ rounds w.h.p.~and overall for each edge in the graph $O(1)$ messages are exchanged.
\end{proof}
 We now combine the hopset construction and Theorem \ref{thm:distance_sketch} together to obtain our main result. We will use Theorem \ref{thm:hopsets} to construct a hopset $H$ on graph $G=(V, E, w)$, and then run Algorithm \ref{alg:distanceoracle} on the obtained graph $G'=(V, E \cup H, w')$ and get the following:
\begin{theorem}\label{thm:main}
Given a graph $G=(V, E, w)$, polynomial weights\footnote{Same as in other models this assumption can be relaxed using techniques of \cite{elkin2016}  in exchange for extra polylogarithmic factors.} and parameters $2 \leq \kappa \leq (\frac{\log n}{4})$, ${1/\kappa \leq \rho \leq 1/2}, 0 < \epsilon<1$, we can construct a Thorup-Zwick distance oracle with stretch $(2k-1)(1+\epsilon)$ and size $O(kn^{1+\frac{1}{k}}\log n)$ w.h.p.~in $O(\beta (\frac{n^{\rho}}{\rho}\cdot \log^3n  +  n^{\frac 1k} \log n))$ time, where ${\beta= O(\frac{\log(n)\cdot (\log \kappa + 1/\rho)}{\epsilon})^{\log \kappa+ \frac{1}{\rho}}}$.
\end{theorem}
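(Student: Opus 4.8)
The plan is to decouple the construction into two black boxes that have already been established: the Congested Clique hopset construction of Theorem~\ref{thm:hopsets}, and the hopset-aware distance-oracle construction of Corollary~\ref{cor:distance_oracle_time}. Concretely, I would first invoke Theorem~\ref{thm:hopsets} with the stated parameters (taking $k=\kappa$), $\rho$, and $\epsilon$, to build a $(\beta,\epsilon)$-hopset $H$ of expected size $O(n^{1+1/k}\log n)$ in $O(\tfrac{n^\rho}{\rho}\cdot\log^3 n\cdot\beta)$ rounds w.h.p., where $\beta$ is exactly the quantity appearing in the statement. The polynomial-weights hypothesis is used here: it guarantees that only $O(\log n)$ distance scales need to be processed, which is what makes this round bound hold (and the footnote indicates how to remove it). I would then form $G'=(V,E\cup H,w')$ — in Congested Clique each node simply appends its incident hopset edges to its adjacency list, which is free — and run Algorithm~\ref{alg:distanceoracle} on $G'$ with every restricted Bellman-Ford exploration truncated at $h=O(\beta)$ hops instead of at $\Lambda$.

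The correctness argument has two ingredients. First, by the defining property of a $(\beta,\epsilon)$-hopset, $d_G(u,v)\le d^{\beta}_{G'}(u,v)\le(1+\epsilon)\,d_G(u,v)$ for every pair $u,v$, so a $\beta$-hop-restricted Bellman-Ford on $G'$ from any (possibly virtual) source returns, for each vertex, a value sandwiched between the true distance and $(1+\epsilon)$ times it. Second, the Thorup--Zwick preprocessing (choice of the sets $A_i$, the centers $p_i(u)$, the bunches $B(u)$) and its query analysis use nothing about $d$ beyond its being a nonnegative symmetric function on which shortest-path explorations are run consistently; running the construction with respect to $d^{\beta}_{G'}$ therefore yields a structure with stretch $2k-1$ measured against $d^{\beta}_{G'}$, and composing the two inequalities upgrades this to stretch $(2k-1)(1+\epsilon)$ against the true metric $d_G$. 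The size bound $O(kn^{1+1/k}\log n)$ is inherited verbatim from Thorup--Zwick (the number of sampling levels and the $|B(u)|$ bound depend only on $n$ and $k$, not on the underlying metric) and it dominates the $O(n^{1+1/k}\log n)$ expected hopset size; the oracle is assembled at the coordinator by having every node ship its sketch, costing only a constant-factor overhead in rounds as noted after Theorem~\ref{thm:distance_sketch}. A union bound over the polynomially many failure events of the two phases keeps everything w.h.p.

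For the running time I would simply add the two phases: $O(\tfrac{n^\rho}{\rho}\cdot\log^3 n\cdot\beta)$ rounds for the hopset (Theorem~\ref{thm:hopsets}) plus $\tilde O(\beta\cdot kn^{1/k})$ rounds for the oracle phase on $G'$ (Corollary~\ref{cor:distance_oracle_time}); summing and absorbing the lower-order logarithmic factors into the $\log n$ term yields the claimed $O\!\big(\beta\,(\tfrac{n^\rho}{\rho}\log^3 n + n^{1/k}\log n)\big)$ bound.

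The step I expect to carry the actual content, as opposed to bookkeeping, is justifying the truncation of the Bellman-Ford explorations inside Algorithm~\ref{alg:distanceoracle} at $\beta$ hops: one must check that \emph{every} exploration the algorithm performs — both those out of the sampled sets $A_i$ via the virtual sources $s_i$, and the separate ones out of each $w\in A_i\setminus A_{i+1}$ reaching the nodes $v$ with $w\in B(v)$ — is a genuine shortest-path exploration in $G'$ of the type the hopset is built to shortcut, so that $\beta$ hops suffice to recover $(1+\epsilon)$-approximate distances for all relevant pairs simultaneously. This is precisely what Corollary~\ref{cor:distance_oracle_time} already packages, so in the end the proof of Theorem~\ref{thm:main} reduces to instantiating parameters and adding round counts, but this is where the one genuine idea — replacing the diameter-type quantity $\Lambda$ by the hopbound $\beta$ — actually lives.
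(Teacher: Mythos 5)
Your proposal is correct and follows essentially the same route as the paper: construct the hopset via Theorem~\ref{thm:hopsets}, form $G'=(V,E\cup H,w')$, and run Algorithm~\ref{alg:distanceoracle} (i.e., Corollary~\ref{cor:distance_oracle_time}) with explorations truncated at $O(\beta)$ hops, then sum the two round counts. The only cosmetic difference is that you instantiate $\kappa=k$ up front, whereas the paper keeps $\kappa$ and $k$ as separate parameters in the theorem statement and only couples them later in Corollary~\ref{cor:congested_clique_stretch}.
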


The running time depends both on the parameter $\rho$ and stretch $k$. In other words, there is a tradeoff between the stretch $k$ and the running time of this algorithm. When stretch $k$ is smaller, we can choose a larger value for $\rho$ and the dominant part of the running time would still be the distance oracle construction. On the other hand, for larger values of stretch $k$, since the distance oracle construction algorithm can be performed more efficiently, we need to set $\rho$ smaller to balance out the running time of constructing a hopset and that of constructing the distance oracle over the new graph. The parameter $\kappa$ mostly just impacts the hopset size and the constant factor in the exponent of hopbound $\beta$. Let us consider two special cases of $k=O(1)$ and $k=\Omega(\log n)$ to understand these bounds better. In the special case of $k= \Omega(\log(n))$ the hopset construction step takes more time, and so we use the recent result of \cite{censor2019} for the hopset construction to get a polylogarithmic running time. They construct a hopset of size $\tilde{O}(n^{3/2})$ with hopbound $O(\log^2(n)/\epsilon)$ in $O(\log^2(n)/\epsilon)$ rounds. 

\begin{corollary}\label{cor:congested_clique_stretch}
Given a graph $G=(V,E, w)$, and constant $0 < \epsilon \leq 1$, we can construct a Thorup-Zwick distance oracle with stretch $(2k-1)(1+\epsilon)$ in the Congested Clique model, s.t.,
\begin{itemize}
\item In case $k=O(1)$, w.h.p.~we require $\tilde{O}(n^{1/k})$ rounds.
\item In case $k=\Omega(\log n)$, w.h.p.~we require $\tilde{O}(\log(n))$ rounds.
\end{itemize}
\end{corollary}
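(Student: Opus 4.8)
The plan is to derive both cases from Theorem~\ref{thm:main} (and Corollary~\ref{cor:distance_oracle_time}), choosing the free hopset parameters $\kappa,\rho$ to balance the two contributions to the running time: the $O(\beta\cdot\tfrac{n^\rho}{\rho}\log^3 n)$ cost of building the $(\beta,\epsilon)$-hopset $H$ and the $O(\beta\cdot n^{1/k}\log n)$ cost of running Algorithm~\ref{alg:distanceoracle} on the augmented graph $G'=(V,E\cup H,w')$ (recall that $H$ lets us cut Bellman--Ford off after $\beta$ hops, so $\Lambda$ is effectively replaced by $\beta$ in Theorem~\ref{thm:distance_sketch}). In every case the stretch is $(2k-1)(1+\epsilon)$ and the size $O(kn^{1+1/k}\log n)$ exactly as in Theorem~\ref{thm:main}/Corollary~\ref{cor:distance_oracle_time}, so only the round count needs attention.

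For $k=O(1)$ I would set $\kappa=k$ and $\rho=1/k$; since $k\ge 2$ this satisfies the constraints $2\le\kappa\le(\log n)/4$ and $1/\kappa\le\rho\le 1/2$ required by Theorem~\ref{thm:main}. Then $\log\kappa+1/\rho=O(\log k+k)=O(1)$, so the hopbound collapses to $\beta=O\!\big(\tfrac{\log n}{\epsilon}\big)^{O(1)}=\operatorname{polylog}(n)$ (using that $\epsilon$ is constant). Substituting $\rho=1/k$ into the bound of Theorem~\ref{thm:main} gives $O\!\big(\beta\,(k\,n^{1/k}\log^3 n+n^{1/k}\log n)\big)=\tilde O(n^{1/k})$ because $k=O(1)$ and $\beta$ is polylogarithmic, which is the first bullet.

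The harder case is $k=\Omega(\log n)$, and here the obstacle is that Theorem~\ref{thm:main} is no longer enough: to make the hopset-construction cost $n^\rho/\rho$ polylogarithmic one needs $\rho=O(1/\log n)$, which pushes the exponent $\log\kappa+1/\rho=\Omega(\log n)$ of $\beta$ up and makes $\beta$ super-polynomial, while keeping $\beta$ small instead forces $n^\rho$ to be polynomially large. (The other term is harmless: in this regime $n^{1/k}=2^{O(1)}=O(1)$, so running Algorithm~\ref{alg:distanceoracle} costs only $\tilde O(\beta\log n)$ rounds.) The fix, as flagged after Theorem~\ref{thm:main}, is to replace the Elkin--Neiman hopset by the Censor-Hillel et al.~\cite{censor2019} hopset, which has hopbound $\beta=O(\log^2 n/\epsilon)$ and is built in $O(\log^2 n/\epsilon)$ Congested-Clique rounds; its larger size $\tilde O(n^{3/2})$ is immaterial because the final oracle already has size $O(kn^{1+1/k}\log n)=\tilde O(n)$. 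Since this hopset is a special case of~\cite{elkin2016}'s, Corollary~\ref{cor:distance_oracle_time} applies unchanged, so building the oracle on $G'$ takes $\tilde O(\beta\,k\,n^{1/k})$ rounds; with $\beta=\tilde O(1)$, $k=\Theta(\log n)$ and $n^{1/k}=O(1)$ this is $\tilde O(\log n)$, and adding the $O(\log^2 n/\epsilon)$ hopset rounds leaves it unchanged. The one subtlety to be careful about when writing this up is the bookkeeping of polylog factors so that the final bound reads $\tilde O(\log n)$ rather than merely $\operatorname{polylog}(n)$, and re-verifying that the parameter constraints of each invoked theorem hold at the chosen settings.
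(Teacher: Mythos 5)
Your proposal matches the paper's own proof: for $k=O(1)$ it sets $\kappa=k$, $\rho=1/k$ in Theorem~\ref{thm:main} to get a polylogarithmic $\beta$ and total cost $\tilde O(n^{1/k})$, and for $k=\Omega(\log n)$ it switches to the hopset of~\cite{censor2019} with hopbound and construction time $O(\log^2 n/\epsilon)$ before invoking Corollary~\ref{cor:distance_oracle_time}. You are in fact somewhat more explicit than the paper about why the Elkin--Neiman parameter trade-off cannot give polylogarithmic rounds for large $k$, and the residual polylog-factor bookkeeping you flag at the end is a looseness already present in the paper's statement.
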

\begin{proof}
For stretch $k=O(1)$ we use Theorem \ref{thm:main} and set $\rho=1/\kappa$, and $\kappa=k$ to get $\beta=\tilde{O}(\log(n))$ and total running time $\tilde{O}(n^{1/k})$. 
In case $k= \Theta(\log n)$, we will set $1/\kappa=\rho= \sqrt{\frac{\log \log n}{\log n}}$. In both cases, by setting $\rho$ to be a smaller constant, such as $\rho=1/2$ we can have a smaller $\beta$ (but still polylogarithmic), but the preprocessing algorithm will use the larger space of $\tilde{O}(m+n^{1+\rho})$ space and communication, rather than $\tilde{O}(m+n^{1+1/k})$. 
In the special case $k=\Omega(\log(n))$, we use the hopset algorithm of \cite{censor2019}, which takes polylogarithmic time.

%We will get $\beta=O(2^{\sqrt{\log n}})$ and thus running time will be ${O(\frac{1}{\epsilon} \cdot 2^{2\sqrt{\log n \cdot \log \log n}}\log^3 n )=2^{\tilde{O}(\sqrt{\log n})}}$.
\end{proof}

\paragraph{Communication Reduction with Spanners.}
In this section, we will describe how spanners can be used as a tool for reducing communication in exchange for an extra factor in the stretch.  Recall, A $t$-spanner of a graph $G$ is a subgraph $H$ such that $d_G(u,v) \leq d_H(u,v) \leq d_G(u,v)$ for all $u,v \in V$.  We will use the spanner construction of \cite{baswana2007} which computes spanners efficiently in the more restricted CONGEST model. 
\begin{theorem}[\cite{baswana2007}]\label{thm:spanners}
For any weighted graph, a $(2t-1)$-spanner of expected size $O(tn^{1+1/t})$ can be computed in the CONGEST model in $O(t^2)$ rounds and $O(tm)$ message complexity.
\end{theorem}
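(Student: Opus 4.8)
This statement is the Baswana–Sen spanner construction, so the plan is to recall that algorithm and verify the three bounds in turn. The algorithm proceeds in $t$ iterations and maintains, at the start of iteration $i$, a family $\mathcal{C}_i$ of pairwise-disjoint clusters, each a subset of $V$ with a designated center; one sets $\mathcal{C}_1 = \{\{v\} : v \in V\}$. In iteration $i < t$, each cluster of $\mathcal{C}_i$ is independently marked \emph{surviving} with probability $n^{-1/t}$, and $\mathcal{C}_{i+1}$ is obtained from the surviving clusters as follows. Every vertex $v$ whose cluster is not surviving becomes free: it keeps, for each adjacent cluster $C$, only a minimum-weight connecting edge $e_C$, and sorts the adjacent clusters by $w(e_C)$. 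If no adjacent cluster is surviving, $v$ adds $e_C$ to the spanner for every adjacent $C$ and stops; otherwise, if $C^*$ is the nearest surviving adjacent cluster, $v$ adds $e_{C^*}$, joins $C^*$, and in addition adds $e_C$ for each $C$ with $w(e_C) < w(e_{C^*})$. In iteration $t$ (the clean-up), every vertex still clustered adds one minimum-weight edge to each adjacent surviving cluster.

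First I would bound the expected size. Fix a vertex $v$ and an iteration $i < t$ in which $v$ is free; the clusters for which $v$ pays are exactly the prefix of its sorted adjacency list up to and including the first surviving one, and, in the clean-up, the surviving clusters adjacent to $v$. Since clusters survive independently with probability $n^{-1/t}$, the expected length of such a prefix is at most $n^{1/t}$; likewise the expected number of surviving clusters adjacent to $v$ at iteration $t$ is at most $n^{1/t}$, because a singleton becomes the center of a clean-up cluster only if it survives all $t-1$ samplings, an event of probability $n^{-(t-1)/t}$, so the expected number of such centers overall is $n^{1/t}$. Summing $O(n^{1/t})$ over the $n$ vertices and $t$ iterations gives expected size $O(t\,n^{1+1/t})$. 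A Chernoff bound on each prefix length together with a union bound over vertices and iterations upgrades this to the high-probability $O(t\,n^{1+1/t}\log n)$ bound quoted elsewhere in the excerpt, though it is not needed for the statement as phrased.

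Next I would prove the stretch bound, $d_H(u,v) \le (2t-1)\,w(u,v)$ for every edge $(u,v)$, by induction on the cluster hierarchy. The invariant is that every cluster of $\mathcal{C}_{i+1}$ admits, from each of its members to its center, a path in the spanner of at most $i$ hops, and --- this is the part that needs care for weighted graphs --- that the weight of that path is controlled by the weights of the edges consumed while the cluster was grown. The local step that drives the argument is that when a free vertex $v$ discards the edge $e_C$ to some cluster $C$ in favour of $e_{C^*}$ (or, in the sampling-free case, routes through another retained edge), monotonicity of the sorted order guarantees the weight of the replacement does not exceed $w(e_C)$, and hence does not exceed the weight of the edge $(u,v)$ we are trying to span; chaining at most $i$ such steps at each endpoint produces a $u$--$v$ walk of at most $2i-1 \le 2t-1$ hops with every edge of weight at most $w(u,v)$. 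I expect pinning down this invariant to be the main obstacle: the ``cluster radius'' formulation that suffices for unweighted graphs is false in the weighted setting, so one has to phrase the invariant in terms of hop-bounded paths carrying the sorted-order weight guarantee and check that the replacement argument closes at each iteration, including the clean-up iteration, where two spanned endpoints necessarily lie in a common cluster.

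Finally I would implement the algorithm in CONGEST. A cluster formed over $i$ iterations carries a spanning tree inside the spanner of depth at most $i$ rooted at its center, so at the start of iteration $i$ each center decides locally whether its cluster survives and floods that bit together with its identifier down the tree in $O(i)$ rounds; one further round lets each vertex learn, across its incident edges, the cluster and survival status of every neighbour, after which all of a free vertex's choices (the sorted list, the nearest surviving cluster, the edges it adds) are purely local. Since clusters are vertex-disjoint, these floods never congest an edge, so iteration $i$ costs $O(i)$ rounds and the total is $O\!\left(\sum_{i=1}^{t} i\right) = O(t^2)$ rounds; each edge carries $O(1)$ messages per iteration for identifier and status propagation together with join announcements, for $O(tm)$ messages overall.
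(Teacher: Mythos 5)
The paper does not prove this statement at all: Theorem~\ref{thm:spanners} is imported as a black box from \cite{baswana2007}, so there is no in-paper argument to compare against. Your proposal is a faithful reconstruction of the standard Baswana--Sen proof --- the clustering/sampling algorithm, the geometric-prefix size bound, the hop-bounded replacement-path invariant for the $(2t-1)$ stretch, and the $O(\sum_{i\le t} i)=O(t^2)$-round CONGEST implementation all match the cited source --- and it is correct in outline. The one place you rightly flag as delicate, the weighted stretch invariant, is exactly where the original proof does the work: Baswana and Sen show that for every cluster of $\mathcal{C}_i$ and every member $x$, the spanner path from $x$ to the center has at most $i-1$ edges, each of weight at most that of any edge incident on $x$ still unprocessed at the start of iteration $i$; chaining this with the sorted-prefix replacement edge yields the $2i-1$-hop path with per-edge weight at most $w(u,v)$, as you describe.
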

This construction allows us to turn the input graph for algorithms described in this section into a sparser graph. By doing so we will lose a factor of $t$ in the approximation ratio but we only need to run algorithm of Theorem \ref{thm:distance_sketch} on a graph with $O(n^{1+1/t})$ edges. Hence, we first run the Algorithm of Theorem \ref{thm:spanners} to get a spanner $G_t$, and then run the algorithm of Theorem \ref{thm:main} on $G_t$. Then by Lemma \ref{lem:communication} we have,
\begin{theorem} \label{thm:CC-communication}
Given a graph $G=(V, E, w), t,k >1$, we can construct a Thorup-Zwick distance oracle of size $O(kn^{1+1/k}\log n)$ with stretch $t\cdot (2k-1)(1+\epsilon)=O(kt)$ w.h.p.~with total communication of $\tilde{O}(kn^{1/t+1/k} \beta +tm)$, where $\beta$ and the running time are the same as in {Theorem \ref{thm:main}}.
\end{theorem}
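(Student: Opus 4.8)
The plan is to run the full machinery of Theorem~\ref{thm:main} on a sparsified copy of the input rather than on $G$ itself. First I would invoke Theorem~\ref{thm:spanners} (Baswana--Sen) to compute a $(2t-1)$-spanner $G_t$ of expected size $O(tn^{1+1/t})$, at a cost of $O(t^2)$ rounds and $O(tm)$ messages. I would then apply Theorem~\ref{thm:main} with $G_t$ as the input graph: it constructs a $(\beta,\epsilon)$-hopset $H$ on $G_t$ and runs Algorithm~\ref{alg:distanceoracle} on $G_t\cup H$, producing a Thorup--Zwick oracle of size $O(kn^{1+1/k}\log n)$. The size bound and the round bound of Theorem~\ref{thm:main} depend only on $n$ and $\beta$, not on the edge count of the input, so they are inherited unchanged, and the $O(t^2)$ spanner rounds are absorbed.

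For the stretch I would compose the two approximation guarantees. The spanner satisfies $d_G(u,v)\le d_{G_t}(u,v)\le (2t-1)\,d_G(u,v)$ for all $u,v\in V$, and the oracle built on $G_t$ returns $d'(u,v)$ with $d_{G_t}(u,v)\le d'(u,v)\le (2k-1)(1+\epsilon)\,d_{G_t}(u,v)$. Chaining these yields $d_G(u,v)\le d'(u,v)\le (2t-1)(2k-1)(1+\epsilon)\,d_G(u,v) = O(kt)\cdot d_G(u,v)$, which is exactly the claimed stretch.

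The communication bound follows by repeating the accounting behind Lemma~\ref{lem:communication} with the spanner's edge count $|E(G_t)|=O(tn^{1+1/t})$ substituted for that of the input. The hopset on $G_t$ contributes $\tilde O(\beta n^{1+\rho}/\rho)$ messages, and the distance-oracle preprocessing on $G_t\cup H$ contributes $O(\beta\cdot(|E(G_t)|+|H|)\cdot kn^{1/k}\log n)$ messages by Lemma~\ref{lem:communication}; adding the $O(tm)$ messages spent building the spanner gives the stated total $\tilde O(kn^{1/t+1/k}\beta + tm)$. Since $\beta$, $\rho$, and $\kappa$ are precisely the parameters of Theorem~\ref{thm:main}, both $\beta$ and the running time are as in that theorem.

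The one point requiring care --- and thus the main (mild) obstacle --- is ensuring that routing through $G_t$ genuinely buys the reduction rather than secretly re-introducing dense-graph communication. Two things must be checked: the hopset must be constructed on $G_t$ and not on $G$, so that the hop edges interact with only $O(tn^{1+1/t})$ real edges; and the hopset edge set $H$, of size $\tilde O(n^{1+1/\kappa})$, is added to $G_t$, so the parameter $\kappa$ must be chosen (e.g.\ $\kappa = t$, or more conservatively $\kappa\ge t$) so that $|H|$ does not dominate $|E(G_t)|$ in the message count --- otherwise an $n^{1+1/\kappa}$ term would reappear in place of the desired $n^{1+1/t}$. Once this parameter choice is fixed, the rest of the argument is a direct substitution into the already-established bounds of Theorems~\ref{thm:main} and~\ref{thm:spanners} and Lemma~\ref{lem:communication}.
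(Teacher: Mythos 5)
Your proposal matches the paper's argument exactly: build the $(2t-1)$-spanner via Theorem~\ref{thm:spanners}, run the algorithm of Theorem~\ref{thm:main} on the spanner, compose the two stretch factors, and recount the messages of Lemma~\ref{lem:communication} with the spanner's edge bound plus the $O(tm)$ spanner-construction cost. Your added remark about choosing $\kappa$ so that the hopset size does not dominate the spanner size is a sensible detail the paper leaves implicit, but it does not change the route.
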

This implies that there is a direct tradeoff between the approximation ratio and the amount of communication when size of the distance oracle is fixed. In other words, when $n^{1/t} =o(m)$ the amount of communication required for distance oracles of stretch $O(kt)$ is smaller than the amount required for building distance oracles of stretch $O(k)$, where the size is in both cases $O(kn^{1+1/k}\log n)$. 
%\section{Algorithms Omitted from the Section \ref{sec:congest_clique}} 
\section{Hopset construction of \cite{elkin2016}} \label{app:hopset}
In this section we provide more details on the hopset construction of \cite{elkin2016}. In their (sequential) algorithm, they consider each distance scale $(2^k, 2^{k+1}], k=0,1,2,...$ separately. In other words, for a given $k \leq \log \Lambda$ they construct a set of edges $H_k$ such that for each pair $u,v \in V$ where $2^k \leq d_G(u,v) \leq 2^{k+1}$, there is a path of length $(1+ \epsilon) d_G(u,v)$ with hop-bound $\beta$ in $E \cup H_k$. For $k < \log \beta$, it is enough to have $H_k= \emptyset$.

 Next, we consider a fixed distance scale $R \in (2^k,2^{k+1}], k \geq \log \beta$. Each iteration of the algorithm consists of a set of \textit{superclustering}, and \textit{interconnection} phases, except that there will be no superclustering in the last phase. Initially, the set of clusters is $\mathcal{P}=\{ \{v\}_{v \in V}\}$. Each cluster in $C \in \mathcal{P}$ has a cluster center which we denote by $r_C$. The algorithm uses a sequence $\delta_1,\delta_2,...$ of distance thresholds and a sequence $\deg_1,\deg_2,...$ of degree thresholds that determines the sampling probability of clusters (we describe details of these sequences below). At the $i$-th iteration, every cluster $C \in \mathcal{P}$ is sampled with probability $1/\deg_i$. Let $S_i$ denote the set of sampled clusters.
 Now a single shortest-path exploration of depth $\delta_i$ from the set of centers of sampled clusters $R=\{r_C \mid C \in S_i\}$ is performed. Let $C' \in \mathcal{P}\setminus S_i$ be a cluster whose center $r_{C'}$ was reached by the exploration and let $r_C$ be the center in $R$ closest to $r_C'$. An edge $(r_C, r_{C'})$ with weight $d_G(r_C,r_{C'})$ is then added to the hopset. A supercluster $\hat{C}$ with center $r_{\hat{C}}=r_C$ is now created that contains all the vertices of $C$ and the clusters $C'$ for which a hopset edge was added.
  In the next stage of iteration $i$, all clusters within distance $\delta_i/2$ of each other that have not been superclustered at iteration $i$ will be interconnected. In other words, a \textit{separate} exploration of depth $\frac{\delta_i}{2}$ is performed from each such cluster center $r_C$ and if center of cluster $C'$ is reached, an edge $(r_C,r_C')$ with weight $d_G(r_C,r_{C'})$ will be also added to the hopset. The final phase of their algorithm only consists of the interconnection phase.

The algorithm depends on two parameters $\kappa \geq 2$, which controls the hopset size, and $\frac{1}{\kappa} \leq \rho\leq \frac{1}{2}$, which impacts the running time and the hopbound. This algorithm has two stages as follows: the first stage has $\log(\kappa \rho)$ phases in which the degree sequence grows exponentially, while the second stage has $O(1/\rho)$ phases in which the degree sequence remains the same. More precisely, if phase $i$ is in the first stage, we have $\deg_i=n^{2^i/\kappa}$, whereas $\deg_i=n^\rho$ if $i$ is a phase in the second stage. The growth rate of distance thresholds remains the same in both stages (it increases by a factor of $1/\epsilon$). In particular we have $\delta_i = \epsilon^{\ell-i} \cdot 2^{k+1} + 4R_i$, where $\ell$ is total number of phases and $R_{i+1}= \delta_i+ R_i$. 

In \cite{elkin2016} it is shown that this construction results in a $(\beta, \epsilon)$-hopset of size $O(n^{1+1/\kappa} \log n)$, where $\beta = O(\frac{\log n+ \frac{1}{\rho}}{\epsilon})^{\log \kappa +O(\frac{1}{\rho})}$. They also show how a similar algorithm can be implemented in multiple distributed setting by using hopset edges for smaller distance scales in construction of larger distance scales. 
Specifically, each superclustering phase can be performed by a distributed Bellman-Ford exploration of depth $\delta_i$. However for an interconnection phase, a separate distributed Bellman-Ford explorations of depth $\delta_i/2$ from cluster centers is performed, each of which could take $\Omega(n)$ rounds in distributed settings. To overcome this issue, \cite{elkin2016} propose to use the hopsets $\cup_{\log \beta-1<j\leq k-1} H_j$, for constructing hopset edges $H_k$. More precisely, they observe that for any pair of nodes with distance less than $2^{k+1}$, hopsets $\cup_{\log \beta-1<j\leq k-1} H_j$ provide a $(1+\epsilon)$-stretch approximate shortest path with $2\beta+1$ hops between these pair of nodes. In other words, it is enough to run each Bellman-Ford exploration only for $O(\beta)$ rounds. 

One main property that this construction has is shown in Lemma 3.3 of \cite{elkin2016} that states for each vertex $v \in V$, w.h.p. the number of explorations of interconnection phase that visit $v$ is at most $O(\deg_i \cdot\log n) = O(n^\rho \cdot \log n)$. We also use this property to show that this hopset construction cab be implemented in the MPC model efficiently. More formally,
\begin{lemma}[\cite{elkin2016}] \label{lem:overlaps}
In the hopset algorithm of \cite{elkin2016}, during the $i$-th iteration of a given distance scale $(2^k,2^{k+1}]$, for each node $v \in V$, w.h.p. the number of explorations of interconnection phase that visit $v$ is at most $O(deg_i \cdot\log n)$, where $deg_i$ is the sampling probability of the superclustering phase.
\end{lemma}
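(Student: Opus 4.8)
The plan is to reduce the claim to a simple ball-packing argument driven by the random sampling of the superclustering phase. Fix a distance scale $(2^k,2^{k+1}]$, an iteration $i$, and a vertex $v$. Let $\mathcal{I}_v$ denote the set of clusters $C$ that survive the superclustering step of iteration $i$ (i.e.\ are not absorbed into a supercluster) and whose depth-$\delta_i/2$ interconnection exploration reaches $v$; its cardinality is exactly the quantity we must bound. The first observation is that every $C\in\mathcal{I}_v$ satisfies $d_G(r_C,v)\le \delta_i/2$: hopset edges carry weight equal to the true distance they shortcut, so any bounded-hop exploration that reaches $v$ certifies $d_G(r_C,v)\le\delta_i/2$.

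Next I would localize all these centers into one ball. If $\mathcal{I}_v=\emptyset$ there is nothing to prove, so pick an arbitrary $C_0\in\mathcal{I}_v$. For any $C\in\mathcal{I}_v$ the triangle inequality gives $d_G(r_C,r_{C_0})\le d_G(r_C,v)+d_G(v,r_{C_0})\le \delta_i/2+\delta_i/2=\delta_i$. Hence every center $r_C$ with $C\in\mathcal{I}_v$ lies in the ball $B_G(r_{C_0},\delta_i)$, and in particular $|\mathcal{I}_v|$ is at most the number $q$ of clusters (as of the start of iteration $i$) whose center lies in $B_G(r_{C_0},\delta_i)$.

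The key step is to bound $q$ using the fact that $C_0$ was \emph{not} superclustered. By definition a cluster is superclustered precisely when the single depth-$\delta_i$ exploration from the sampled cluster centers reaches its center; since $C_0$ survived, no sampled cluster center lies within (bounded-hop) distance $\delta_i$ of $r_{C_0}$, which forces that none of the $q$ clusters with center in $B_G(r_{C_0},\delta_i)$ was sampled. As each cluster is sampled independently with probability $1/\deg_i$, we get $\Pr[\text{none of the }q\text{ clusters sampled}]=(1-1/\deg_i)^{q}\le e^{-q/\deg_i}$. Choosing a large enough constant $c$, if $q>c\,\deg_i\ln n$ this is at most $n^{-c}$; therefore w.h.p.\ $q\le c\,\deg_i\ln n$, and so $|\mathcal{I}_v|\le q=O(\deg_i\log n)$. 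A union bound over the $n$ vertices and over all $O(\log\Lambda)$ iterations and distance scales (costing only a larger constant $c$) finishes the proof.

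The one place that needs care — and the only genuine obstacle — is the interaction between the \emph{hop-bounded} implementation of the explorations and the exact-distance triangle inequality used above: ``not superclustered'' is witnessed by a $O(\beta)$-hop Bellman-Ford failing to reach $r_{C_0}$, not by the true distance exceeding $\delta_i$. This is exactly where the recursive use of the lower-scale hopsets $\bigcup_{\log\beta-1<j\le k-1}H_j$ enters: for the relevant scales those hopsets guarantee a $(1+\epsilon)$-approximate $r_C$–$r_{C_0}$ path on at most $O(\beta)$ hops, so running each superclustering exploration with its radius inflated by the matching $(1+\epsilon)$ factor makes the bounded-hop ``reach'' faithfully detect every center within true distance $\delta_i$, and the combinatorial argument goes through verbatim. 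This slack is already baked into the choice of the $\delta_i$ and $R_i$ sequences, so no additional loss is incurred.
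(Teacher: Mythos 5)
The paper never proves this lemma itself: it is imported verbatim as Lemma~3.3 of \cite{elkin2016}, so there is no in-paper proof to compare against. Your reconstruction is essentially the standard argument behind that lemma (a ball-packing step combined with the independence of the superclustering sampling), and it is correct in substance. Two points deserve tightening. First, your ball is centered at $r_{C_0}$ for a cluster $C_0$ chosen \emph{after} seeing which explorations reach $v$, so the event ``none of the $q$ clusters in $B_G(r_{C_0},\delta_i)$ is sampled'' is conditioned on an adaptively chosen center; you need the union bound to range over all candidate centers $r$ (at most $n$ of them), not just over the vertices $v$ --- or, more cleanly, center the ball at $v$ itself as Elkin and Neiman do: the set $U_v$ of cluster centers within (hop-bounded) distance $\delta_i/2$ of $v$ is determined by the partition entering iteration $i$ and is therefore independent of the iteration-$i$ coin flips, and if any of the closest $O(\deg_i\log n)$ members of $U_v$ is sampled then \emph{every} cluster of $U_v$ is absorbed and no interconnection exploration visits $v$ at all. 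Second, you correctly flag that ``$C_0$ survives'' only certifies that no sampled center is within \emph{hop-bounded} distance $\delta_i$, so the implication fails for the full true-distance ball; the fix is to restrict the packing to the $2\beta$-hop ball (the concatenated path $r_C\to v\to r_{C_0}$ has at most $2\beta$ hops and length at most $\delta_i$, which is exactly why the explorations are run for $O(\beta)$ hops with the slack built into the $\delta_i$ sequence). With those two repairs your argument matches the cited proof.
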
 
  
%Now we turn our atten

%add notes about some of the properties of this construction

\section{Proofs Omitted from Section \ref{sec:MPC}}\label{app:MPC}
\subsection{Proof of Theorem \ref{thm:extra_distance_sketches}}  \label{app:extra_memory_sketches}
Next, we show how in the MPC setting with extra memory we can improve the number of rounds, using an argument similar to \cite{sarma2015}.
\begin{theorem}
Given a graph $G=(V,E)$ with shortest path diameter $\Lambda$, there is an algorithm in MPC$(n^\gamma, n^{1/k}\log n)$ that runs in time $O(k\Lambda)$ w.h.p.~and constructs Thorup-Zwick distance sketches of size $O(kn^{1/k}\log n)$ with stretch $2k-1$.
\end{theorem}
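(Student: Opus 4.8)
The plan is to mimic the CONGEST algorithm of Das Sarma et al.\ (Algorithm~\ref{alg:distanceoracle}), but to exploit the extra memory of MPC$(n^\gamma, n^{1/k}\log n)$ to run the many instances of Bellman-Ford ``in parallel'' rather than scheduled by a round-robin. Recall the structure of the centralized/distributed construction: after the sampling phase producing $A_0 \supseteq A_1 \supseteq \dots \supseteq A_{k-1}$, there are two kinds of shortest-path explorations. First, for each level $i$ we need $p_i(v)$ and $d(v,A_i)$ for all $v$; as in Section~\ref{sec:TZ} this is a \emph{single} source-wise exploration from a virtual source $s_i$ adjacent (with weight $0$) to all of $A_i$, so it is just one run of restricted Bellman-Ford of depth $\Lambda$. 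By Theorem~\ref{thm:bellman-ford_MPC} (or rather its generalized vector version, Lemma~\ref{lem:mod_broadcast}), each round costs $O(1/\gamma)$ MPC rounds, so all $k$ of these explorations take $O(k\Lambda/\gamma)$ rounds; with $\gamma$ treated as a constant this is $O(k\Lambda)$. (One subtlety: the virtual source $s_i$ has degree $|A_i|$ which may exceed $n^\gamma$; but we already have the machinery of Lemma~\ref{lem:tuples} and Theorem~\ref{thm:find_min} to let a contiguous block of machines $M(s_i)$ simulate a high-degree node, so this is handled.)

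The real content is the second kind of exploration: for each $w \in A_i \setminus A_{i+1}$ we must compute $d(w,v)$ for every $v$ with $w \in B(v)$, which is done by running a Bellman-Ford rooted at each such $w$ and having a node $v$ forward messages for source $w$ only while $w$ could still belong to $B(v)$ (i.e.\ while the tentative distance to $w$ is less than $d(v,A_{i+1})$). The key combinatorial fact, from \cite{thorup2005} and used in \cite{sarma2015}, is that w.h.p.\ $|B(v)| = O(kn^{1/k}\log n)$ for every $v$, and more strongly that at level $i$ each node $v$ participates in the Bellman-Ford runs of only $O(n^{1/k}\log n)$ sources $w \in A_i\setminus A_{i+1}$. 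In CONGEST this bound forced a factor $kn^{1/k}\log n$ slowdown from congestion; here, because the per-machine memory budget is $n^{1/k}\log n$ times larger than strictly needed to store a sketch, each (collection of) machine(s) $M(v)$ can afford to store and update the tentative distances to all $O(n^{1/k}\log n)$ relevant sources \emph{simultaneously}, as a vector indexed by source. So I would bundle, for each node $v$, its distance estimates to the relevant sources into a length-$O(n^{1/k}\log n)$ vector and run all these Bellman-Ford instances synchronously: in each (global) Bellman-Ford round, every $v$ computes the coordinate-wise minimum over its neighbors' vectors plus edge weights (generalized Find Minimum, Lemma~\ref{lem:mod_broadcast}), broadcasts the updated vector within $M(v)$ (generalized Broadcast), and ships the relevant coordinates to each neighbor. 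Each such round costs $O(1/\gamma)$ MPC rounds, and there are $O(\Lambda)$ rounds per level and $k$ levels, giving $O(k\Lambda/\gamma) = O(k\Lambda)$ total.

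The step I expect to be the main obstacle is the bookkeeping that makes the parallel Bellman-Ford fit the memory and I/O budget: I must argue that (i) the set of sources a node $v$ is responsible for at level $i$ really has size $O(n^{1/k}\log n)$ w.h.p.\ (invoking the Thorup--Zwick bound on $|B_i(v)|$, and the early-termination rule ``$v$ stops forwarding source $w$ once its estimate to $w$ reaches $d(v,A_{i+1})$'' so that no node ever holds more than this many live sources), and (ii) the per-round communication for a node $v$ is proportional to $\deg(v)\cdot n^{1/k}\log n$ edges' worth of coordinates, which is exactly the total memory allotted to $M(v)$, so the aggregate I/O over all nodes is $O(m\cdot n^{1/k}\log n)$, matching the $\Theta(mS'/S)$ machines with $S = n^\gamma$, $S' = n^{1/k}\log n$; hence no machine exceeds its $O(S)$ I/O bound (after a sorting/redistribution step, via \cite{goodrich2011}, to balance the coordinate-to-machine assignment). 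I also need the indexing machinery (Lemma~\ref{lem:tuples}) so that when $M(v,u)$ ships updated coordinates to $M(u,v)$ it knows the destination machine index, and so that coordinates are consistently indexed by a global source-ID ordering across all machines simulating $v$.

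Once these invariants are established, correctness and stretch are inherited verbatim from \cite{sarma2015}/\cite{thorup2005}: the parallel Bellman-Ford computes exactly the same $p_i(v)$, $d(v,A_i)$, and $B(v)$ with distances as the sequential construction (restricted Bellman-Ford of depth $\Lambda$ is exact on a graph of shortest-path diameter $\Lambda$), so the resulting per-node sketch has size $O(kn^{1/k}\log n)$ w.h.p.\ and supports the $O(k)$-time local query algorithm of Appendix~\ref{app:query} with stretch $2k-1$. The running time is $O(k\Lambda)$ rounds of MPC$(n^\gamma, n^{1/k}\log n)$ as claimed, absorbing the $1/\gamma$ factor into the constant (or stating it as $O(k\Lambda/\gamma)$ if one prefers to keep $\gamma$ explicit). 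Finally I would remark, as the paper does immediately after the theorem, that replacing $G$ by $G \cup H$ for a $(\beta,\epsilon)$-hopset $H$ reduces the effective $\Lambda$ to $\beta$ and costs only a $(1+\epsilon)$ factor in stretch, yielding the $O(\beta/\gamma)$-round, stretch-$(1+\epsilon)(2k-1)$ corollary used in the proof of Theorem~\ref{thm:polylog_distancesketches}.
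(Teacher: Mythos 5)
Your proposal matches the paper's proof in Appendix~\ref{app:extra_memory_sketches}: both run the $k$ Thorup--Zwick phases with each node maintaining a length-$O(n^{1/k}\log n)$ vector of estimates to its live sources, use the generalized Find Minimum/Broadcast of Lemma~\ref{lem:mod_broadcast} to update all sources in parallel, prune sources via the $d(u,A_{i+1})$ condition, and bound each phase by $O(\Lambda/\gamma)$ rounds via the hop-count induction from \cite{sarma2015}. The approach is essentially identical, with your write-up only adding more explicit I/O accounting.
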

\begin{proof}
The algorithm is as follows: we have $k$ \textit{phases} for each level of Thorup-Zwick. Sampling sets $A_{k-1} \subseteq ...\subseteq A_1$ is straightforward. We start from the $k$-th phase, and we run Bellman-Ford (Algorithm \ref{alg:Bellman-Ford_MPC}) with the following modification: each node $u$ keeps a vector of size $O(n^{1/k} \log n)$ of distance estimates $\tilde{d}(v,u)$ for all $v \in B(u)$. Then we run modified variants of the Broadcast and Find Min subroutines (Lemma \ref{lem:mod_broadcast}) to update distances $\tilde{d}(v,u)$ based on a message received from a neighbor $u' \in N(u)$ if and only if $\tilde{d}(v,u)+w(u,u') < d(u,A_{i+1})$ and $\tilde{d}(v,u')+w(u,u') < \tilde{d}(v,u)$. 

Based on Lemma 5 in \cite{sarma2015}, we know that at the end of phase $i$, each node $u \in V$ knows $B_i(u)$ and its distance to all nodes in $B_i(u)$. In particular, inductively each node $u$ knows $d(u,A_{i+1})$ before starting phase $i$. Note that algorithm of \cite{sarma2015} keeps a queue for all possible source nodes for their scheduling. We do not have space to store such a queue for all nodes. Here we simply only store a map of size $O(kn^{1/k} \log n)$ for each node $u$ that corresponds to distance estimates for all $v \in B(u)$.

Next, we argue that each phase takes $O(\frac{\Lambda}{\gamma})$ rounds based on an inductive argument similar to Lemma 6 in \cite{sarma2015}. Let $v \in B_i(u)$, and assume that there is a shortest path with $j$ hops between $v$ and $u$ which we denote by $v=v_0,...,v_j=u$. We use an induction on $j$. In the base case, $u$ and $v$ are neighbors and in $O(1/\gamma)$ rounds the aggregations can be performed as in Theorem \ref{thm:find_min}. By inductive hypothesis $v_{j-1}$ received a message after $O(\frac{(j-1)}{\gamma})$ rounds. If $v_{j-1}$ had found its shortest path to $v$ before the $(j-1)$-st iteration of Bellman-Ford, it would have already sent an update to $v_j$. Otherwise, $v$ computes and broadcasts the updated distance using $O(1/\gamma)$ rounds of MPC$(n^\gamma, n^{1/k}\log n)$. We showed in Lemma \ref{lem:mod_broadcast} that this can be done in parallel for all messages corresponding to sources in $B(u)$ in $O(1/\gamma)$ rounds. Hence $v_j$ receives the updated distance after $O(j/\gamma)$ rounds, where $j \leq \Lambda$ by definition. Finally, all nodes will receive the distances from nodes in their bunches after $O(k\Lambda/\gamma)$ rounds.
\end{proof}

\section{Distance Oracles in the Streaming Model}\label{app:streaming}
In this section we will describe how the Thorup-Zwick distance oracles can be constructed in the insert-only streaming model. For graph problems, the stream is a sequence of edges (and their weights), and the goal is to solve the problem in space strictly sublinear in number of edges. For some problems we might need to see multiple \textit{passes} of the stream. Similar to the distributed settings, we will use the hopset construction of \cite{elkin2016}. They show that in streaming settings a $(\beta, \epsilon)$-hopset, where with the following guarantees can be constructed. 
\begin{theorem}[\cite{elkin2016}] \label{thm:hopsets_streaming}
For any graph $G = (V, E, w)$ with $n$ vertices, and any $2 \leq \kappa \leq (\log n)/4, 1/2 > \rho \geq 1/\kappa, 1 \leq t \leq \log n$ and $0 < \epsilon < 1/2$, there is a streaming algorithm that computes a $(\beta, \epsilon)$-hopset with expected size $O(n^{1+\frac{1}{\kappa}} \log^2 n)$, where $\beta= O(\frac{1}{\epsilon}  \cdot (\log(\kappa) + \frac{1}{\rho}) \log n))^{\log(\kappa)+ \frac{1}{\rho}}$ requiring either of the following resources:
\begin{itemize}
\item $O(\beta \log n)$ passes w.h.p.~and expected space $O(\frac{n^{1+\rho}}{\rho} + n^{1+\frac{1}{\kappa}} \log^2 n)$,
\item $O(n^\rho \cdot \beta \cdot \log^2 n)$ passes w.h.p.~and expected space $O(n^{1+\frac{1}{\kappa}}  \log^2 n)$.
\end{itemize} 
\end{theorem}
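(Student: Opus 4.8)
The plan is to run the (sequential) Elkin--Neiman construction reviewed in Appendix~\ref{app:hopset} as a streaming algorithm, pass by pass, so that its correctness guarantees --- stretch $1+\epsilon$, hopbound $\beta$, and expected size $O(n^{1+1/\kappa}\log^2 n)$ --- carry over verbatim from the analysis of \cite{elkin2016} and only the resource bookkeeping needs to be redone. The single non-trivial streaming primitive is a hop-restricted weighted BFS (restricted Bellman--Ford): to run an exploration of weighted depth $\delta_i$ from a collection of source sets on $G$ together with the smaller-scale hopset edges $\bigcup_{j<k}H_j$ already committed (which sit in working memory and are re-scanned on every pass), we spend one pass per hop --- in a pass each vertex $v$ holds, for each source that has reached it, a tentative distance and the identity of the nearest such source, and relaxes $\tilde d(v)\leftarrow\min(\tilde d(v),\tilde d(u)+w(u,v))$ on every incident stream/resident edge. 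By the recursion of \cite{elkin2016} (the same one invoked in the proof of Theorem~\ref{thm:hopsets_MPC}), once $\bigcup_{j<k}H_j$ is present every exploration of a scale $(2^k,2^{k+1}]$ completes in $O(\beta)$ hops, hence $O(\beta)$ passes.

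I would then process the $O(\log n)$ distance scales in increasing order, and within each scale its $\ell=O(\log\kappa+1/\rho)$ iterations in order. Each superclustering exploration starts from a single source set, so it needs only $O(1)$ words per vertex and $O(\beta)$ passes. The interconnection phase is the crux, because it is a \emph{separate} exploration from each un-superclustered cluster center and there may be $\Theta(n)$ of them; the two resource tradeoffs correspond to two ways of scheduling these explorations. For the first bullet I would run \emph{all} interconnection explorations of an iteration in one batch of $O(\beta)$ passes; how much space this costs is controlled precisely by Lemma~\ref{lem:overlaps}, which says each vertex is visited by only $O(\deg_i\log n)=O(n^{\rho}\log n)$ of them w.h.p., so $v$ stores $O(n^{\rho}\log n)$ distance/source pairs. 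Summing over vertices (and over the second-stage iterations, each with $\deg_i=n^{\rho}$) gives $O(n^{1+\rho}/\rho)$ working space on top of the $O(n^{1+1/\kappa}\log^2 n)$ needed to store the growing hopset, and the per-exploration pass count collapses to $O(\beta\log n)$ over all scales by the accounting of \cite{elkin2016}.

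For the second bullet we cannot afford $\Theta(n^{1+\rho})$ space, so instead of running all of iteration $i$'s interconnection explorations at once I would partition them uniformly at random into $\Theta(n^{\rho})$ batches and run the batches sequentially. By Lemma~\ref{lem:overlaps} each vertex lies on only $O(n^{\rho}\log n)$ of these explorations, so in expectation on $O(\log n)$ per batch; a Chernoff bound together with a union bound over the $n$ vertices and $O(n^{\rho})$ batches keeps the per-batch, per-vertex load $O(\log n)$ w.h.p. Thus each batch uses only $O(n\log n)$ working space --- comfortably inside $O(n^{1+1/\kappa}\log^2 n)$ --- and runs in $O(\beta)$ passes; with $O(n^{\rho})$ batches per iteration, $O(\ell)$ iterations, $O(\log n)$ scales, and one more $\log n$ from the batch-load slack, we get $O(n^{\rho}\cdot\beta\cdot\log^2 n)$ passes overall.

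The step I expect to be the main obstacle is exactly this interconnection scheduling. In the space-efficient regime one must argue a batching exists whose per-batch, per-vertex load is polylogarithmic --- this is where Lemma~\ref{lem:overlaps} plus concentration is essential, and one has to be careful that the extra randomness over the batching composes cleanly with the w.h.p.\ guarantee internal to that lemma. Dually, in the pass-efficient regime one must check that running every exploration of an iteration simultaneously really fits in $O(n^{1+\rho}/\rho)$ rather than blowing up to $\Theta(n^{2})$. Everything else --- the $O(\beta)$-hop simulation of a single exploration, maintaining cluster/supercluster membership across iterations, re-scanning the resident smaller-scale hopset edges on each pass, and the stretch/size guarantees themselves --- is routine once this primitive and this scheduling are in place, with the correctness claims black-boxed from \cite{elkin2016}.
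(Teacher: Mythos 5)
This theorem is stated in the paper purely as an import from \cite{elkin2016}; the paper offers no proof of it beyond the citation, so there is no internal argument to compare yours against. That said, your reconstruction is faithful to how the source establishes the result: one streaming pass per Bellman--Ford hop with the already-built smaller-scale hopsets kept resident so each exploration terminates in $O(\beta)$ passes, and the two resource bullets arising from the two ways of scheduling the interconnection explorations (all at once, with per-vertex load controlled by Lemma~\ref{lem:overlaps} giving the $O(n^{1+\rho}/\rho)$ space bound, versus sequenced/batched to keep space at the hopset size while multiplying the pass count by roughly the per-vertex load $O(n^{\rho}\log n)$). Two bookkeeping points deserve care if you were to write this out in full: your pass count for the second bullet carries an extra factor of $\ell=O(\log\kappa+1/\rho)$ relative to the stated $O(n^{\rho}\beta\log^2 n)$, so you would need to either absorb the iteration count into the existing logarithmic factors or tighten the scheduling; and in the first bullet the stated space is $O(n^{1+\rho}/\rho)$ rather than the $O(n^{1+\rho}\log n)$ your per-iteration load argument most naturally yields, so the accounting there should be matched to the source's rather than re-derived loosely. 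Neither issue reflects a wrong idea, only constants and log/$1/\rho$ factors in a theorem you are in any case entitled to cite as a black box.
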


We will next explain how the distance oracle can be constructed in $O(\beta)$ passes given a hopset with hopbound $\beta$.
First, we need a variant of \textit{restricted} Bellman-Ford for streaming settings. The idea of using Bellman-Ford in streaming settings has been previously used for shortest path computation (e.g. \cite{elkin2016}, \cite{henzinger2016}). This algorithm is similar to the distributed variant: on receipt of each edge $(u,v) \in E$ we will check to see if the distance from any of the sources in $S$ should be updated. After $i$ passes of the algorithm,  all nodes have the $i$-restricted distance to nodes in $s$.
The restricted Bellman-Ford in streaming is presented in Algorithm \ref{alg:bellman-ford-stream}. Note that unlike centralized Bellman-Ford nodes do not store and initial distance estimate (due to space limitation in the streaming model). This algorithm uses $O(|S|\cdot nh)$ total space. 

\begin{algorithm}[h]
\caption{Restricted Bellman-Ford in the Streaming Model}
\label{alg:bellman-ford-stream}
\SetKwInOut{Input}{Input}
\SetKwInOut{Output}{Output}
\Input{Undirected weighted graph $G=(V, E, w)$, and source node $s \in V$.}
\Output{$h$-hop restricted distances from the source $s$ to all nodes $u \in V$, $d^h(s,v)$.}
\For{$O(h)$ passes}{
	\For{$(u,v) \in E$}{
					\If{$\hat{d}(s,v) = \emptyset$ or $\hat{d}(s,u) + w(v,u)	< \hat{d}(s,v)$}{		
						$\hat{d}(s,v)= \hat{d}(s,u) + w(v,u)$}
					\If{$\hat{d}(s,v) = \emptyset$ or $\hat{d}(s,v) + w(v,u)	< \hat{d}(s,u)$}{		
						$\hat{d}(s,u)= \hat{d}(s,v) + w(v,u)$}					
	}	
	}
\end{algorithm}

Using the restricted Bellman-Ford algorithm, we can construct a Thorup-Zwick distance oracle of stretch $(2k-1)(1+\epsilon)$ in $O(\beta)$ passes. The details of this algorithm is presented in Algorithm \ref{alg:distanceoracle_stream}.

\begin{algorithm}[h]
\caption{Preprocessing distance oracle of stretch $2k-1$ in the streaming model.}
\label{alg:distanceoracle_stream}
\SetKwInOut{Input}{Input}
\SetKwInOut{Output}{Output}
\Input{ Undirected graph $G=(V, E, w)$ of shortest path diameter $\Lambda$.}
\Output{ Approximate distance oracle.}
Set $A_0=V, A_k=\emptyset$.\\

\For{$i=1$ to $k-1$}
	{If $v \in A_{i-1}$ with probability $n^{-1/k}$ add $v$ to $A_i$.}
	
		Run Algorithm \ref{alg:bellman-ford-stream} \textit{in parallel} out of each set $A_i, 1 \leq i \leq k,$ to find $p_i(v)= \text{argmin}_{u \in A_i} d(u,v)$, and set $d(v,A_i):= d(p_i(v),v)$.%is this \beta passes or k\beta?

\For{$O(\Lambda)$ passes}{
	\For{$(u,v) \in E$}{
		\For{$i=k-1$ down to $1$}{	
			\For{$s \in A_{i}\setminus A_{i+1}$}{
				\If{ $\hat{d}(s,v) < d(v,A_{i+1})$ or $\hat{d}(s,u) +w(u,v) < d(v,A_{i+1})$ }{
					\If{$\hat{d}(s,u) + w(v,u)	< \hat{d}(s,v)$}{		
						$\hat{d}(v,s)= \hat{d}(s,u) + w(v,u)$}
						}
				\If{ $\hat{d}(s,u) < d(u,A_{i+1})$ or $\hat{d}(s,v) +w(u,v) < d(u,A_{i+1})$ }{
					\If{$\hat{d}(s,u) + w(v,u)	< \hat{d}(s,u)$}{		
						$\hat{d}(s,u)= \hat{d}(s,v) + w(v,u)$}					
					}
			}
		}
	}
	}
\end{algorithm}

%\paragraph*{Proof of Lemma \ref{lem:streaming_space}:}

Next, we will explain how distance oracles can be constructed in $O(\Lambda)$ passes, where $\Lambda$ is the shortest path diameter. We will then use a hopset of hopbound $\beta$ to reduce the number of passes to $O(\beta)$. This algorithm is again similar to the distributed algorithm. Sets $A_1,..,A_{k-1}$ can easily be sampled in sublinear space. Here again for finding the distances from each set $A_i$ to all nodes, we will add a virtual node $a_i$ and add an edge of weight $0$ between $a_i$ and all the nodes in $A_i$. We then run the restricted Bellman-Ford algorithm from each of these sources $a_i$ separately. This phase requires $O(kn \Lambda)$ space and $O(\Lambda)$ passes. In the final phase, we need to find the distance from each node in $s \in A_i \setminus A_{i+1}$ to all nodes in $C(s)= \{ v \mid v \in B(s)\}$. We will run a variant of the Bellman-Ford algorithm in which each node $v$ only stores a distance only if $\hat{d}(s,v) < d(v,A_{i+1})$ or if this condition holds after receiving an update from a neighbor. We will get the following lemma. %For completeness the algorithm details and proof of this lemma can be found in Appendix \ref{app:Streaming}.

\begin{lemma} \label{lem:streaming_space}
There is an algorithm that runs in $O(\Lambda)$ passes, and w.h.p.~constructs a $2k-1$ stretch Thorup-Zwick distance oracle of size $O(kn^{1+1/k} \log n)$ using $O(kn^{1+1/k} \log n)$ total space.
\end{lemma}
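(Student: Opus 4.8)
The plan is to implement the streaming version of Algorithm \ref{alg:distanceoracle_stream} and track both pass count and total space. First I would sample the hierarchy $A_0 = V \supseteq A_1 \supseteq \cdots \supseteq A_{k-1}$, each level obtained from the previous with retention probability $n^{-1/k}$; this needs no passes (just local randomness) and by standard Thorup-Zwick analysis $|A_i| = \tilde O(n^{1-i/k})$ and $\sum_u |B(u)| = O(kn^{1+1/k}\log n)$ w.h.p. Next, for each $i$ I introduce a virtual source $a_i$ joined by weight-$0$ edges to all of $A_i$ and run the streaming restricted Bellman-Ford of Algorithm \ref{alg:bellman-ford-stream} from all $a_i$ in parallel; after $O(\Lambda)$ passes every node $v$ knows $d(v,A_i) = d(v,p_i(v))$ and the identity of $p_i(v)$ for all $i$. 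The space here is $O(n)$ per source and there are $k$ virtual sources, so $O(kn)$ total, which is dominated by the final term.

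The main phase is computing, for each $w \in A_i \setminus A_{i+1}$, the distances to exactly those $v$ with $w \in B_i(v)$, i.e.\ $v$ with $d(v,w) < d(v,A_{i+1})$. I would run a Bellman-Ford-style sweep from all such sources $w$ simultaneously, but with the crucial pruning rule from the algorithm: a node $v$ stores (and propagates) a tentative distance $\hat d(w,v)$ only while $\hat d(w,v) < d(v,A_{i+1})$, and otherwise discards it. The correctness of the pruning — that a node never needs a distance estimate it has thrown away — follows exactly as in Lemma 5 of \cite{sarma2015}: if the true shortest path from $w$ to $v$ has every prefix-endpoint $v'$ satisfying $d(v',w) < d(v',A_{i+1})$ then the estimate survives at every intermediate node, and if some prefix fails that test then $v \notin B_i(w)$ anyway so we don't care. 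After $O(\Lambda)$ passes the sweep has propagated along every relevant shortest path (each has at most $\Lambda$ hops), so every $v$ learns $d(v,w)$ for all $w \in B_i(v)$.

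For the space bound, the point is that at any moment node $v$ is storing at most one tentative distance per source $w$ with $v \in B_i(w)$; summing over $v$ and over all levels $i$, the total number of (node, source) pairs ever stored is $O(\sum_v |B(v)|) = O(kn^{1+1/k}\log n)$, which is also the size of the resulting oracle. I would also note that running the $k$ levels' sweeps one after another (rather than in parallel) keeps the $O(\Lambda)$ pass count and avoids any extra factor of $k$ in space beyond what $\sum_v |B(v)|$ already contributes. The main obstacle I anticipate is making the pruning-correctness argument fully rigorous in the \emph{parallel-sources} streaming setting — in particular verifying that the early-termination condition, which references $d(v,A_{i+1})$ computed in the previous phase, is consistent across passes and that discarding an estimate at one node never causes a genuinely needed estimate to be lost at a downstream node; this is where I would lean hardest on adapting Lemmas 5 and 6 of \cite{sarma2015} (noting that the only change is that their per-source FIFO queue is replaced by a single stored value per surviving source, which suffices because in a pass over all edges every surviving estimate gets a chance to relax).
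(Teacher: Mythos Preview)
Your approach matches the paper's (whose proof is a terse paragraph relying on Algorithm~\ref{alg:distanceoracle_stream} and the Thorup--Zwick bound $|B(v)| = O(kn^{1/k}\log n)$). One correction: your remark that running the $k$ level-sweeps sequentially ``keeps the $O(\Lambda)$ pass count'' is wrong --- that would give $O(k\Lambda)$ passes. The paper instead handles all levels inside a single $O(\Lambda)$-pass loop (legitimate because all values $d(v,A_{i+1})$ are computed in the first phase), and this introduces no extra space factor since $\sum_v |B(v)|$ already sums $|B_i(v)|$ over all $i$, so your space concern about the parallel version is unfounded. Also, where you write ``$v \in B_i(w)$'' you mean ``$w \in B_i(v)$'': the pruning condition $\hat d(w,v) < d(v,A_{i+1})$ says $w$ is in $v$'s bunch, not the other way around.
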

\begin{proof}
It is clear that described algorithm takes $O(\Lambda)$ passes, and correctly updates all the distances required for building a Thorup-Zwick distance oracle. We also show that the space required is the same as the distance oracle size. This follows from the fact that for each node $v$, we are only storing distances to the nodes that are in $v$'s bunch $B(v)$, and we know from \cite{thorup2005} that w.h.p.~$|B(v)|= O(kn^{1/k} \log n)$. Thus the total space is w.h.p.~$O(kn^{1+1/k} \log n)$. 
Similar to the distributed case, given a graph $G=(V,E,w)$, we can use the $(\beta, \epsilon)$-hopset construction of Theorem \ref{thm:hopsets_streaming} to obtain a graph $G'=(V, E \cup H, w')$ which has shortest path diameter $O(\beta)$, and the distances in $G$ are preserved up to a factor of $(1+\epsilon)$. Then by running Algorithm \ref{alg:distanceoracle_stream} on $G'$, we would require $O(\beta)$ passes to build a distance oracle with stretch $(2k-1)(1+\epsilon)$.
\end{proof}

 We set the parameters in such a way that we have space to store all the hopset edges locally. Thus while running the algorithm of Theorem \ref{thm:distance_oracle_streaming} we also consider the hopset edges to decide when to update the distance. However, for readability of our algorithm, here we assume that the hopset edges are also appearing in the stream. Hence, by first running the hopset construction algorithm of Theorem \ref{thm:hopsets_streaming}, and then running the algorithm of Theorem \ref{thm:distance_oracle_streaming}, we will get the following result:
\begin{theorem} \label{thm:distance_oracle_streaming}
Given a graph $G=(V, E,w)$, there exists a streaming algorithm that constructs a Thorup-Zwick distance oracle of stretch $(2k-1)(1+ \epsilon)$ of size $O(kn^{1+1/k} \log n)$ w.h.p.~using either of the following resources\footnote{All the bounds expressed in expectation can be turned into high probability bound with an additional factor of $\log n$ in the number of passes.}:
\begin{itemize}
\item $O(\beta \log n)$ passes w.h.p.~and expected space $O(\frac{n^{1+\rho}}{\rho} + n^{1+\frac{1}{k}}\log^2 n)$,
\item $O(n^\rho \cdot \beta \cdot \log^2 n)$ passes w.h.p.~and expected space $O(n^{1+\frac{1}{k}}\log^2 n)$, 
\end{itemize}
 where $\beta= O(\frac{1}{\epsilon}  \cdot (\log(k) + 1/\rho) \log n))^{\log(k)+ \frac{1}{\rho}}$ and ${\frac{1}{k} \leq \rho \leq \frac{1}{2}}$.
\end{theorem}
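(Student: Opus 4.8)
The plan is to obtain this theorem as a direct composition of the streaming hopset construction of Theorem~\ref{thm:hopsets_streaming} with the oracle construction underlying Lemma~\ref{lem:streaming_space}, mirroring the hopset-then-Thorup-Zwick reduction already used in the Congested Clique and MPC settings.

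\textbf{Step 1 (build a hopset).} Invoke Theorem~\ref{thm:hopsets_streaming} with $\kappa = k$ to compute, over the stream, a $(\beta,\epsilon)$-hopset $H$ of expected size $O(n^{1+1/k}\log^2 n)$, where $\beta = O\bigl(\tfrac{1}{\epsilon}\cdot(\log k + 1/\rho)\cdot\log n\bigr)^{\log k + 1/\rho}$ and $\tfrac{1}{k} \le \rho \le \tfrac{1}{2}$. This costs either $O(\beta\log n)$ passes and expected space $O(\tfrac{n^{1+\rho}}{\rho} + n^{1+1/k}\log^2 n)$, or $O(n^\rho\cdot\beta\cdot\log^2 n)$ passes and expected space $O(n^{1+1/k}\log^2 n)$, according to which of the two tradeoffs in Theorem~\ref{thm:hopsets_streaming} we select. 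In either case $|H|$ fits inside the claimed space budget, so we keep all hopset edges in memory and feed them, together with the input edges, into every pass of the next phase; as the text already notes, for readability one may equivalently pretend $H$ is appended to the stream.

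\textbf{Step 2 (build the oracle on $G'$).} Run the oracle-construction algorithm of Lemma~\ref{lem:streaming_space} (Algorithm~\ref{alg:distanceoracle_stream}, which itself uses the streaming restricted Bellman-Ford of Algorithm~\ref{alg:bellman-ford-stream}) on the augmented graph $G' = (V, E\cup H, w')$, but with each restricted Bellman-Ford run for $O(\beta)$ passes instead of $O(\Lambda)$. Correctness is unchanged: since $H$ is a $(\beta,\epsilon)$-hopset, every pair $u,v$ has a $\beta$-hop path in $G'$ of weight at most $(1+\epsilon)d_G(u,v)$, so $\beta$-hop-restricted distances in $G'$ already $(1+\epsilon)$-approximate $d_G$, and hence the pivots $p_i(v)$, the bunches $B(v)$, and all stored bunch distances are computed correctly up to this $(1+\epsilon)$ factor. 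The space bound of Lemma~\ref{lem:streaming_space} transfers verbatim: each node $v$ only ever stores estimates to sources in its own bunch, and $|B(v)| = O(kn^{1/k}\log n)$ w.h.p.\ by~\cite{thorup2005}, so both the oracle and the working space of this phase are $O(kn^{1+1/k}\log n)$. The resulting stretch is $(2k-1)$ against $d_{G'}$, hence $(2k-1)(1+\epsilon)$ against $d_G$.

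\textbf{Step 3 (combine the budgets).} Adding the $O(\beta)$ passes of Step~2 to the passes of Step~1 is absorbed into $O(\beta\log n)$ (resp.\ $O(n^\rho\cdot\beta\cdot\log^2 n)$), and the total space is the maximum over the two phases; since $k \le (\log n)/4$ we have $kn^{1+1/k}\log n \le n^{1+1/k}\log^2 n$, so the space collapses to exactly $O(\tfrac{n^{1+\rho}}{\rho} + n^{1+1/k}\log^2 n)$ and $O(n^{1+1/k}\log^2 n)$ as stated. The one place that needs a little care --- and the step I expect to be the main (though mild) obstacle --- is reconciling expectation with high probability: both ingredients state their space as an expectation and their pass count w.h.p., so to make the stretch/size guarantee hold w.h.p.\ I would apply the standard amplification noted in the theorem's footnote (abort and restart on a space overflow, or run $O(\log n)$ independent copies), absorbing the extra $O(\log n)$ factor into the stated pass counts. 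Beyond that the theorem is a straightforward corollary of Theorem~\ref{thm:hopsets_streaming} and Lemma~\ref{lem:streaming_space}.
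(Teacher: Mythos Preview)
Your proposal is correct and follows essentially the same approach as the paper: first invoke the streaming hopset construction of Theorem~\ref{thm:hopsets_streaming} with $\kappa=k$, store the hopset edges in memory (pretending they are appended to the stream), and then run the streaming Thorup--Zwick construction of Lemma~\ref{lem:streaming_space}/Algorithm~\ref{alg:distanceoracle_stream} on $G'=(V,E\cup H,w')$ for $O(\beta)$ passes instead of $O(\Lambda)$. Your Step~3 accounting (including the remark about the expectation-to-w.h.p.\ conversion matching the footnote) is slightly more explicit than the paper's, but the argument is the same.
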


In particular, when $k=O(1)$ we will use the first case of Theorem \ref{thm:distance_oracle_streaming} and set $\rho=1/k$, and when $k=\Omega(\log n)$ we will use the second case and set $\rho= \sqrt{\frac{\log \log n}{\log n}}$.  We have,
\begin{corollary}
Given a graph $G=(V, E,w)$, there exists a streaming algorithm that constructs a Thorup-Zwick distance oracle of stretch $(2k-1)(1+ \epsilon)$ of size $O(kn^{1+1/k} \log n)$  w.h.p. such that:
\begin{itemize}
\item If $k=O(1)$, we require $O(\log^k n)$ passes and expected space $O(n^{1+1/k} \cdot \log^2 n)$ with high probability.
\item If $k=\Omega(\log n)$, we require $2^{\tilde{O}(\sqrt{\log n})}=n^{o(1)}$ passes,  with high probability.
\end{itemize}
\end{corollary}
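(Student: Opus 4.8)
The plan is to obtain this corollary as an immediate consequence of Theorem~\ref{thm:distance_oracle_streaming} by instantiating the free parameter $\rho$ in each of the two stretch regimes; there is no new algorithmic content, only a parameter optimization. Recall that Theorem~\ref{thm:distance_oracle_streaming} says that for every $\tfrac1k \le \rho \le \tfrac12$ one can stream a Thorup--Zwick oracle of stretch $(2k-1)(1+\epsilon)$ and size $O(kn^{1+1/k}\log n)$ with hopbound $\beta = O\!\left(\tfrac1\epsilon(\log k + 1/\rho)\log n\right)^{\log k + 1/\rho}$, using either (i) $O(\beta\log n)$ passes and expected space $O(n^{1+\rho}/\rho + n^{1+1/k}\log^2 n)$, or (ii) $O(n^\rho\cdot\beta\cdot\log^2 n)$ passes and expected space $O(n^{1+1/k}\log^2 n)$. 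So I just need to pick $\rho$ well in each case and simplify, inheriting the stretch, size, and correctness guarantees verbatim.

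For $k=O(1)$ (and $\epsilon$ constant) I would take $\rho = 1/k = \Theta(1)$ and use branch~(i). Then $\log k + 1/\rho = \log k + k = O(1)$, so $\beta = O(\log n)^{\log k + k} = \text{polylog}(n)$, the number of passes $O(\beta\log n)$ is polylogarithmic with exponent $O(k)$ (which is what ``$O(\log^k n)$'' in the statement abbreviates, up to the additive constant in the exponent and $\log\log$ factors), and the space $O(n^{1+\rho}/\rho + n^{1+1/k}\log^2 n) = O(kn^{1+1/k} + n^{1+1/k}\log^2 n) = O(n^{1+1/k}\log^2 n)$ since $k=O(1)$. That yields the first bullet.

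For $k=\Theta(\log n)$ I would use branch~(ii) (taking $\kappa = \min(k,\log n/4) = \Theta(\log n)$ to respect the constraint of Theorem~\ref{thm:hopsets_streaming}) and balance the two competing costs: the prefactor $n^\rho = 2^{\rho\log n}$ in the pass count grows with $\rho$, while the factor $1/\rho$ sitting in the \emph{exponent} of $\beta$ shrinks with $\rho$. Setting $\rho = \sqrt{\log\log n/\log n}$ equalizes them: then $\log k + 1/\rho = O(\log\log n) + \sqrt{\log n/\log\log n} = O\!\big(\sqrt{\log n/\log\log n}\big)$, hence $\log\beta = (\log k + 1/\rho)\cdot\log\!\big(\tfrac1\epsilon(\log k+1/\rho)\log n\big) = O\!\big(\sqrt{\log n/\log\log n}\big)\cdot O(\log\log n) = \tilde{O}(\sqrt{\log n})$, so $\beta = 2^{\tilde{O}(\sqrt{\log n})}$; and $n^\rho = 2^{\sqrt{\log n\log\log n}} = 2^{\tilde{O}(\sqrt{\log n})}$ as well. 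Therefore the number of passes $O(n^\rho\cdot\beta\cdot\log^2 n)$ is $2^{\tilde{O}(\sqrt{\log n})} = n^{o(1)}$, and since $k = \Theta(\log n)$ makes $n^{1/k} = O(1)$, the space $O(n^{1+1/k}\log^2 n) = O(n\log^2 n)$ is semi-streaming (the remark following the corollary), which comes for free. That gives the second bullet.

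There is no deep obstacle here; the one thing to check carefully is that $\rho = \sqrt{\log\log n/\log n}$ is genuinely the right trade-off point, i.e.\ that at this $\rho$ both $\log(n^\rho) = \Theta(\sqrt{\log n\log\log n})$ and $\log\beta = \Theta(\sqrt{\log n\log\log n})$, so neither term dominates and no other choice of $\rho$ does better. Everything else is substitution into Theorem~\ref{thm:distance_oracle_streaming}, together with the standard w.h.p.\ upgrade (an extra $\log n$ factor in the number of passes, per that theorem's footnote).
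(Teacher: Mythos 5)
Your proposal is correct and matches the paper's own derivation exactly: the paper likewise obtains the corollary by instantiating Theorem~\ref{thm:distance_oracle_streaming} with $\rho=1/k$ and the first resource bound when $k=O(1)$, and with $\rho=\sqrt{\log\log n/\log n}$ and the second resource bound when $k=\Omega(\log n)$. Your verification of the resulting pass and space bounds is the same routine substitution the paper leaves implicit.
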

%\bibliography{sample-bibliography}

\section{Alternative methods} \label{app:alternatives}
In the Congested Clique, rather than computing a Thorup-Zwick distance oracle we could instead compute a graph spanner and store this at the coordinator node. A $(2k-1)$-spanner of $G$ is simply a subgraph which preserves distances up to a $(2k-1)$ factor, so once such spanner is at the coordinator, a classical centralized shortest-path algorithm would yield a distance estimate that is accurate up to $(2k-1)$ (as with our distance oracle). A similar approach can be used in the streaming model. While a reasonable approach, there are a few drawbacks.  

First, the local computation time becomes superlinear, rather than $O(k)$ as in our oracle.  While computation is generally extremely cheap compared to network communication, there is still an enormous gap between superlinear and $O(k)$ (since $k$ is at most logarithmic).  And for large graphs, this may indeed rise to the level of network delay time scales.

The more important drawback, though, is that spanners cannot be used in the MPC model.  Even an extraordinarily sparse spanner would not fit into the memory of a single server in low-memory MPC, so the spanner would (just like the original graph) have to be stored in a distributed fashion.  So we would still have the same problem that we started with: how to compute distance estimates in a distributed graph.  Only distance sketches allow us to answer such queries in such a small number of rounds (in particular, two rounds after the sketches have been computed).  

Another direction that one could take is running an all-pairs shortest path algorithm (APSP). This approach has multiple drawbacks: First, for fast queries we will need to store the whole adjacency matrix, which clearly uses much more space. Secondly, such algorithms are slower and use more resources. 

\end{document}